\def\pref#1{(\ref{#1})}
\def\p@enumii{}
\newtheorem{thm}{Theorem}[section]
\newtheorem{lem}[thm]{Lemma}
\newtheorem{Wrong}[thm]{Incorrect Theorem}
\newtheorem{prop}[thm]{Proposition}
\newtheorem{cor}[thm]{Corollary}
\theoremstyle{definition}
\newtheorem{defn}[thm]{Definition}
\newtheorem{ex}[thm]{Example}
\newtheorem{nt}[thm]{Notation}
\theoremstyle{remark}
\newtheorem{rem}[thm]{Remark}
\numberwithin{equation}{section}
\newcommand{\z}{\mathbb{Z}}
\newcommand{\F}{\mathbb{F}}
\newcommand{\chara}{\mathrm{char}\,}
\newcommand{\rg}{\rangle}
\renewcommand{\lg}{\langle}
\newcommand{\se}{\subseteq}
\newcommand{\ann}{\mathrm{ann}}
\renewcommand{\iff}{\Leftrightarrow}
\newcommand{\give}{$\Rightarrow$}
\newcommand{\rgive}{$\Leftarrow$}
\newcommand{\ifof}{if and only if }
\renewcommand{\l}{\left}
\renewcommand{\r}{\right}
\def\f{\frac}
\newcommand{\bc}{\mathbf{c}}
\begin{document}
\title{One-Sided Repeated-Root Two-Dimensional Cyclic and Constacyclic Codes}
\author{Marziyeh Beygi Khormaei$^1$,  Ashkan Nikseresht$^{2,}$\footnote{corresponding author} \  and Shohreh Namazi$^3$\\
\it\small Department of Mathematics, College of Sciences, Shiraz University, \\
\it\small 71457-13565, Shiraz, Iran\\
\it\small $^1$ E-mail: m64.beygi@gmail.com \\
\it\small $^2$ E-mail: ashkan\_nikseresht@yahoo.com\\
\it\small $^3$ E-mail: namazi@shirazu.ac.ir }
\date{}
\maketitle
\begin{abstract}
In this paper, we study some repeated-root two-dimensional cyclic and consta\-cyc\-lic codes over a finite field
$\F=\F_q$. We obtain the generator matrices and generator polynomials of these codes and their duals. We also
investigate when such codes are self-dual. Moreover, we prove that if there exists an asymptotically good family of
one-sided repeated-root two-dimensional cyclic or constacyclic codes, then there exists an asymptotically good family
of simple root two-dimensional cyclic or constacyclic codes with parameters at least as good as the first family.
Furthermore, we show that several of the main results of the papers Rajabi and Khashyarmanesh (2018) \cite{rajabi}
and Sepasdar and Khashyarmanesh (2016) \cite{sepasdar} are not accurate and find other conditions needed for them to
hold.
\end{abstract}
Keywords: Two-dimensional cyclic codes, Two-dimensional constacyclic codes, Self-dual codes, Asymptotically good family of codes.\\
\indent 2020 Mathematical Subject Classification: 94B05, 11T71, 94B15.
\section{Introduction}
Two-dimensional (2D, for short) cyclic codes which have a long history, see for example \cite{ik, im}, still gain
attention, see \cite{Gun,sepasdar,roy, Gun3, Gun quasi} and the references therein. As mentioned in \cite{Gun quasi},
these codes are special cases of quasi-cyclic codes which form an important and well-studied class of codes (see, for
example, \cite{IEEE,Aydin,chen, Sole, nature, Rajput} and their references). Also constacyclic codes which are a
generalization of cyclic codes are investigated over finite fields and some other types of rings, see \cite{beygi,
consta} and their references. In \cite{rajabi}, 2D constacyclic codes were introduced and studied as a generalization
of 2D cyclic codes.

As in one-dimensional cyclic codes, it is more common to consider the simple root case, that is, when  both dimensions
of the code is coprime to the size of the base field, mainly because in this case the structure of the code can be
characterized by the set of its zeros. In the one-dimensional case, it is known that some optimal repeated-root cyclic
codes exist (see \cite{van lint}). Such results caused several authors to study one-dimensional repeated-root cyclic
codes, see \cite{Kumar,Dinh, van lint,ma, casta} and the references therein. In this paper, we consider two-dimensional
cyclic codes in which at least one dimension is coprime to the size of the base field but the other dimension is
arbitrary. We indeed state our results for two-dimensional constacyclic codes which are generalizations of
two-dimensional cyclic codes.

We recall the definition of 2D constacyclic codes (\cite{rajabi}). We always assume that $p$ is a prime number,
$\F=\F_q$ is a finite field with $q=p^r$ elements and $\lambda$ and $\delta$ are units in $\F$. Consider
\begin{eqnarray*}
\tau _{\lambda }:\F^n & \longrightarrow  & \F^n\\
(d_{0},d_{1},\ldots ,d_{n-1}) &\longmapsto & (\lambda
d_{n-1},d_{0},\ldots ,d_{n-2}), \ \ \ \mathrm{where}\ d_j \in \F
\end{eqnarray*}
and
\begin{eqnarray*}
\Upsilon _{\delta }:(\F^n)^m & \longrightarrow  & (\F^n)^m\\
(\mathbf{a }_{0},\mathbf{a }_{1},\ldots ,\mathbf{a }_{m-1})
&\longmapsto & (\delta \mathbf{a }_{m-1},\mathbf{a }_{0},\ldots
,\mathbf{a }_{m-2}), \ \ \ \mathrm{where}\ \mathbf{a }_j \in \F^n.
\end{eqnarray*}
Assume that $\mathbf{a}=(\mathbf{a }_{0},\mathbf{a }_{1},\ldots
,\mathbf{a }_{m-1})$ is an element of $\F^{nm}$, where $\mathbf{a
}_j =(a_{j0},a_{j1},\ldots ,a_{j\ n-1}) \in \F^n$. For any $i,j$,
$0\leq j \leq m-1$ and $0\leq i \leq n-1$, define
 \begin{center}
$\Theta ^{j,i}_{\delta , \lambda}(\mathbf{a})=\Upsilon _{\delta
}^j(\tau _{\lambda }^i(\mathbf{a }_{0}), \tau _{\lambda
}^i(\mathbf{a }_{1}), \ldots, \tau _{\lambda }^i(\mathbf{a
}_{m-1}))$.
\end{center}
A 2D linear code $D$ of length $nm$ is called {\it $( \lambda, \delta)$-consta\-cyc\-lic code} over $\F$, if $\Theta
^{j,i}_{\delta , \lambda}(D)=D$ for any $0\leq j \leq m-1$ and $0\leq i \leq n-1$. In $\F^{nm}\simeq M_{m\times
n}(\F)$, any $nm$-array $(\mathbf{a }_{0},\mathbf{a }_{1},\ldots ,\mathbf{a }_{m-1})$ corresponds to a polynomial in
$\F[x,y]$ with $x$-degree less than $n$ and $y$-degree less than $m$, say $a(x,y)=\sum _{j=0}^{m-1} \sum _{i=0}^{n-1}a
_{ji}x^{i}y^{j}$. With this correspondence, any $( \lambda, \delta)$-consta\-cyc\-lic code of length $nm$ over $\F$ is
identified with an ideal of the quotient ring $\mathcal{S}=\frac{\F[x,y]}{\langle x^{n}-\lambda , y^{m}-\delta \rangle
}$.

In Theorem 4.1 of \cite{sepasdar} which is its main theorem, a set of generators for 2D cyclic codes of length $s2^k$
over fields of odd characteristic is presented. In \cite{rajabi}, this theorem is used to find a generating set for
certain 2D constacyclic codes of length $(2p^s)2^k$ and their duals over fields of odd characteristic. Here in Section
2, we present a counterexample to \cite[Theorem 4.1]{sepasdar} and find the other conditions needed for this theorem to
hold. We show that because of the inaccuracy in this theorem, several results of \cite{rajabi} are also erroneous. Then
in Section 3, we study the algebraic structure of constacyclic codes of length $nm$ where at least one of $n$ or $m$ is
relatively prime to $p$. Moreover, we show how we can find the dimension and a generator matrix of such codes. In
Section 4, we study the dual of such constacyclic codes and present a method to find the parity check matrix of these
codes. We also investigate when such a code is self-dual. Finally in Section 5, we prove that if there exists an
asymptotically good family of such (repeated-root) 2D constacyclic codes, then there exists an asymptotically good
family of simple root 2D constacyclic codes with parameters at least as good as the first family.

                                 \section{A counterexample and a correction to some previous results}
First, by presenting a counterexample to the main theorem (Theorem 4.1) of \cite{sepasdar}  we show that this theorem
is not accurate. Consequently, as all of the main theorems of \cite{rajabi} use this incorrect result, they are not
correct either. First we recall the statement of \cite[Theorem 4.1]{sepasdar}. We note that in \cite{sepasdar} the
authors work over a finite field $\F$ with $\chara \F\neq 2$ and consider 2D cyclic codes of length $n=2^ks$ which
correspond to the ideals of the ring $\F[x,y]/\lg x^s-1, y^{2^k} -1 \rg$.

\begin{Wrong}[{\cite[Theorem 4.1]{sepasdar}}]\label{WrongThm}
Suppose that $C$ is a 2D cyclic code of length $n=2^ks$. Then the corresponding ideal $I$ has the
generating set of polynomials as follows:
\begin{align}
    I= \Bigg\lg &  p_1(x)\l(\sum_{i=0}^{2^k-1}y^i\r), p_2(x)\l(\sum_{i=0}^{2^k-1}(-1)^iy^i\r),
     p_3(x)\l(\sum_{i=0}^{2^{k-1}-1}(-1)^iy^{2i}\r) \notag \\
     & p_4(x)\l(\sum_{i=0}^{2^{k-2}-1} (-1)^iy^{4i}\r), \ldots,p_{k+1}(x)\l(\sum_{i=0}^{2^1-1} (-1)^iy^{2^{k-1}i}\r)
     \Bigg\rg. \label{eq1}
\end{align}

\end{Wrong}

The counterexample presented here is the case that $k=2$. Note that in the case $k=2$, this theorem indeed says that
every ideal $I$ of the ring $\F[x,y]$ which contains $x^s-1$ and $y^4-1$ can be generated using $x^s-1, y^4-1$ and
three polynomials of the form $p_1(x)(1+y+y^2+y^3), p_2(x)(1-y+y^2-y^3), p_3(x)(1-y^2)$.

\begin{ex}\label{counterex}
Suppose that $\F=\F_5$ and $I=\lg x^s-1, y+2 \rg$ for an integer $s\geq 1$. Note that $y^4-1=(y+1)(y-1)(y+2)(y-2)$, so
that $y^4-1\in I$. If
 $$I=\lg x^s-1, y^4-1,p_1(x)(1+y+y^2+y^3), p_2(x)(1-y+y^2-y^3), p_3(x)(1-y^2)\rg,$$  then as
$p_3(x)(1-y^2)\in \lg x^s-1, y+2\rg$ and by evaluating at $y=-2$ we have $2p_3(x)\in \lg x^s-1\rg$. Hence
$p_3(x)(1-y^2)$ can be dropped from the aforementioned generating set of $I$. Thus it follows that
 $$y+2\in \lg x^s-1, y^4-1,p_1(x)(1+y+y^2+y^3), p_2(x)(1-y+y^2-y^3)\rg.$$
Now by evaluating the above relation  at $x=1$, we get that $y+2\in \lg y^4-1, 1+y+y^2+y^3, 1-y+y^2-y^3\rg=\lg 1+y^2
\rg$ in $\F[y]$, which is a contradiction. Therefore,  $I$ has not a generating set as in Theorem \ref{WrongThm} and
that theorem is not correct. \qed
\end{ex}

In the proof stated for Theorem  \ref{WrongThm} in \cite{sepasdar}, the problem arises from the fact that $I_3, I_4,
\ldots, I_{k+1}$ (see pages 104, 109 or 111 of \cite{sepasdar}) are not always ideals as assumed in \cite{sepasdar}.
For instance, in the Example \ref{counterex} we have
$$y+2 \in I_3=\{g(x,y)\in \F[x,y]/\lg x^s-1, y^2-1 \rg |g(x,y)(1-y^2)\in I\}$$
but $(1+2y)(1-y^2)\notin  I$ (that is, $1+2y \notin I_3$), although in $\F[x,y]/\lg x^s-1, y^2-1 \rg$, it holds that
$(y+2)y=1+2y$. Next we investigate for which $\F, k, s$ Theorem \ref{WrongThm} does hold.

\begin{prop}\label{correction}
Fix a field $\F$ with an odd characteristic and consider integers $s>1$ and $k>0$. Then every 2D cyclic code of length
$2^{k}s$ over $\F$ as an ideal of $R=\F[x,y]/\lg x^s-1,y^{2^k}-1\rg$  has a generating set of the form \eqref{eq1}, if
and only if both of the following conditions hold:
\begin{enumerate}
\item \label{correction 1} $x^{2^{k-1}}+1$ is irreducible in $\F[x]$;
\item \label{correction 2} $\gcd (x^s-1, x^{2^k}-1) | x^2-1$ in $\F[x]$.
\end{enumerate}
\end{prop}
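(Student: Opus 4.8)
The plan is to diagonalize the ring $R$ in the $y$-variable by the Chinese Remainder Theorem and thereby reduce the existence of a generating set of the form \eqref{eq1} to a question about principal ideals over extension fields. Over a field of odd characteristic the polynomial $y^{2^k}-1$ is separable and factors as $(y-1)(y+1)\prod_{j=1}^{k-1}(y^{2^j}+1)$; write $f_0=y-1$, $f_1=y+1$ and $f_j=y^{2^{j-1}}+1$ for $2\le j\le k$. First I would show that condition \pref{correction 1} is equivalent to all of the $f_j$ being irreducible over $\F$: one direction is immediate, and for the converse I would use that the irreducibility of $y^{2^{k-1}}+1$ forces the order of $q$ modulo $2^k$ to equal $\phi(2^k)=2^{k-1}$, which, via the reduction maps $(\z/2^k\z)^\times\to(\z/2^j\z)^\times$, makes each $y^{2^{j-1}}+1$ irreducible as well (and incidentally bounds $k\le 2$, since $(\z/2^k\z)^\times$ has exponent $2^{k-2}$ for $k\ge 3$). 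Granting \pref{correction 1}, each $K_j:=\F[y]/\lg f_j\rg$ is a field and CRT gives $R\cong\prod_{j=0}^{k}K_j[x]/\lg x^s-1\rg$.

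Next I would identify the image of the generating set \eqref{eq1} under this isomorphism. The $j$-th ``$y$-pattern'' appearing in \eqref{eq1} is, up to a unit, the cofactor $\prod_{\ell\ne j}f_\ell$ of $f_j$ in $y^{2^k}-1$; hence it is supported on the single factor $K_j[x]/\lg x^s-1\rg$, where it acts as multiplication by a nonzero constant of the field $K_j$. Consequently the ideal generated by \eqref{eq1} corresponds under CRT to the tuple $(\lg p_1\rg,\ldots,\lg p_{k+1}\rg)$, where each $\lg p_{j+1}\rg$ is taken inside $K_j[x]/\lg x^s-1\rg$ with $p_{j+1}\in\F[x]$. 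Since every ideal of $R$ corresponds to an arbitrary tuple of ideals of the factors, the assertion ``every code has a generating set of the form \eqref{eq1}'' becomes equivalent to: for every $j$, every ideal of $K_j[x]/\lg x^s-1\rg$ admits a generator lying in $\F[x]$. As these are principal ideal rings whose ideals are the monic divisors of $x^s-1$, this says exactly that $x^s-1$ has the same irreducible factorization over $\F$ and over each $K_j$; since the $K_j$ form a tower with top field $K_k=\F_{q^{2^{k-1}}}$, it suffices to test this over $K_k$.

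The final and most delicate step is to match this factorization condition with condition \pref{correction 2}. Writing $s=p^e s_0$ with $\gcd(s_0,p)=1$, the factorization of $x^s-1=(x^{s_0}-1)^{p^e}$ is governed by that of $x^{s_0}-1$, whose irreducible factors over $\F$ have degrees equal to the sizes of the $q$-cyclotomic cosets modulo $s_0$. Using the standard fact that an $\F$-irreducible polynomial of degree $\ell$ stays irreducible over $\F_{q^{2^{k-1}}}$ precisely when $\gcd(\ell,2^{k-1})=1$, preservation of the factorization over $K_k$ becomes a statement about the $2$-adic valuations of these coset sizes, which I would try to relate to $\gcd(x^s-1,x^{2^k}-1)=x^{2^{\min(k,v_2(s))}}-1$ and hence to \pref{correction 2}. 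I expect this arithmetic comparison to be the main obstacle, since it requires controlling the parity of the cyclotomic-coset sizes by the even part of $s$, and this is the point where the precise form of the conditions must be justified with care. The two directions of the equivalence would then follow, with necessity handled by the contrapositive: if \pref{correction 1} fails then some $f_j$ splits and the single corresponding pattern can only produce a ``diagonal'' ideal of the refined factor, while if \pref{correction 2} fails then the factorization of $x^s-1$ genuinely refines over $K_k$, yielding an ideal with no $\F$-rational generator, exactly in the spirit of Example \ref{counterex}.
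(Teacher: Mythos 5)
Your CRT reduction is correct, and it is in substance the same as the paper's own first step: the paper also decomposes $R\cong\bigoplus_j K_j[x]/\lg x^s-1\rg$, observes that the $y$-patterns in \eqref{eq1} are (up to sign) the cofactors $(y^{2^k}-1)/f_j$ and hence become units in the single component where they survive, and concludes that the ideals of the form \eqref{eq1} are exactly those whose $j$-th CRT component is generated by a polynomial of $\F[x]$, equivalently by a monic divisor of $x^s-1$ lying in $\F[x]$ (the paper runs this as a count: there are at most $a^{k+1}$ such ideals, where $a$ is the number of monic divisors of $x^s-1$ in $\F[x]$). Your analysis of condition \pref{correction 1} and your contrapositive arguments for necessity are also sound and agree with the paper's.

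The step you defer --- matching ``$x^s-1$ factors the same way over $\F$ and over $\F_{q^{2^{k-1}}}$'' with condition \pref{correction 2} --- is a genuine gap, and it cannot be closed, because the two conditions are not equivalent: \pref{correction 2} constrains only $\gcd(s,2^k)$, i.e.\ the $2$-part of $s$, whereas your criterion demands that every irreducible factor of $x^s-1$ over $\F$ have degree coprime to $2^{k-1}$, which is a condition on the orders of $q$ modulo the divisors of $s$. Concretely, take $q=3$, $s=5$, $k=2$: then $x^2+1$ is irreducible over $\F_3$ and $\gcd(x^5-1,x^4-1)=x-1$ divides $x^2-1$, so \pref{correction 1} and \pref{correction 2} both hold; but $\mathrm{ord}_5(3)=4$, so $\Phi_5=x^4+x^3+x^2+x+1$ is irreducible over $\F_3$ yet splits as a product $g_1g_2$ of two irreducible quadratics over $\F_9=\F_3[y]/\lg y^2+1\rg$. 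The ideal of $\F_3[x,y]/\lg x^5-1,y^4-1\rg$ corresponding under CRT to $(0,0,\lg g_1\rg)$ then has no generating set of the form \eqref{eq1}, since any such set forces the third component to be generated by an element of $\F_3[x]$, hence by one of $1$, $x-1$, $\Phi_5$, $x^5-1$, and $\lg g_1\rg$ is none of these (by count: $R$ has $4\cdot4\cdot8=128$ ideals but only $4^3=64$ of the form \eqref{eq1}). So the ``if'' direction of Proposition \ref{correction} is false as stated, and Theorem \ref{correction2} fails with it; the paper's own proof breaks down at exactly the point you flagged, where it infers ``the factors of $x^s-1$ over $\F$ and over $K_i$ are the same'' merely from the fact that no $K_i$-factor of $x^{2^i}+1$ divides $x^s-1$ --- a non sequitur, as the refinement $\Phi_5=g_1g_2$ has nothing to do with $x^2+1$. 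Your outline does complete to a correct characterization if \pref{correction 2} is replaced by the condition your reduction actually produces: every irreducible factor of $x^s-1$ in $\F[x]$ has degree coprime to $2^{k-1}$ (for $k\geq 2$, odd degree), which is strictly stronger than \pref{correction 2} and, together with \pref{correction 1}, is both necessary and sufficient.
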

\begin{proof}
(\give): Note that in \eqref{eq1}, the polynomials $p_i(x)$ can be chosen to be monic factors of $x^s-1$, since the
ring $\F[x]/\lg x^s-1 \rg$ is a principal ideal ring. Assume that in $\F[x]$ the number of monic factors of $x^s-1$ is
$a$. Then the number of ideals of $R$ is at most $a^{k+1}$. On the other hand, if $y^{2^k}-1=f_1\cdots f_t$ is an
irreducible factorization of $y^{2^k}-1$, then by the Chinese Remainder Theorem (CRT)
\begin{equation}
 R\cong \f{ (\F[y]/\lg y^{2^k}-1 \rg)[x]}{\lg x^s-1\rg} \cong \f{\l( \bigoplus_{i=1}^t \F[y]/\lg f_i \rg \r)[x]}{\lg x^s-1
 \rg} \cong \bigoplus_{i=1}^t \f{K_i[x]}{\lg x^s -1 \rg}, \label{eq2}
\end{equation}
where $K_i=\F[y]/\lg f_i \rg$ is an extension field of $\F$. Because in every field we have
\begin{equation}
y^{2^k}-1=(y-1)(y+1)(y^2+1) \cdots (y^{2^{k-1}}+1), \label{eq3}
\end{equation}
it follows that $t\geq k+1$. Suppose $x^s-1$ has $b_i$ monic factors in the extension field $K_i$. Then we have
$b_i\geq a$ for all $i$. Thus the number of ideals of $R$ is $\prod_{i=1}^tb_i \geq a^t\geq a^{k+1}$. Consequently,
$a^{k+1}= \prod_{i=1}^tb_i$ and hence $t=k+1$ and $b_i=a$ for all $i$. From $t=k+1$ it follows that, for each $1\leq
i\leq k-1$ the polynomial $y^{2^i}+1$ is irreducible over $\F$, in particular, $x^{2^{k-1}}+1$ is irreducible in
$\F[x]$. Also we can assume that $f_1=y-1$ and $f_i=y^{2^{i-2}}+1$, for $i\geq 2$. From $b_i=a$ we deduce that the
number of irreducible factors of $x^s-1$ is the same in $\F[x]$ and $K_i[x]$. Now if condition \eqref{correction 2}
does not hold, then an irreducible factor of $x^{2^k}-1$, say $x^{2^i}+1$ for some $1\leq i\leq k-1$, divides $x^s-1$.
Since in $K_{i+2}$ we have $y^{2^i}+1=0$, in $K_{i+2}[x]$ we have $x-y|x^{2^i}+1$ and $x-y$ is a monic factor of
$x^s-1$ over $K_{i+2}[x]$ but not over $\F[x]$. Hence $b_{i+2}>a$, a contradiction from which \eqref{correction 2}
follows.

(\rgive): By \pref{correction 1} it follows that $x^{2^i}+1$ is irreducible over $\F$ for all $i\leq k-1$. So
\eqref{eq3} is an irreducible factorization and we deduce that $t=k+1$. If for some $i\geq 1$, a factor of $x^{2^i}+1$
in $K_i[x]$, divides $x^s-1$, then as $x^{2^i}+1$ is irreducible in $\F[x]$, we must have $x^{2^i}+1|x^s-1$, which
contradicts \pref{correction 2}. Thus the factors of $x^s-1$ over $\F$ and over $K_i$ are the same. So $b_i=a$ for all
$i$ and hence $R$ has $a^{k+1}$ ideals.

Therefore, to show that every ideal is as in \eqref{eq1}, we just need to show that the number of ideals of $R$ which
have the claimed form is $a^{k+1}$. As we can choose all $p_i(x)$ from the set of monic factors of $x^s-1$ in $a^{k+1}$
ways, it suffices to prove that for different choices of $p_i(x)$,  different ideals are generated by Equation
\eqref{eq1}.

To see this, let $f_1=y-1$, $f_i=y^{2^{i-2}}+1$ and $\hat{f}_i=(y^{2^k}-1)/f_i$ and note that
$\hat{f}_1=\sum_{i=0}^{2^k-1}y^i$ and $\hat{f}_{j+1}= -\sum_{i=0}^{2^{k-j}-1} (-1)^iy^{2^j i}$ for all $0\leq j\leq k$.
Thus under the isomorphism \eqref{eq2}, $\hat{f}_i$ is mapped to the vector with all entries zero except for the $i$-th
entry which is a unit in $K_i[x]/\lg x^s-1 \rg$. Thus an ideal $I$ satisfying \eqref{eq1} is mapped to the ideal
generated by the vector with the only nonzero entry at the $i$-th position and equal to $p_i(x)$. Hence if $I$ and $I'$
both satisfy \eqref{eq1} with $p_i(x)$ and $p'_i(x)$, respectively, where $p_i$ and $p'_i$ are monic factors of
$x^s-1$, then $I=I' \iff\ p_i(x)=p'_i(x)$ for each $1\leq i\leq k+1$.
\end{proof}

Using known results on irreducibility and factors of $x^t\pm 1$ over finite fields, we can restate Proposition
\ref{correction} as follows.
\begin{thm}\label{correction2}
Fix a field $\F$ with an odd characteristic and consider integers $s>1$ and $k>0$. Then every 2D cyclic code of length
$2^{k}s$ over $\F$ as an ideal of $R=\F[x,y]/\lg x^s-1,y^{2^k}-1\rg$  has a generating set of the form \eqref{eq1}, if
and only if  one of the following conditions holds:
\begin{enumerate}
\item $k=1$;
\item $k=2$, $q\equiv 3 \mod 4$ and $4\nmid s$.
\end{enumerate}
\end{thm}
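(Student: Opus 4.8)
The plan is to translate each of the two conditions in Proposition~\ref{correction} into an explicit arithmetic condition on $q$, $s$, $k$ and then intersect them. The crux is the observation that $x^{2^{k-1}}+1$ is exactly the $2^k$-th cyclotomic polynomial: its $2^{k-1}=\phi(2^k)$ roots are precisely the elements of multiplicative order $2^k$ in $\b{\F}$. By the standard criterion for the factorization of cyclotomic polynomials over finite fields, $x^{2^{k-1}}+1$ splits into irreducible factors all of degree $d$, where $d$ is the multiplicative order of $q$ in $(\z/2^k\z)^\times$; hence it is irreducible over $\F=\F_q$ \ifof $q$ is a primitive root modulo $2^k$, i.e.\ this order equals $\phi(2^k)=2^{k-1}$. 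This irreducibility criterion, together with the structure of the unit group modulo $2^k$, is the one nontrivial input; everything else is elementary.

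First I would dispose of condition \pref{correction 1} by cases. For $k=1$ we have $x^{2^0}+1=x+1$, which is irreducible over every field, so \pref{correction 1} holds automatically. For $k=2$, the polynomial $x^2+1$ is irreducible precisely when $q$ has order $2$ modulo $4$, i.e.\ $q\not\equiv 1 \pmod 4$; since $q$ is odd this is the same as $q\equiv 3 \pmod 4$. For $k\ge 3$ the group $(\z/2^k\z)^\times \cong \z/2\z \times \z/2^{k-2}\z$ is \emph{not} cyclic, so every element has order at most $2^{k-2}<2^{k-1}$; thus $q$ can never be a primitive root modulo $2^k$ and $x^{2^{k-1}}+1$ is reducible. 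Consequently \pref{correction 1} already forces $k\in\{1,2\}$, and in the case $k=2$ it forces $q\equiv 3 \pmod 4$.

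Next I would handle condition \pref{correction 2}. Using the identity $\gcd(x^a-1,x^b-1)=x^{\gcd(a,b)}-1$, valid over any field via the Euclidean algorithm applied to the exponents, we obtain $\gcd(x^s-1,x^{2^k}-1)=x^d-1$ with $d=\gcd(s,2^k)=2^{\min(v_2(s),\,k)}$, where $v_2$ denotes the $2$-adic valuation. Since $x^d-1 \mid x^2-1$ \ifof $d\mid 2$, condition \pref{correction 2} is equivalent to $\min(v_2(s),k)\le 1$. When $k=1$ this holds for every $s$; when $k\ge 2$ it holds exactly when $v_2(s)\le 1$, that is, when $4\nmid s$.

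Finally I would assemble the cases and invoke Proposition~\ref{correction}. For $k=1$ both conditions \pref{correction 1} and \pref{correction 2} hold unconditionally, which yields the first alternative. For $k=2$ condition \pref{correction 1} requires $q\equiv 3 \pmod 4$ and condition \pref{correction 2} requires $4\nmid s$, giving the second alternative. For $k\ge 3$ condition \pref{correction 1} fails outright, so no further alternatives survive. By Proposition~\ref{correction} the existence of a generating set of the form \eqref{eq1} for every code is equivalent to the conjunction of \pref{correction 1} and \pref{correction 2}, so this case analysis establishes exactly the stated equivalence. I expect no serious obstacle here beyond recording the cyclotomic irreducibility criterion and the non-cyclicity of $(\z/2^k\z)^\times$ for $k\ge 3$, which is precisely what eliminates all large $k$.
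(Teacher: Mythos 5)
Your proposal is correct and takes essentially the same route as the paper: reduce to Proposition~\ref{correction} and determine, by the case analysis $k=1$, $k=2$, $k\geq 3$, exactly when its two conditions hold, using $\gcd(x^s-1,x^{2^k}-1)=x^{\gcd(s,2^k)}-1$ for condition \pref{correction 2}. The only difference is the justification of condition \pref{correction 1}: the paper cites Lidl--Niederreiter (Theorem 3.75, on irreducibility of binomials), whereas you identify $x^{2^{k-1}}+1$ with the $2^k$-th cyclotomic polynomial and use the order-of-$q$ criterion together with the non-cyclicity of $(\z/2^k\z)^\times$ for $k\geq 3$ --- both yield the same dichotomy ($k\leq 2$, and $q\equiv 3 \pmod 4$ when $k=2$).
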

\begin{proof}
It is clear that if $k=1$, then both conditions of Proposition \ref{correction} hold. If $k>2$, then it follows from
\cite[Theorem 3.75]{lidl}, that $x^{2^{k-1}}+1$ is not irreducible and Proposition \ref{correction}\pref{correction 1}
does not hold. If $k=2$, then using \cite[Theorem 3.75]{lidl}, we see that condition \pref{correction 1} of
\ref{correction} holds \ifof $q\equiv 3 \mod 4$. Also by \cite[Corollary 3.7]{lidl},
$\gcd(x^s-1,x^{2^k}-1)=x^{\gcd(s,2^k)}-1$ and hence the condition \pref{correction 2} holds \ifof $4\nmid s$.
Consequently, the result follows Proposition \ref{correction}.
\end{proof}

At the end we mention that Theorems 3.2, 4.1 and  5.7 of \cite{rajabi} are not correct either, because they are based
on Theorem \ref{WrongThm}. Indeed, the first part of \cite[Theorem 4.1]{rajabi} (about the generating set of the code)
is just an special case of Theorem \ref{WrongThm} and Example \ref{counterex} serves as a counterexample to that
theorem, too. Also if in Example \ref{counterex}, we replace $x^s-1$ with $x^{2p^s}+1$, then the same argument shows
that the image of the ideal $I$ in the ring $\F[x,y]/\lg x^{2p^s} +1, y^{2^k}-1 \rg$ has no generating set as mentioned
in \cite[Theorem 3.2]{rajabi}. For Theorem 5.7 of \cite{rajabi}, note that it is proved in \cite{rajabi} that the ring
in Theorem 5.7 is isomorphic to a ring as in Theorems 3.2 or 4.1. Thus by this isomorphism, counterexamples to the
latter theorems map to counterexamples of Theorem 5.7. It should be mentioned that Theorem \ref{correction2} or an
argument quite similar to its proof and also the isomorphism defined in \cite[Section 5]{rajabi}, can be applied in
conditions of Theorems 3.2, 4.1 and 5.7 of \cite{rajabi} to see exactly when they are correct. But here in the next
section, we more generally present generating sets for all consta\-cyc\-lic 2D codes which are repeated root in at most
one direction.
\section{One-sided repeated-root 2D consta\-cyc\-lic codes}\label{sec2}

In this paper, we deal with 2D constacyclic codes which are either simple root or have repeated roots in at most one
direction. We call such codes one-sided repeated-root codes, as defined below.
\begin{defn}
We call a two-dimensional $(\lambda, \delta)$-consta\-cyc\-lic code $D$ of length $nm$ over $\F_{p^r}$, one-sided
repeated root, if either $\gcd (n,p)=1$ or $\gcd (m,p)=1$.
\end{defn}
First, we fix some notations.
\begin{nt}
From now on, we assume that $n,m$ are two integers, such that $\gcd(n,p)=1$, $m=m'p^s$ and $\gcd(m',p)=1$. Also we
assume that $\lambda, \delta$ are non-zero elements of $\F$. We let $\mathcal{S}=\frac{\F[x,y]}{\langle x^n-\lambda ,
y^m-\delta\rangle}$. Moreover, we assume that $x^n-\lambda =\prod_{j=1}^{\eta}f_j(x)$, where $f_j(x)$, $1\leq j \leq
\eta,$ are monic irreducible coprime polynomials in $\F[x]$. Also we set $d_j=\deg f_j$, $K_j=\frac{\F[x]}{\langle f_j(x) \rangle}\cong \F_{q^{d_j}}$
 and $\mathcal{S}_j=\frac{K_j[y]}{\langle y^m -\delta \rangle}$. We consider elements of $\mathcal{S}$ as
those elements of $\F[x,y]$ whose $x$-degree and $y$-degree is less than $n$ and $m$, respectively. A similar notation
holds for elements of $\mathcal{S}_j$ and $K_j$.
\end{nt}
Note that every $(\lambda, \delta)$-consta\-cyc\-lic code of length $nm$ over $\F$ is an ideal of $\mathcal{S}$. As the
following remark shows, $\mathcal{S}$ is the direct sum of $\mathcal{S}_j$'s. By using this structure, we determine the
ideals of $\mathcal{S}$.
\begin{rem}\label{r1}
Using the CRT we see that
\begin{align*}
\mathcal{S}& = \frac{\F[x,y]}{\langle x^n-\lambda , y^m-\delta\rangle}\cong \frac{\frac{\F[x]}{\langle x^n-\lambda
\rangle}[y]}{\langle y^m -\delta \rangle}\cong \frac{  \left( \bigoplus_{j=1}^{\eta}\frac{\F[x]}{\langle
f_j(x)\rangle}\right)[y]}{\langle y^m -\delta \rangle} \\
 & \cong \bigoplus_{j=1}^{\eta}\frac{\frac{\F[x]}{\langle
f_j(x)\rangle}[y]}{\langle y^m -\delta \rangle}=\bigoplus_{j=1}^{\eta} \mathcal{S}_j.
\end{align*}
This isomorphism is $\psi:\mathcal{S} \to \bigoplus_{j=1}^\eta \mathcal{S}_j$ with
 $$\psi(h(x,y))=(\psi_1(h(x,y)), \ldots,  \psi_\eta(h(x,y))),$$
where $\psi_j:\mathcal{S} \to \mathcal{S}_j$ is defined with $\psi_j(h(x,y))=  h(x,y)\ \mod f_j(x)$.
\end{rem}


Now, we can determine the general form of ideals of $\mathcal{S}$. 
If $C$ is a $(\lambda , \delta)$-consta\-cyc\-lic code over $\F$, then $C$ is an ideal of $\mathcal{S}$. Hence $\psi
(C)=\bigoplus\limits _{j=1}^{\eta}C_j$, where $C_j$, $1\leq j\leq \eta$, is an ideal of
$\mathcal{S}_j=\frac{K_j[y]}{\langle y^m -\delta \rangle}.$ In fact, $C_j$ is a $\delta$-consta\-cyc\-lic code over $K_j$. Thus as an ideal of $\mathcal{S}_j$,
$C_j=\ll g_j(x,y)\gg$, where $g_j(x,y)\in K_j[y] $ is the generator polynomial for $C_j$, 
that is, the unique monic
polynomial of minimum $y$-degree in $C_j$ which divides $y^m-\delta$ in $K_j[y] $. In particular, $C_j=0$ \ifof
$g_j(x,y)=y^m-\delta$.

 \begin{thm}\label{p2}
Let $C$ be a $(\lambda , \delta)$-consta\-cyc\-lic code over $\F$. Then there exist unique polynomials $g_j(x,y)$ such
that $g_j(x,y)\mid y^m-\delta$ in $K_j[y]$, $g_j(x,y)$ is monic when considered as a polynomial in $y$ and as an ideal
of $\mathcal{S}$,
  \begin{center}
   $ C=\langle g_1(x,y)\prod \limits _{i\neq 1} f_i(x),g_2(x,y)\prod \limits _{i\neq 2} f_i(x),\ldots ,g_{\eta}(x,y)\prod \limits _{i\neq \eta} f_i(x) \rangle$.
  \end{center}
Moreover, $\dim ( C) =mn-\sum^{\eta}_{j=1}d_jt_j$, where $t_j= \deg_y g_j$.

\end{thm}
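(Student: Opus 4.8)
The plan is to deduce everything from the Chinese Remainder decomposition $\psi:\mathcal{S}\to\bigoplus_{j=1}^\eta \mathcal{S}_j$ of Remark \ref{r1}, together with the componentwise description of $C$ recalled just before the statement. Since $\psi(C)=\bigoplus_{j=1}^\eta C_j$ with each $C_j$ a $\delta$-consta\-cyc\-lic code of $\mathcal{S}_j$, and each such $C_j$ already has a unique monic generator polynomial $g_j(x,y)\in K_j[y]$ dividing $y^m-\delta$, both the existence and the uniqueness of the $g_j$ will essentially come for free. The real content is to translate the purely componentwise data $(C_1,\ldots,C_\eta)$ back into an explicit generating set for the ideal $C$ inside $\mathcal{S}$, and then to count dimensions.

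For the generating set I would set $h_j=g_j(x,y)\prod_{i\neq j}f_i(x)$ and compute the image $\psi(h_j)$ slot by slot, recalling that $\psi_k$ is reduction modulo $f_k(x)$. For $k\neq j$ the factor $f_k(x)$ divides $\prod_{i\neq j}f_i(x)$, so $\psi_k(h_j)=0$. For $k=j$, the $f_i$ with $i\neq j$ are coprime to $f_j$, so $\prod_{i\neq j}f_i(x)$ reduces modulo $f_j$ to a nonzero element $u_j$ of the field $K_j$, that is, a unit; hence $\psi_j(h_j)=u_j\,g_j(x,y)$ generates the same ideal as $g_j$, namely $C_j$. Thus each $\psi(h_j)$ is supported in the single slot $j$, where it generates $C_j$.

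It then follows that the ideal $\langle h_1,\ldots,h_\eta\rangle$ is carried by the ring isomorphism $\psi$ onto the ideal of $\bigoplus_j\mathcal{S}_j$ generated by the elements $\psi(h_j)$, which, because each $\psi(h_j)$ lives in a single slot and generates $C_j$ there, equals $\bigoplus_{j=1}^\eta C_j=\psi(C)$. Injectivity of $\psi$ then gives $C=\langle h_1,\ldots,h_\eta\rangle$, as claimed. Uniqueness of the $g_j$ follows by reversing this computation: the decomposition $\psi(C)=\bigoplus_j C_j$ is determined by $C$, and reading off slot $j$ shows $C_j=\langle g_j\rangle$ in $\mathcal{S}_j$, so any monic $y$-divisor $g_j$ of $y^m-\delta$ producing $C$ must be the (unique) generator polynomial of $C_j$.

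Finally, for the dimension I would use that $\psi$ is an $\F$-linear isomorphism, so $\dim_\F C=\sum_{j=1}^\eta \dim_\F C_j$. Over the field $K_j$ the code $C_j=\langle g_j\rangle\subseteq K_j[y]/\langle y^m-\delta\rangle$ is generated by a degree-$t_j$ divisor of $y^m-\delta$, so by the standard generator-polynomial count $\dim_{K_j}C_j=m-t_j$, and therefore $\dim_\F C_j=d_j(m-t_j)$. Summing and using $\sum_{j=1}^\eta d_j=\deg(x^n-\lambda)=n$ yields $\dim_\F C=mn-\sum_{j=1}^\eta d_j t_j$. I expect the only delicate point to be the slot-by-slot reduction in the second step, namely verifying that $\prod_{i\neq j}f_i$ becomes a unit modulo $f_j$ and vanishes modulo the other $f_k$; once that is in place, everything else is bookkeeping with the isomorphism $\psi$.
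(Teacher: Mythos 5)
Your proposal is correct and follows essentially the same route as the paper's own proof: both use the CRT isomorphism $\psi$ of Remark \ref{r1}, verify slot by slot that $\psi_k\bigl(g_j(x,y)\prod_{i\neq j}f_i(x)\bigr)$ vanishes for $k\neq j$ and is a unit multiple of $g_j$ for $k=j$, deduce the generating set and uniqueness from the uniqueness of generator polynomials in each $\mathcal{S}_j$, and obtain the dimension by summing $d_j(m-t_j)$ over the components (the paper phrases this via cardinalities, which is equivalent). No gaps; the unit argument you flag as the delicate point is exactly the one the paper makes.
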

\begin{proof}
Suppose that $\psi (C)=\bigoplus\limits _{j=1}^{\eta}\ll g_j(x,y)\gg$. Set
\begin{center}
$D=\langle g_1(x,y)\prod \limits _{i\neq 1} f_i(x),g_2(x,y)\prod \limits _{i\neq 2} f_i(x),\ldots ,g_{\eta}(x,y)\prod
\limits _{i\neq \eta} f_i(x) \rangle$.
\end{center}
For any $l\neq j$, we have $\psi _{l}(g_j(x,y)\prod  _{i\neq j} f_i(x))=0$ (because $\psi _{l}( f_l(x))=0$). Also,
$\psi _{l}(g_l(x,y)\prod  _{i\neq l} f_i(x))=g_l(x,y)\prod  _{i\neq l} f_i(x) \mod f_l$ in $S_l$. Since $\prod_{i\neq
l} f_i(x)$ is a unit in $K_l[y]$, so $\psi (D)=\bigoplus_{j=1}^{\eta}\ll g_j(x,y)\gg=\psi (C)$. Thus $C=D$.

Now, let  $C=\langle g'_1(x,y)\prod  _{i\neq 1} f_i(x),\ldots ,g'_{\eta}(x,y)\prod
 _{i\neq \eta} f_i(x) \rangle$, for some $g'_j(x,y)$ such that $g'_j(x,y)\mid y^m-\delta$ in $K_j[y]$ and
$g'_j(x,y)$ is monic when considered as a polynomial in $y$. We claim that $g'_j(x,y)=g_j(x,y)$. For any $j$, we have
$\psi_j(C)=\langle g'_j(x,y)\rangle=\langle g_j(x,y)\rangle$. Since $g_j$ and $g'_j$ are monic and divide $y^m-\delta$,
so $g'_j(x,y)=g_j(x,y)$.

Let $C_j=\langle g_j(x,y)\rangle$. Thus
 \begin{eqnarray*}
\mid C\mid &=&\mid \psi(C)\mid
=\mid C_1\mid \mid C_2\mid \cdots \mid C_{\eta}\mid
=(q^{d_1})^{m-t_1}(q^{d_2})^{m-t_2} \cdots (q^{d_{\eta}})^{m-t_{\eta}}\\
&=& q^{m(\sum \limits ^{\eta}_{j=1}d_j)-(\sum \limits ^{\eta}_{j=1}d_j t_j)}
= q^{mn-\sum \limits ^{\eta}_{j=1}d_jt_j}.
\end{eqnarray*}
Consequently, $\dim ( C) =mn-\sum \limits ^{\eta}_{j=1}d_jt_j$.
\end{proof}
The above theorem, introduces a generating set for an ideal $C$ of $\mathcal{S}$ 
and shows
that this generating set is unique. In the sequel,  by
\begin{center}
$ C=\ll g_1(x,y)\prod \limits _{i\neq 1} f_i(x),\ldots ,g_{\eta}(x,y)\prod \limits _{i\neq \eta} f_i(x) \gg$.
\end{center}
we mean that all $g_j(x,y)$ satisfy the conditions of Theorem \ref{p2} and
\begin{center}
$ C=\lg g_1(x,y)\prod \limits _{i\neq 1} f_i(x),\ldots,g_{\eta}(x,y)\prod \limits _{i\neq \eta} f_i(x) \rg$.
\end{center}
\begin{rem}
With the notations of Theorem \ref{p2}, the polynomial $\sum _{j=1}^{\eta}g_j(x,y)\prod_{i\neq j} f_i(x)$ generates
$C$. To see this, assume that $D=\langle \sum_{j=1}^{\eta}g_j(x,y)\prod_{i\neq j} f_i(x)\rangle$. Note that each
$\psi_l$ is a ring isomorphisms and $\psi_l(g_j(x,y)\prod_{i\neq j}f_i(x))=0$, if $l\neq j$. It follows that
$\psi_l(C)= \psi_l(D)$ for all $l$ and hence $C=D$. Note that $\mathcal{S}$ is a principal ideal ring (PIR).
\end{rem}
Let $C=\langle h_1(x,y),h_2(x,y),\ldots ,h_{r}(x,y)\rangle$, as an ideal of $\mathcal{S}$, where $\deg_y h_j<m$. We
show how to find the unique generator polynomials for $C$ as in the previous theorem. For this, we compute  $\psi
_j(h_l(x,y))$, for  each $j,l$, $1\leq l \leq r, 1\leq j \leq \eta$. Then using Euclidean algorithm, we calculate the
greatest common divisor of $\psi_j(h_l(x,y))$ for all $l$. Assume that
\begin{center}
$g_j(x,y)=\gcd(\psi _j(h_1(x,y)),\ldots , \psi _j(h_{r}(x,y)), y^m-\delta)$ in $K_j[y]$.
\end{center}
Here we assume that $\gcd$ returns a monic polynomial and also $\gcd(0,f)=f$. Thus $g_j(x,y)\mid y^m-\delta $ and is
monic as a polynomial in $y$ and also $\psi_j(C)=\lg g_j(x,y) \rg $. So
$$ C=\ll g_1(x,y)\prod_{i\neq 1}
f_i(x),g_2(x,y)\prod_{i\neq 2} f_i(x),\ldots ,g_{\eta}(x,y)\prod_{i\neq \eta} f_i(x) \gg.$$
\begin{ex}\label{e0}
 Suppose that $\mathcal{S}=\frac{\F_2[x,y]}{\langle x^3-1 , y^3-1\rangle}$. We have
 \begin{eqnarray*}
x^3-1 &=&(x+1)(x^2+x+1)\ \ \ \ \ \mathrm{in}\ \F_2[x]\\
y^3-1&=&(y+1)(y^2+y+1)\ \ \ \ \ \mathrm{in}\ \frac{\F_2[x]}{\langle x+1\rangle}[y]\\
y^3-1&=&(y+1)(y+x)(y+x+1)\ \ \ \ \ \mathrm{in}\ \frac{\F_2[x]}{\langle x^2+x+1\rangle}[y].
\end{eqnarray*}
If $C$ ia an ideal of $\mathcal{S}$, then $C=\ll g_1(x,y)(x^2+x+1), g_2(x,y)(x+1)\gg $,  where $g_1(x,y)\mid y^3-1$ in
$\frac{\F_2[x]}{\langle x+1\rangle}[y]$ and $g_2(x,y)\mid y^3-1$ in $\frac{\F_2[x]}{\langle x^2+x+1\rangle}[y]$. The
number of such codes is $2^2\times 2^3=32$. Consider the code $C=\langle h_1(x,y), h_2(x,y),h_3(x,y)\rangle$, where
\begin{eqnarray*}
  h_1(x,y) &=& y^2 x^2+y^2x+x+1 ,\\
  h_2(x,y) &=&  y^2 x+yx+x \ \ \mathrm{and}\\
  h_3(x,y) &=& y^2 x^2+y^2+yx^2+y+x^2+1.\\
\end{eqnarray*}
We have
\begin{eqnarray*}
 \psi_1( h_1) &=& 0, \ \  \psi_2( h_1) = y^2 x^2+y^2x+x^2=x(y+x)(y^2+y+x),\\
 \psi_1( h_2) &=& y^2+y+1, \ \  \psi_2( h_2) = y^2 x+yx+x=(y+x)(y+x+1),\\
  \psi_1( h_3) &=& 0 \ \  \  \mathrm{and}\ \ \psi_2( h_3) = y^2 x+yx+x=x(y+x)(y+x+1).\\
\end{eqnarray*}
So
\begin{eqnarray*}
 \gcd(\psi_1( h_1), \psi_1(h_2),\psi_1(h_3),y^3-1) &=& y^2+y+1 \ \mathrm{in}\  \frac{\F_2[x]}{\langle x+1\rangle}[y]\\
 \gcd( \psi_2(h_1), \psi_2(h_2),\psi_2(h_3), y^3-1) &=& y+x \ \mathrm{in}\  \frac{\F_2[x]}{\langle x^2+x+1\rangle}[y].\\
\end{eqnarray*}
Therefore, $C= \ll (1+y+y^2)(x^2+x+1), (y+x)(x+1)\gg$. Note that here $f_1=x+1$, $f_2=x^2+x+1$, $g_1=1+y+y^2$ and
$g_2=y+x$ and hence we get $\dim C=9-4=5$.
\end{ex}
\begin{cor}\label{p3}
   Consider a $(\lambda,\delta)$-consta\-cyc\-lic code $C$ over $\F$ such as
   \begin{center}
   $C=\ll g_1(x,y)\prod \limits _{i\neq 1} f_i(x),g_2(x,y)\prod \limits _{i\neq 2} f_i(x),\ldots ,g_{\eta}(x,y)\prod \limits _{i\neq \eta} f_i(x) \gg$.
    \end{center}
  Then the following set is a basis for $C$ over $\F$.
  \begin{center}
  $\Delta=\bigcup \limits _{j=1}^{\eta} \{x^ry^l g_j(x,y)\prod \limits _{i\neq j} f_i(x)\mid 0 \leq r < d_j,0\leq l<m-t_j \}$,
  \end{center}
where $t_j=\deg_y g_j$. Note that, if $g_j=y^m-\delta$, then the set $\{x^ry^l g_j(x,y)\prod_{i\neq j} f_i(x)\mid 0
\leq l < m-t_j \}$ is empty for all $r$.
\end{cor}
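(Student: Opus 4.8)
The plan is to transport the whole question to the decomposition $\psi(C)=\bigoplus_{j=1}^{\eta}\langle g_j(x,y)\rangle$ supplied by Theorem \ref{p2} and Remark \ref{r1}, and then to treat one summand at a time. Since $\psi$ is an $\F$-linear isomorphism, $\Delta$ is an $\F$-basis of $C$ if and only if $\psi(\Delta)$ is an $\F$-basis of $\psi(C)$. Write $b_{j,a,c}=x^ay^cg_j(x,y)\prod_{i\neq j}f_i(x)$ for the typical element of $\Delta$, with $0\le a<d_j$ and $0\le c<m-t_j$. Because $f_l(x)\mid\prod_{i\neq j}f_i(x)$ whenever $l\neq j$, we have $\psi_l(b_{j,a,c})=0$ for $l\neq j$; hence $\psi(b_{j,a,c})$ lies entirely in the $j$-th summand $\langle g_j\rangle$ of $\bigoplus_j\mathcal S_j$. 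Consequently $\psi(\Delta)$ splits as a disjoint union, indexed by $j$, of finite subsets of the respective summands, and it will be an $\F$-basis of the direct sum as soon as, for each fixed $j$, the set $\{\psi_j(b_{j,a,c}):0\le a<d_j,\ 0\le c<m-t_j\}$ is an $\F$-basis of the ideal $\langle g_j\rangle\subseteq\mathcal S_j$.

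So fix $j$ and compute $\psi_j(b_{j,a,c})=\bar x^{\,a}\,y^c\,g_j(x,y)\,v_j$, where $\bar x$ is the image of $x$ in $K_j=\F[x]/\langle f_j\rangle$ and $v_j=\prod_{i\neq j}f_i(x)\bmod f_j$. As already noted in the proof of Theorem \ref{p2}, $v_j$ is a nonzero element of the field $K_j$, hence a unit; multiplication by $v_j$ is therefore an $\F$-linear automorphism of the ideal $\langle g_j\rangle$, and it suffices to prove that $\{\bar x^{\,a}y^cg_j:0\le a<d_j,\ 0\le c<m-t_j\}$ is an $\F$-basis of $\langle g_j\rangle$. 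I would establish this as a \emph{tower of bases}: first, since $g_j$ is a monic $y$-divisor of $y^m-\delta$ of $y$-degree $t_j$, writing $y^m-\delta=g_jh_j$ and reducing modulo $h_j$ shows that $\{y^cg_j:0\le c<m-t_j\}$ is a $K_j$-basis of $\langle g_j\rangle$ (this is the standard one-variable constacyclic generator-polynomial computation in $\mathcal S_j$). Second, $\{\bar x^{\,a}:0\le a<d_j\}$ is an $\F$-basis of $K_j$ because $\deg f_j=d_j$. Combining the two layers, the products $\bar x^{\,a}y^cg_j$ form an $\F$-basis of $\langle g_j\rangle$, as required. The case $g_j=y^m-\delta$ (so $t_j=m$) gives $\langle g_j\rangle=0$ and an empty index set for $c$, in agreement with the final remark of the statement.

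As a consistency check one may count $|\Delta|=\sum_{j=1}^{\eta}d_j(m-t_j)=m\sum_jd_j-\sum_jd_jt_j=mn-\sum_jd_jt_j$, using $\sum_jd_j=\deg(x^n-\lambda)=n$, which matches $\dim C$ from Theorem \ref{p2}; thus it would in fact be enough to verify either spanning or independence in any one of the reductions above. I expect the only genuinely substantive step to be the one-variable fact that $\{y^cg_j\}_{0\le c<m-t_j}$ is a $K_j$-basis of $\langle g_j\rangle$ in $K_j[y]/\langle y^m-\delta\rangle$; everything else is bookkeeping through $\psi$ together with the observation that $v_j$ is a unit. That step is itself routine via the division algorithm (each class in $\langle g_j\rangle$ has a unique representative $a(y)g_j$ with $\deg_ya<m-t_j$), so no real obstacle arises, only care in keeping the two layers of scalars, $\F$ and $K_j$, straight.
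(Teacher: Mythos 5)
Your proof is correct, and it rests on the same CRT decomposition $\psi(C)=\bigoplus_{j=1}^{\eta}\langle g_j(x,y)\rangle$ and the same key observation (that $\prod_{i\neq j}f_i(x)$ reduces to a unit $v_j$ of $K_j$) as the paper's, but the two arguments are closed in genuinely different ways. The paper never verifies spanning directly: it checks that $\Delta$ is $\F$-linearly independent---by applying each $\psi_j$ to a vanishing combination $\sum_j\bigl(\sum_{r}a_{rj}(x)y^r\bigr)g_j(x,y)\prod_{i\neq j}f_i(x)$, cancelling the unit $v_j$, and descending to $K_j[y]$---and then concludes that the span of $\Delta$ equals $C$ by comparing cardinalities with $|C|=q^{mn-\sum_j d_jt_j}$, which was computed in Theorem \ref{p2}. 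You instead prove, summand by summand, that each set $\{\bar{x}^{a}y^{c}g_j\mid 0\le a<d_j,\ 0\le c<m-t_j\}$ is an $\F$-basis of $\langle g_j\rangle\subseteq\mathcal{S}_j$, via a tower of bases: $\{y^{c}g_j\mid 0\le c<m-t_j\}$ is a $K_j$-basis of $\langle g_j\rangle$ (division by $h_j=(y^m-\delta)/g_j$ gives spanning, the degree bound gives independence), and $\{\bar{x}^{a}\mid 0\le a<d_j\}$ is an $\F$-basis of $K_j$; since $\langle g_j\rangle$ is an ideal it is in particular a $K_j$-subspace, so the two layers of scalars compose. What your route buys: it does not rely on the dimension formula of Theorem \ref{p2} (which you correctly downgrade to a consistency check), and it makes explicit the one-dimensional constacyclic generator-polynomial fact that the paper's counting argument quietly encodes, so spanning is established constructively. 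What the paper's route buys: given Theorem \ref{p2}, only linear independence needs to be verified, so the argument is shorter. Both proofs are complete, and both handle the degenerate case $g_j=y^m-\delta$ consistently (empty contribution to $\Delta$, zero summand).
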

\begin{proof}
  Let $D$ be the $\F$-subspase of $\mathcal{S}$, generated by $\Delta$. We show that $D=C$. Since $C$ is an ideal of $\mathcal{S}$, $\Delta\subseteq C$. So $D\subseteq C$. We will show that the elements of $\Delta $ are linearly independent 
  over $\F$. Suppose that
  \begin{eqnarray*}
  B=\sum ^{\eta}_{j=1}\left(  \left(\sum \limits ^{m-t_j-1}_{r=0}a_{r j}(x)y^r \right)g_j(x,y)\prod \limits _{i\neq j} f_i(x)  \right)=0
\end{eqnarray*}
in $\mathcal{S}$, where $\deg a_{rj}(x)<d_j$. Hence $\psi(B)=0$. So for any $j$, $1\leq j \leq \eta$, $\psi _j(B)\in
\lg y^m-\delta\rg$. Thus $(\sum  ^{m-t_j-1}_{r=0}a_{r j}(x)y^rg_j(x,y))\psi _j(\prod  _{i\neq j} f_i(x))=0$ in
$\mathcal{S}_j$. Since $\prod  _{i\neq j} f_i(x)$ is a unit in $K_j$, $\sum  ^{m-t_j-1}_{r=0}a_{r j}(x)y^rg_j(x,y)=0$
in $\mathcal{S}_j$. Since the $y$-degree of $\sum  ^{m-t_j-1}_{r=0}a_{r j}(x)y^rg_j(x,y)$ is less than $m$, $\sum
 ^{m-t_j-1}_{r=0}a_{r j}(x)y^rg_j(x,y)=0$ in $K_j[y]$. Thus $(\sum  ^{m-t_j-1}_{r=0}a_{r j}(x)y^r)=0$ in
$K_j[y]$. Hence $a_{r j}(x)=0$ in $K_j$ for all $r$. Since $\deg a_{rj}<\deg f_j,$ $a_{r j}(x)=0$ in $\F[x]$. So all
coefficients of $a_{r j}(x)$ are zero and hence $\Delta$ is an independent set. Now,
 \begin{eqnarray*}
\mid D\mid &=&(q^{d_1})^{m-t_1}(q^{d_2})^{m-t_2}\cdots (q^{d_{\eta}})^{m-t_{\eta}} =q^{mn-\sum \limits ^{\eta}_{j=1}d_jt_j}=
 \mid C\mid.
\end{eqnarray*}
Therefore, $C=D$.
\end{proof}
\par The following example shows how we can find a generator matrix for a given code using Corollary \ref{p3}.
\begin{ex}\label{ex1}
With the notations of Example \ref{e0}, consider $C= \ll (1+y+y^2)(x^2+x+1), (y+x)(x+1)\gg$. We have $\mid
C\mid=2^5=32$. Also, the generator matrix of $C$ is of the following form.
\begin{equation*}
{G}= \left[
\begin{array}{c}
(1+y+y^2)(x^2+x+1)\\
(y+x)(x+1)\\
x(y+x)(x+1)\\
y(y+x)(x+1)\\
xy(y+x)(x+1)\\
\end{array}
 \right]   =\left[
\begin{array}{ccccccccc}
1& 1& 1 & 1 & 1& 1& 1& 1 &1\\
0& 1& 1 & 1 & 1& 0& 0& 0 &0\\
1& 0& 1 & 0 & 1& 1& 0& 0 &0\\
0& 0& 0& 0& 1& 1 & 1 & 1& 0\\
0& 0& 0& 1& 0& 1 & 0 & 1& 1
\\
\end{array} \right].
\end{equation*}
Note that every row of $G$ as a constacyclic codeword is seen as a $3\times 3$ array. For example, the array form of
the second row is
 $$\begin{matrix}
 \phantom{1}\\
 1\\
 y\\
 y^2
 \end{matrix}
 \begin{array}{c}
\begin{matrix} 1 &x &x^2 \end{matrix} \\
\begin{bmatrix}
0& 1& 1 \\
 1 & 1& 0\\
  0& 0 &0\\
\end{bmatrix} \end{array} $$
and its cyclic shift corresponding to multiplication by $x$ is the third row.
\end{ex}
\section{The dual of one-sided repeated-root 2D consta\-cyc\-lic codes}\label{sec3}

For any $(\lambda,\delta)$-consta\-cyc\-lic code $C\subseteq \F^{nm}$, let
\begin{center}
 $C^{\perp}= \{\mathbf{u}\in \F^{nm} \mid \mathbf{u}\mathbf{\cdot}\mathbf{w}=0 \    \mathrm{for}\  \mathrm{any} \    \mathbf{w}\in C \}$
 \end{center}
be the dual of the code $C$, where $\mathbf{u}\mathbf{\cdot }\mathbf{w}$ is the Euclidean inner product of $\mathbf{u}$
and $\mathbf{w}$ in $\F^{nm}$. By \cite[Propositin 2.2]{rajabi}, $C^\perp$  is a $(\lambda^{-1},\delta^{-1})
$-consta\-cyc\-lic code over $\F$. In this section, we shall determine the unique generating set of the dual of $C$ as
an ideal of $ \mathcal{T}=\frac{\F[x,y]}{<x^{n}-\lambda^{-1}, y^m-\delta^{-1}>}$.
\par If $f(x)$ is a non-zero polynomial of degree $d$ in $\F[x]$, we define
$f^{\ast}(x)=x^df(x^{-1})$. Since $x^n-\lambda =\prod  _{i= 1}^{\eta} f_i(x)$, we have $x^n-\lambda^{-1} =u\prod
 _{i= 1}^{\eta} f^{\ast}_i(x)$ for some $u\in \F$.
\par Similar to Remark \ref{r1}, we can show that
$$ \mathcal{T}\simeq \bigoplus_{i=1}^{\eta}\frac{\frac{\F[x]}{\langle f^{\sharp}_i(x)\rangle}[y]}{\langle y^m -\delta ^{-1} \rangle}, \quad
\text{where}\quad f^{\sharp}_i(x)=\frac{f^{\ast}_i}{f_i(0)}.$$
\begin{prop}[{\cite[Proposition 2.1]{rajabi}}]\label{kh}
Assume that $\lambda _1$ and $\lambda _2$ are non-zero elements of $\F$ and $f(x,y)=f_0(x)+\cdots +f_{m-1}(x)y^{m-1},\
g(x,y)=g_0(x)+\cdots +g_{m-1}(x)y^{m-1}\in \F[x,y]$, where $f_i(x)=\sum  ^{n-1}_{j=0}a_{ji}(x)x^j$ and $g_i(x)=\sum
 ^{n-1}_{j=0}b_{ji}(x)x^j$ for $i=0,1,\ldots ,m-1$. Then $f(x,y)g(x,y)=0$ in $\frac{\F[x,y]}{<x^{n}-\lambda
_{1}, y^m-\lambda_2>}$ iff $(a_0, a_1,\ldots ,a_{m-1})$is orthogonal to $(b_{m-1},\ldots ,b_0)$ and all its
$(\lambda_1^{-1},\lambda_2^{-1})$-consta\-cyc\-lic shifts, where $a_i=(a_{0i},\ldots a_{n-1,i})$,
$b_i=(a_{n-1,i},\ldots a_{0i})$ for $0\leq i\leq m-1$.
\end{prop}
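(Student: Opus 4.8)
The plan is to reduce the vanishing of $f(x,y)g(x,y)$ in $\mathcal{R}=\F[x,y]/\lg x^{n}-\lambda_1,\ y^{m}-\lambda_2\rg$ to a single coefficient-extraction computation, and then to recognize the extracted coefficient as an inner product. First I would record the following elementary observation: for any $h\in\mathcal{R}$, we have $h=0$ if and only if, for every $0\le i<n$ and $0\le j<m$, the coefficient of $x^{n-1}y^{m-1}$ in the reduced product $x^{i}y^{j}h$ is zero. The point is that multiplying a monomial $x^{a}y^{b}$ (with $0\le a<n$, $0\le b<m$) by $x^{i}y^{j}$ yields $x^{n-1}y^{m-1}$ precisely when $a=n-1-i$ and $b=m-1-j$, and in that case $a+i=n-1<n$ and $b+j=m-1<m$, so no reduction modulo $x^{n}-\lambda_1$ or $y^{m}-\lambda_2$ occurs. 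Hence the extracted coefficient is literally the coefficient of $x^{n-1-i}y^{m-1-j}$ in $h$, and as $(i,j)$ ranges over $[0,n)\times[0,m)$ the pair $(n-1-i,m-1-j)$ ranges over all of $[0,n)\times[0,m)$, recovering every coefficient of $h$.

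Applying this with $h=fg$ reduces the problem to computing the coefficient of $x^{n-1}y^{m-1}$ in $f\cdot(x^{i}y^{j}g)$. Next I would carry out this top-coefficient computation for an arbitrary reduced $G=\sum_{c,d}G_{cd}x^{c}y^{d}$ in place of $x^{i}y^{j}g$. Since both $f$ and $G$ have $x$-degree $<n$ and $y$-degree $<m$, the only monomials of $fG$ reducing to $x^{n-1}y^{m-1}$ arise from the exact index sums $c+c'=n-1$ and $d+d'=m-1$; the wrap-around sums $c+c'=2n-1$ or $d+d'=2m-1$, which alone would introduce factors of $\lambda_1$ or $\lambda_2$, are impossible because the maxima are $2n-2$ and $2m-2$. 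Consequently the coefficient of $x^{n-1}y^{m-1}$ in $fG$ equals $\sum_{c,d}a_{cd}G_{n-1-c,\,m-1-d}$, which is exactly the Euclidean inner product of the coefficient array $(a_0,\dots,a_{m-1})$ of $f$ with the doubly reversed array of $G$ (reverse the order of the $y$-blocks, and reverse the $x$-coefficients inside each block). Taking $G=g$ already identifies the top coefficient of $fg$ with $\lg(a_0,\dots,a_{m-1}),\,(b_{m-1},\dots,b_0)\rg$, where $b_i=(b_{n-1,i},\dots,b_{0,i})$, matching the statement of the proposition in the case $i=j=0$.

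It remains to incorporate the shifts and to invert the constants, which I expect to be the only delicate bookkeeping. Multiplication of $g$ by $x$ (respectively by $y$) in $\mathcal{R}$ is exactly the action of $\tau_{\lambda_1}$ inside each $x$-block (respectively of $\Upsilon_{\lambda_2}$ on the blocks) applied to the coefficient array $b$ of $g$, so $x^{i}y^{j}g$ corresponds to $\tau_{\lambda_1}^{i}\Upsilon_{\lambda_2}^{j}$ applied to $b$. The key identity, which I would verify first in one dimension directly from the definition $\tau_\lambda(d_0,\dots,d_{n-1})=(\lambda d_{n-1},d_0,\dots,d_{n-2})$, is that reversal intertwines the $\lambda$-shift with the inverse-constant shift, namely $\mathrm{reverse}\circ\tau_\lambda=\tau_{\lambda^{-1}}^{-1}\circ\mathrm{reverse}$, and likewise for $\Upsilon$. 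Because the $x$- and $y$-shifts commute, $\mathrm{reverse}\bigl(\tau_{\lambda_1}^{i}\Upsilon_{\lambda_2}^{j}(b)\bigr)$ is a $(\lambda_1^{-1},\lambda_2^{-1})$-constacyclic shift of $(b_{m-1},\dots,b_0)$, and as $(i,j)$ runs over all residues these shifts exhaust every $(\lambda_1^{-1},\lambda_2^{-1})$-constacyclic shift of $(b_{m-1},\dots,b_0)$. Combining the three steps, the coefficient of $x^{n-1}y^{m-1}$ in $x^{i}y^{j}fg$ equals the inner product of $(a_0,\dots,a_{m-1})$ with the corresponding shift, so $fg=0$ in $\mathcal{R}$ if and only if $(a_0,\dots,a_{m-1})$ is orthogonal to $(b_{m-1},\dots,b_0)$ and all of its $(\lambda_1^{-1},\lambda_2^{-1})$-constacyclic shifts, as claimed. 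The main obstacle is purely the constant-tracking in this last identity; the degree bounds ruling out wrap-around make the central computation itself clean.
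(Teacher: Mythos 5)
Your proof is correct, but there is nothing in the paper to compare it against: Proposition \ref{kh} is quoted verbatim from \cite[Proposition 2.1]{rajabi}, and the present paper gives no proof of it at all — it is used only as an imported tool (to conclude $\ann(g)=\lg g^\circledast \rg^\perp$ and then Lemma \ref{l1}). Judged on its own, your argument is sound and self-contained. The two computations at its core check out: the extraction principle (a reduced $h$ vanishes in $\F[x,y]/\lg x^n-\lambda_1,y^m-\lambda_2\rg$ iff the coefficient of $x^{n-1}y^{m-1}$ in $x^iy^jh$ vanishes for all $(i,j)$, since the degree bounds $2n-2$ and $2m-2$ forbid wrap-around onto the top monomial), and the intertwining identity $\mathrm{reverse}\circ\tau_\lambda=\tau_{\lambda^{-1}}^{-1}\circ\mathrm{reverse}$, which holds as stated. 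The organization is genuinely nice: by writing the coefficient of $x^{n-1-i}y^{m-1-j}$ in $fg$ as the top coefficient of $f\cdot(x^iy^jg)$, you push every occurrence of $\lambda_1,\lambda_2$ into the shifted polynomial instead of into the coefficient formula, which is what makes the inner-product interpretation immediate; a direct expansion of all coefficients of $fg$ (the route one would expect in \cite{rajabi}) has to track the wrap-around constants by hand.

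One point you deferred as ``constant-tracking'' should be made explicit, because as written your shifts are not literally those of the statement. Your construction produces $\Upsilon_{\lambda_2^{-1}}^{-j}\tau_{\lambda_1^{-1}}^{-i}$ applied to $(b_{m-1},\ldots,b_0)$, i.e.\ \emph{inverse} shifts. Since $\tau_{\lambda_1^{-1}}^{\,n}=\lambda_1^{-1}\,\mathrm{id}$ and $\Upsilon_{\lambda_2^{-1}}^{\,m}=\lambda_2^{-1}\,\mathrm{id}$, one has $\tau_{\lambda_1^{-1}}^{-i}=\lambda_1\,\tau_{\lambda_1^{-1}}^{\,n-i}$ and $\Upsilon_{\lambda_2^{-1}}^{-j}=\lambda_2\,\Upsilon_{\lambda_2^{-1}}^{\,m-j}$, so each inverse shift is a nonzero scalar multiple of a forward shift, and $(n-i,m-j)$ runs over all residues as $(i,j)$ does; as orthogonality is insensitive to nonzero scalars, the equivalence with the statement follows. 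Those two lines close the only gap. (Also note the statement's $b_i=(a_{n-1,i},\ldots,a_{0i})$ is a typo for $b_i=(b_{n-1,i},\ldots,b_{0,i})$; you read it in the intended way.)
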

Suppose that $f(x,y)\in \F[x,y]$ has $x$- and $y$-degree less than $n$ and $m$, respectively. Then we define
$f^\circledast(x,y)=x^{n-1}y^{m-1}f(\frac{1}{x},\frac{1}{y})$. If $f\in \mathcal{S}$ (resp. in $\mathcal{T}$), we consider
$f^\circledast$ as an element of $\mathcal{T}$ (resp. $\mathcal{S}$). With this definition, Proposition \ref{kh} says that
$\ann(g)=\lg g^\circledast \rg ^\bot$ for any $g\in \mathcal{S}$. It is easy to see that $f^{\circledast\circledast}(x,y)=f(x,y)$ and
if $f,g\in \mathcal{S}$, then and $(f+g)^\circledast=f^\circledast+g^\circledast$.
Now for any polynomial $f(x,y)\in \F[x,y]$ define $f^{\ast}(x,y)=x^{\deg_x f}y^{\deg_y f}f(\frac{1}{x}, \frac{1}{y})$.
Again, if $f\in \mathcal{S}$ (resp. in $\mathcal{T}$), we consider $f^\ast$ as an element of $\mathcal{T}$ (resp. $\mathcal{S}$).
Thus $f^\circledast(x,y)=uf^{\ast}(x,y)$ for some $u\in \mathcal{T}$ and
\begin{equation}\label{e8}
  \langle f^\circledast(x,y)\rangle =\langle f^{\ast}(x,y)\rangle.
\end{equation}
Note that for $f,g\in \mathcal{S}$ by $(fg)^\circledast$ we mean $h^\circledast$ where $h$ is a polynomial with $x$-degree
less than $n$ and $y$-degree less than $m$ such that we have $h=fg$ in $\mathcal{S}$.
Also for $A\se \mathcal{S}$, by $A^\circledast$ we mean $\{a^\circledast| a\in A\}$. The notions $A^\ast$ and $(fg)^\ast$
have similar meanings.
\begin{lem}  \label{fg*}
Suppose that $f,g\in S$ and let $I$ be an ideal of $\mathcal{S}$. Then
\begin{enumerate}
\item \label{fg*1} $(fg)^\circledast =uf^\circledast g^\circledast$ for some unit $u\in \mathcal{T}$;
\item \label{fg*2} $(fg)^\ast =uf^\ast g^\ast$ for some unit $u\in \mathcal{T}$.
\item \label{fg*3} $I^\circledast=\lg I^\ast \rg$.
\end{enumerate}
\end{lem}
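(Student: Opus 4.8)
The plan is to prove part \eqref{fg*1} first, since parts \eqref{fg*2} and \eqref{fg*3} follow formally from it together with the relations $f^\circledast = u f^\ast$ (with $u$ a unit) and \eqref{e8} already recorded before the lemma. The key observation is that, up to a fixed unit factor, the operation $\circledast$ is an honest ring homomorphism. Because $x^n=\lambda$ and $y^m=\delta$ are nonzero, $x$ and $y$ are units in $\mathcal{S}$, and likewise in $\mathcal{T}$. So I would define $\phi\colon \mathcal{S}\to \mathcal{T}$ by $x\mapsto x^{-1}$, $y\mapsto y^{-1}$, extended $\F$-linearly and multiplicatively. To see it is well defined, I check that the defining relations map to zero: in $\mathcal{T}$ one has $x^n=\lambda^{-1}$, so $x^{-n}-\lambda=\lambda-\lambda=0$, and similarly $y^{-m}-\delta=0$; thus the map $\F[x,y]\to\mathcal{T}$ descends to $\mathcal{S}$ and is a ring isomorphism, with inverse the analogous map $\mathcal{T}\to\mathcal{S}$.

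The crucial identity is then $f^\circledast = x^{n-1}y^{m-1}\phi(f)$ in $\mathcal{T}$ for every $f\in\mathcal{S}$. Writing the reduced representative $f=\sum_{i<n,\,j<m}a_{ij}x^iy^j$, I compute $x^{n-1}y^{m-1}\phi(f)=\sum a_{ij}x^{n-1-i}y^{m-1-j}$, and since $0\le n-1-i$ and $0\le m-1-j$ this expression is already reduced and coincides with the polynomial $f^\circledast(x,y)=x^{n-1}y^{m-1}f(1/x,1/y)$. Now for part \eqref{fg*1}, applying this identity to the reduced representative $h$ of the product $fg$ and using multiplicativity of $\phi$ gives $(fg)^\circledast=h^\circledast=x^{n-1}y^{m-1}\phi(h)=x^{n-1}y^{m-1}\phi(f)\phi(g)$, whereas $f^\circledast g^\circledast=x^{2(n-1)}y^{2(m-1)}\phi(f)\phi(g)$. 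Comparing, $(fg)^\circledast=u\,f^\circledast g^\circledast$ with $u=x^{1-n}y^{1-m}$, a unit of $\mathcal{T}$.

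For part \eqref{fg*2}, I invoke $f^\circledast=u_f f^\ast$, $g^\circledast=u_g g^\ast$, and $(fg)^\circledast=u_{fg}(fg)^\ast$ for units $u_f,u_g,u_{fg}$; substituting into part \eqref{fg*1} yields $u_{fg}(fg)^\ast=u\,u_f u_g\,f^\ast g^\ast$, so $(fg)^\ast=u'\,f^\ast g^\ast$ with $u'=u_{fg}^{-1}u\,u_f u_g$ a unit. For part \eqref{fg*3}, I would first note that $I^\circledast=x^{n-1}y^{m-1}\phi(I)=\phi(I)$ is an ideal of $\mathcal{T}$, since $\phi$ is a ring isomorphism and scaling an ideal by the unit $x^{n-1}y^{m-1}$ leaves it unchanged. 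Then for each $a\in I$ the relation $a^\circledast=u_a a^\ast$ (unit $u_a$) shows at once that $a^\ast=u_a^{-1}a^\circledast\in I^\circledast$ and that $a^\circledast=u_a a^\ast\in\langle I^\ast\rangle$; as $I^\circledast$ and $\langle I^\ast\rangle$ are both ideals, these give $\langle I^\ast\rangle\subseteq I^\circledast$ and $I^\circledast\subseteq\langle I^\ast\rangle$, hence equality.

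The main obstacle is part \eqref{fg*1}, precisely because $\circledast$ does not obviously commute with reduction modulo $\langle x^n-\lambda,\,y^m-\delta\rangle$: a direct polynomial argument would have to track how reducing the ordinary product $fg$ back to $x$-degree below $n$ and $y$-degree below $m$ interacts with the substitution $x\mapsto 1/x$, $y\mapsto 1/y$. Recasting $\circledast$ as a unit multiple of the ring isomorphism $\phi$ is exactly what eliminates this bookkeeping, since $\phi$ respects products by construction; the only points demanding genuine care are the well-definedness of $\phi$ (the defining relations map to zero) and the correct placement of the monomial unit prefactor in the identity $f^\circledast=x^{n-1}y^{m-1}\phi(f)$.
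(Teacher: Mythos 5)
Your proof is correct, but it takes a genuinely different route from the paper's. The paper proves part (i) by direct computation on polynomial representatives: it writes $fg=h+a(x^n-\lambda)+b(y^m-\delta)$ in $\F[x,y]$ with explicit degree bounds on $a$ and $b$, applies the substitution $x\mapsto 1/x$, $y\mapsto 1/y$, multiplies by $x^{2n-2}y^{2m-2}$, and checks that the two error terms become multiples of $x^n-\lambda^{-1}$ and $y^m-\delta^{-1}$, leaving $f^\circledast g^\circledast = x^{n-1}y^{m-1}(fg)^\circledast$ in $\mathcal{T}$. Your observation that $x\mapsto x^{-1}$, $y\mapsto y^{-1}$ gives a well-defined ring isomorphism $\phi\colon\mathcal{S}\to\mathcal{T}$ (the relations do map to zero, since $x^{-n}=\lambda$ and $y^{-m}=\delta$ in $\mathcal{T}$) with $f^\circledast=x^{n-1}y^{m-1}\phi(f)$ on reduced representatives eliminates exactly this bookkeeping: multiplicativity up to a monomial unit becomes automatic, and your unit $u=x^{1-n}y^{1-m}$ matches the paper's relation up to inversion. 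The difference is most visible in part (iii): the paper must prove by hand that $I^\circledast$ is an ideal, using additivity of $\circledast$, part (i), and the involution $f^{\circledast\circledast}=f$, whereas you get ideal-ness for free because $I^\circledast=x^{n-1}y^{m-1}\phi(I)=\phi(I)$ is the image of an ideal under a surjective ring homomorphism. Part (ii) and the two inclusions in part (iii) are the same formal unit manipulations in both treatments. What the paper's computation buys is self-containedness, with no auxiliary isomorphism to set up and verify; what yours buys is a conceptual explanation of why the lemma holds ($\circledast$ is inversion twisted by a unit) and visibly shorter proofs of (i) and (iii).
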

\begin{proof}
\pref{fg*1}: Suppose that $fg=h$ in $\mathcal{S}$. Then in $\F[x,y]$ there exist polynomials $a$ and $b$ such that
$fg=h+a(x^n-\lambda)+b(y^m-\delta)$. Since $x$- and $y$-degrees of $fg$ are at most $2n-2$ and $2m-2$, respectively, we can
assume that $\deg_x(a)\leq n-2$, $\deg_y(a)\leq 2m-2$, $\deg_x(b)\leq 2n-2$ and $\deg_y(b)\leq m-2$.
Now in $\F[x,y]$ we have
\begin{eqnarray*}
f^\circledast g^\circledast &=&x^{2n-2}y^{2m-2}f(\f{1}{x},\f{1}{y}) g(\f{1}{x},\f{1}{y})\\
 &=& x^{2n-2}y^{2m-2}h(\f{1}{x},\f{1}{y})+ x^{n-2}y^{2m-2}a(\f{1}{x},\f{1}{y}) (1-\lambda x^n) \\
 & &+x^{2n-2}y^{m-2} b(\f{1}{x},\f{1}{y}) (1-\delta y^m) \\
 &=&x^{n-1}y^{m-1} h^\circledast -\lambda a'(x,y)(x^n-\lambda^{-1}) -\delta b'(x,y) (y^m-\delta^{-1}),
\end{eqnarray*}
for some  $a',b'\in \F[x,y]$. Therefore in $\mathcal{T}$ we have $$f^\circledast g^\circledast=x^{n-1}y^{m-1}
h^\circledast = u(fg)^\circledast,$$ for some unit $u\in \mathcal{T}$.

\pref{fg*2}: This follows from part \pref{fg*1} and \eqref{e8}.

\pref{fg*3}:
It is easy to see that if $d=\deg_x(f), d'=\deg_y(f)$, then $f^\ast=(x^{n-d-1}y^{m-d'-1}f)^\circledast$ and hence $I^\ast\se I^\circledast$.
Also from \eqref{e8}, it follows that $I^\circledast \se \lg I^\ast \rg$. Thus it suffices to show that $I^\circledast$
 is an ideal. It is closed under addition because $(f+g)^\circledast=f^\circledast + g^\circledast$. Now assume that
 $f^\circledast\in I^\ast$, that is $f\in I$, and $g\in \mathcal{T}$.
By part \pref{fg*1}, we have $(gf^\circledast)^\circledast= ug^\circledast f\in I$ for some unit $u\in \mathcal{S}$ and hence $gf^\circledast=
 (gf^\circledast)^{\circledast\circledast}\in I^\circledast$.
\end{proof}
Using Proposition \ref{kh} and Lemma \ref{fg*}, we have the following lemma that shows the relationship between the elements of $\ann(C)$ and $C^{\perp}$.
\begin{lem}\label{l1}
Let $C$ be a $(\lambda , \delta)$-consta\-cyc\-lic code over $\F$. Then
\begin{center}
  $C^{\perp}=\ann ^\circledast(C)=\lg \ann^{\ast}(C) \rg$.
\end{center}
\end{lem}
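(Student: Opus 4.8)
The plan is to prove the two stated equalities separately and then chain them. The second equality, $\ann^\circledast(C)=\lg\ann^\ast(C)\rg$, is essentially immediate from Lemma \ref{fg*}\pref{fg*3}: since $\ann(C)$ is an ideal of $\mathcal{S}$ (the annihilator of a subset of a commutative ring is always an ideal), applying part \pref{fg*3} with $I=\ann(C)$ gives $\ann^\circledast(C)=\lg\ann^\ast(C)\rg$ directly. So the real content lies in the first equality $C^\perp=\ann^\circledast(C)$, and I would spend the bulk of the argument there.

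For the inclusion-by-inclusion proof of $C^\perp=\ann^\circledast(C)$, I would first reduce the orthogonality condition to an annihilator condition via Proposition \ref{kh}. The observation recorded just before Lemma \ref{fg*} is the key: Proposition \ref{kh} says precisely that $\ann(g)=\lg g^\circledast\rg^\perp$ for any $g\in\mathcal{S}$, because $f\cdot g^\circledast=0$ in the relevant inner-product sense exactly when $f$ lies in the orthogonal complement of the constacyclic shifts of $g$, which by Proposition \ref{kh} is the condition $fg=0$ in $\mathcal{S}$. I would first establish the single-generator statement $\lg h\rg^\perp=\ann(h)^\circledast$ for one element $h$, and then promote it to an arbitrary ideal $C=\lg h_1,\dots,h_r\rg$.

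The promotion step is where I expect the main work to sit. Orthogonal complement turns sums into intersections, so $C^\perp=\big(\sum_i\lg h_i\rg\big)^\perp=\bigcap_i\lg h_i\rg^\perp=\bigcap_i\ann(h_i)^\circledast$, using the single-generator identity on each piece. On the other side, $\ann(C)=\bigcap_i\ann(h_i)$, so I must check that the operation $(-)^\circledast$ commutes with finite intersections of ideals, i.e.\ $\big(\bigcap_i\ann(h_i)\big)^\circledast=\bigcap_i\ann(h_i)^\circledast$. This follows because $(-)^\circledast$ is an involution (we have $f^{\circledast\circledast}=f$) that is additive and, by Lemma \ref{fg*}\pref{fg*1}, sends ideals to ideals up to units; hence it is a bijection $\mathcal{S}\to\mathcal{T}$ carrying ideals bijectively to ideals and therefore preserving intersections. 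The one subtlety to handle carefully is that $(-)^\circledast$ is defined element-wise on representatives of $x$-degree $<n$ and $y$-degree $<m$, so I would confirm that the involution property and additivity are genuinely valid in $\mathcal{S}$ (both already noted in the text) before invoking them on intersections.

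Finally I would assemble the chain $C^\perp=\bigcap_i\ann(h_i)^\circledast=\big(\bigcap_i\ann(h_i)\big)^\circledast=\ann(C)^\circledast=\ann^\circledast(C)$, and then append the already-established $\ann^\circledast(C)=\lg\ann^\ast(C)\rg$ to conclude. The hardest part, should any difficulty arise, is verifying the single-generator base case $\lg h\rg^\perp=\ann(h)^\circledast$ cleanly from Proposition \ref{kh}, since one must match the bookkeeping of the reversed-coordinate vectors $(b_{m-1},\dots,b_0)$ in that proposition with the polynomial operation $(-)^\circledast$; the rest is formal manipulation of the involution.
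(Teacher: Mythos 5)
Your proposal is correct and takes essentially the same route as the paper: both arguments hinge on reading Proposition \ref{kh} as $\ann(g)=\lg g^\circledast\rg^\perp$, writing $C^\perp$ as an intersection of duals of principal ideals, moving $\circledast$ across annihilators via the multiplicativity-up-to-units of Lemma \ref{fg*}\pref{fg*1}, and obtaining the second equality from Lemma \ref{fg*}\pref{fg*3}. The only cosmetic difference is bookkeeping: the paper intersects $\lg g\rg^\perp=\ann(g^\circledast)$ over \emph{all} $g\in C$ and identifies the result as $\ann(C^\circledast)=\ann^\circledast(C)$, whereas you intersect over a finite generating set and then commute $\circledast$ (an injection, hence intersection-preserving) past the intersection --- the same underlying facts either way.
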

\begin{proof}
By Lemma \ref{fg*} $\ann ^\circledast(C)=\lg \ann^{\ast}(C)\rg$. Also as $(fg)^{\circledast}=0$ in $\mathcal{T}$ \ifof $fg=0$ in $\mathcal{S}$, it follows that
$\ann^\circledast(C)=\ann(C^\circledast)$. Note that Proposition \ref{kh}, indeed says that $\ann(g)=\lg g^\circledast \rg^\perp$. Now from
 $$C^\perp= \bigcap_{g\in C} \lg g \rg^\perp = \bigcap_{g\in C} \ann(g^\circledast)=\ann(C^\circledast),$$
the claim follows.
\end{proof}

Let $C=\ll g_1(x,y)\prod  _{i\neq 1} f_i(x),\ldots ,g_{\eta}(x,y)\prod \limits _{i\neq \eta} f_i(x) \gg$ be a $(\lambda
, \delta)$-consta\-cyc\-lic code over $\F$. Assume that
\begin{equation}\label{e10}
h_i(x,y)=\frac{y^m-\delta}{g_i(x,y)}
\end{equation}
in $K_i[y]$. If $g_i(x,y)=0$, we assume that $h_i(x,y)=1$. Suppose that $h_i^{\sharp}(x,y)$ is the monic polynomial in
$\F[x,y]$ such that
\begin{equation}\label{e99}
 h_i^{\sharp}(x,y))=\frac{h_i^{\ast}(x,y)}{h_i(x,0)}
\end{equation}
in $\frac{\F[x]}{\langle f^{\sharp}_i(x)\rangle}[y]$. With this notations, we have the following theorem that gives the
generating set of the dual of the code $C$.
\begin{thm}\label{p4}
Let $C=\ll g_1(x,y)\prod  _{i\neq 1} f_i(x),\ldots ,g_{\eta}(x,y)\prod  _{i\neq \eta} f_i(x) \gg$ be a $(\lambda ,
\delta)$-consta\-cyc\-lic code over $\F$. Then
\begin{center}
$C^{\perp}=\ll h_1^{\sharp}(x,y)\prod \limits _{i\neq 1} f^{\sharp}_i(x),h_2^{\sharp}(x,y)\prod \limits _{i\neq 2}
f^{\sharp}_i(x),\ldots ,h_{\eta}^{\sharp}(x,y)\prod \limits _{i\neq \eta} f^{\sharp}_i(x) \gg$.
\end{center}
\end{thm}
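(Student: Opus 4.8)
The plan is to obtain $C^\perp$ directly from Lemma \ref{l1}, which already gives $C^\perp = \ann^\circledast(C) = \lg \ann^\ast(C)\rg$; thus everything reduces to computing the annihilator $\ann(C)$ inside $\mathcal{S}$ and then transporting its generating set to $\mathcal{T}$ through the operation $\circledast$.

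First I would compute $\ann(C)$ componentwise. Since $\psi$ is a ring isomorphism and $\psi(C) = \bigoplus_{j=1}^\eta \lg g_j\rg$, annihilators split over the CRT decomposition, so $\psi(\ann C) = \bigoplus_{j=1}^\eta \ann_{\mathcal{S}_j}(\lg g_j\rg)$. In each summand $\mathcal{S}_j = K_j[y]/\lg y^m-\delta\rg$ the base ring $K_j$ is a field and $y^m - \delta = g_j h_j$ with $h_j$ as in \eqref{e10}; a short divisibility argument in the domain $K_j[y]$ then yields $\ann_{\mathcal{S}_j}(\lg g_j\rg) = \lg h_j\rg$. As each $h_j$ is monic in $y$ and divides $y^m-\delta$, the uniqueness statement of Theorem \ref{p2} identifies $\ann(C) = \ll h_1\prod_{i\neq 1} f_i, \ldots, h_\eta \prod_{i\neq \eta} f_i\gg$ as an ideal of $\mathcal{S}$.

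Next I would push this generating set through $\circledast$. Because $\circledast$ is additive and multiplicative up to units (Lemma \ref{fg*}\pref{fg*1}), applying it to a set of ideal generators produces a set of ideal generators for $\ann^\circledast(C) = C^\perp$. I would then simplify: by Lemma \ref{fg*}\pref{fg*2} and \eqref{e8}, $(h_j\prod_{i\neq j} f_i)^\circledast$ equals a unit of $\mathcal{T}$ times $h_j^\ast \prod_{i\neq j} f_i^\ast$, and since the scalars $f_i(0)$ and $h_j(x,0)$ are nonzero (the former because $x^n-\lambda$ has nonzero constant term, the latter because $h_j\mid y^m-\delta$) and hence units, this is a unit multiple of $h_j^\sharp \prod_{i\neq j} f_i^\sharp$. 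Unit multiples generate the same ideal, so $C^\perp = \lg h_j^\sharp \prod_{i\neq j} f_i^\sharp : 1\le j\le \eta\rg$.

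Finally, to match the canonical $\ll\cdots\gg$ form in the statement, I must check that the generators obey the conditions of Theorem \ref{p2} in $\mathcal{T}$, i.e.\ that each $h_j^\sharp$ is monic in $y$ and divides $y^m-\delta^{-1}$ in $K_j^\sharp[y]$, where $K_j^\sharp = \F[x]/\lg f_j^\sharp\rg$. Monicity holds by construction of $\sharp$; the divisibility follows by applying $\ast$ to $y^m-\delta = g_j h_j$, since $(y^m-\delta)^\ast$ is a unit multiple of $y^m-\delta^{-1}$ and $\ast$ is multiplicative up to units. The main obstacle is the bookkeeping of the two reciprocation operations together with the field switch from $K_j$ to $K_j^\sharp$: one has to be certain that the $\circledast$-image of an element supported on the $j$-th summand of $\mathcal{S}$ (divisible by $\prod_{i\neq j} f_i$) lands in the $j$-th summand of $\mathcal{T}$ (divisible by $\prod_{i\neq j} f_i^\sharp$), so that $h_j^\sharp$ is correctly paired with $f_j^\sharp$ and not a permuted factor, and that forming $h_j$ over $K_j$ and then taking $\sharp$ is consistent with working over $K_j^\sharp$. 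Both points rest on $f_j^\sharp$ being built directly from $f_j$, so the index $j$ is preserved throughout; carefully confirming this compatibility is the one genuinely delicate part of the argument.
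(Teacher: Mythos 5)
Your proof is correct, but it closes the argument along a genuinely different route than the paper, so a comparison is worth recording. Both arguments use the same toolkit---Lemma \ref{l1}, Lemma \ref{fg*}, and the canonical form of Theorem \ref{p2}---and both hinge on the elements $h_j(x,y)\prod_{i\neq j}f_i(x)$ annihilating $C$. The paper, however, never computes $\ann(C)$ exactly: it only checks that these particular elements lie in $\ann(C)$ (by multiplying them against the generators of $C$), deduces $D\subseteq C^{\perp}$ for $D$ the claimed ideal, and then forces equality with a dimension count, $\dim D=mn-\sum_j d_j(m-t_j)=\sum_j d_jt_j=\dim C^{\perp}$. You instead determine $\ann(C)$ completely---via the CRT splitting $\ann_{\mathcal{S}_j}(\lg g_j\rg)=\lg h_j\rg$ and the uniqueness in Theorem \ref{p2}---and then transport the entire generating set through $\circledast$ using $C^{\perp}=\ann^{\circledast}(C)$. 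Your route buys an explicit description of $\ann(C)$ (of independent interest) and avoids the dimension count; its cost is one extra verification the paper never needs, namely that $\circledast$ carries a generating set of an ideal to a generating set of the image ideal. That step does hold---additivity of $\circledast$, Lemma \ref{fg*}\pref{fg*1}, and the fact that $I^{\circledast}$ is an ideal (established inside the proof of Lemma \ref{fg*}\pref{fg*3}) give $\bigl(\sum_j r_ja_j\bigr)^{\circledast}=\sum_j u_jr_j^{\circledast}a_j^{\circledast}$ with the $u_j$ units---but it deserves to be spelled out, since it is exactly the point where a ``generators map to generators'' claim could silently fail. One small imprecision: $h_j(x,0)$ is not a scalar but a nonzero element of $K_j$, and what \eqref{e99} actually requires is its invertibility modulo $f_j^{\sharp}$ (not modulo $f_j$); this holds because its representative has degree less than $\deg f_j^{\sharp}=d_j$ and $f_j^{\sharp}$ is irreducible, a point the paper glosses over as well, so it is not a gap specific to your argument.
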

\begin{proof}
We can see that $h_j^{\sharp}(x,y)$, $0\leq j\leq \eta$, is a monic polynomial, by \eqref{e99}, and divides
$y^m-\delta^{-1}$ in $\frac{\F[x]}{\lg f_j^\sharp(x) \rg}[y]$, by \eqref{e10}. Let $D=\langle h_1^{\sharp}(x,y)\prod
_{i\neq 1} f^{\sharp}_i(x),\ldots ,h_{\eta}^{\sharp}(x,y)\prod  _{i\neq \eta} f^{\sharp}_i(x) \rangle$. For $l\neq j$,
we have $a(x,y)=\big(g_j(x,y)\prod  _{i\neq j} f_i(x)\big)\big(h_l(x,y)\prod  _{i\neq l} f_i(x)\big)=0$ in
$\mathcal{S}$, because, $x^n-\lambda \mid a(x,y)$ in $\F[x,y]$. For $l=j$, we have $h_l(x,y)g_l(x,y)=0$. So
$h_l(x,y)\prod _{i\neq l} f_i(x)\in \ann(C)$. Hence by Lemma \ref{l1}, $(h_l(x,y)\prod  _{i\neq l} f_i(x))^*\in
C^{\perp}$ for any $l$, $1\leq l\leq \eta$. So by Lemma \ref{fg*}\pref{fg*2},
$u h_l^{\ast}(x,y)\prod  _{i\neq l} f_i^{\ast}(x)\in C^{\perp}$ for some unit $u\in \mathcal{T}$ and hence
$v h_l^{\sharp}(x,y)\prod  _{i\neq l} f_i^{\sharp}(x)\in C^{\perp}$ for
some unit $v\in \mathcal{T}$. Consequently, $D\subseteq C^{\perp}$. Because $h(x,0)\neq 0\neq f_i(0)$, we have $\deg_y
h_i^\sharp=\deg_y h_i^*=\deg_y h_i$ and $\deg f_i^\sharp= \deg f_i$. So
  \begin{eqnarray*}
  \dim (D)&=& mn-\sum \limits ^{\eta}_{j=1}d_j(m-t_j)
  =\sum \limits ^{\eta}_{j=1}d_jt_j=\dim(C^{\perp}).
\end{eqnarray*}
Hence $C^{\perp}=D$.
\end{proof}
Now we can find the parity check matrix of the code $C$, as the following example shows.
\begin{ex}\label{ex2}
Suppose that $\mathcal{S}=\frac{\F_2[x,y]}{\langle x^3-1 , y^3-1\rangle}$ and $C$ is the code defined in Example
\ref{ex1}, that is,  $C= \ll (1+y+y^2)(x^2+x+1), (y+x)(x+1)\gg$. Hence
\begin{eqnarray*}
  h_1(x,y) &=& \frac{y^3-1}{y^2+y+1}=y+1\ \ \mathrm{in}\ \frac{\F_2[x]}{\langle x+1 \rangle}[y] \\
  h_1^{\sharp}(x,y) &=& \frac{y(\frac{1}{y}+1)}{0+1}=1+y \\
  h_2(x,y) &=& \frac{y^3-1}{y+x}=y^2+xy+x+1\ \ \mathrm{in}\ \frac{\F_2[x]}{\langle x^2+x+1 \rangle}[y] \\
 h_2^{\sharp}(x,y) &=& \frac{y^2 x(\frac{1}{y^2}+\frac{1}{x}\frac{1}{y}+\frac{1}{x}+1)}{x+1}=(y+x+1)(y+1) \\
\end{eqnarray*}
 We have $C^{\perp}= \ll (1+y)(x^2+x+1), (y+1)(y+x+1)(x+1)\gg$. Also, the parity check matrix of $C$ has the following form.
\begin{equation*}
{H}= \left[
\begin{array}{c}
(1+y)(x^2+x+1)\\
y(1+y)(x^2+x+1)\\
(y^2+xy+x+1)(x+1)\\
x(y^2+xy+x+1)(x+1)\\
\end{array} \right]
=\left[
\begin{array}{ccccccccc}
1& 1& 1 & 1 & 1& 1& 0& 0 &0\\
0& 0 &0 & 1 & 1&1& 1& 1 & 1\\
1& 0& 1 & 0 & 1& 1& 1& 1 &0\\
1& 1 &0& 1 & 0 & 1 & 0 & 1& 1\\
\end{array} \right].
\end{equation*}
Every two columns of $H$ are linearly independent and there are $3$ linearly dependent columns in $H$. So
$d_{\min}(C)=3$.
\end{ex}

Next we study when $C$ is self-dual, that is, $C=C^\perp$. To see why it is important to study and find self-dual codes see for example \cite[Section 3]{me}.
Note that if $C$ is self-dual, then it is both $(\lambda,
\delta)$-constacyclic and $(\lambda^{-1}, \delta^{-1})$-constacyclic.
\begin{lem}\label{diff delta}
Suppose that $C$ is both $(\lambda, \delta_1)$-constacyclic and $(\lambda', \delta_2)$-constacyclic, where
$\delta_1\neq \delta_2$ (but we may have $\lambda'=\lambda$). Then there exists a (one-dimensional)
$\lambda$-constacyclic code $C_0$ of length $n$ such that
 $$C=C_0^m= \{(\bc_0, \ldots, \bc_{m-1})|\forall 0\leq i<m: \quad \bc_i\in C_0\}.$$
Also in this case, $C$ is $(\lambda, \delta)$-constacyclic for every $\delta\in \F$. A similar result holds for
$(\lambda_1,\delta)$- and $(\lambda_2,\delta')$-constacyclic codes.
\end{lem}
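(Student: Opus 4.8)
The plan is to translate the two constacyclicity hypotheses into invariance of $C$ under concrete linear operators and then exploit the fact that $\delta_1\neq\delta_2$ to decouple the $m$ blocks. Being $(\lambda,\delta_1)$-constacyclic gives in particular $\Upsilon_{\delta_1}(C)=C$ (this is $\Theta^{1,0}_{\delta_1,\lambda}$) and invariance of $C$ under $\Theta^{0,1}_{\delta_1,\lambda}$, which applies $\tau_\lambda$ simultaneously to every block; being $(\lambda',\delta_2)$-constacyclic gives $\Upsilon_{\delta_2}(C)=C$. The one idea that makes everything work is to compose the two vertical shifts: since $C$ is invariant under both $\Upsilon_{\delta_1}$ and $\Upsilon_{\delta_2}$, it is invariant under $\Phi:=\Upsilon_{\delta_1}\Upsilon_{\delta_2}^{-1}$.

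First I would compute $\Phi$ explicitly. A direct calculation gives $\Upsilon_{\delta_2}^{-1}(\mathbf{b}_0,\ldots,\mathbf{b}_{m-1})=(\mathbf{b}_1,\ldots,\mathbf{b}_{m-1},\delta_2^{-1}\mathbf{b}_0)$, and hence $\Phi(\mathbf{b}_0,\ldots,\mathbf{b}_{m-1})=(\mu\mathbf{b}_0,\mathbf{b}_1,\ldots,\mathbf{b}_{m-1})$ where $\mu=\delta_1\delta_2^{-1}$. Because $\delta_1\neq\delta_2$ we have $\mu\neq 1$. Now for any codeword $\mathbf{b}=(\mathbf{b}_0,\ldots,\mathbf{b}_{m-1})\in C$, linearity of $C$ gives $\Phi(\mathbf{b})-\mathbf{b}=((\mu-1)\mathbf{b}_0,0,\ldots,0)\in C$, and dividing by the nonzero scalar $\mu-1$ shows $(\mathbf{b}_0,0,\ldots,0)\in C$. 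Thus $C$ contains the projection onto the zeroth block of each of its codewords.

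Next I would set $C_0=\{\mathbf{c}\in\F^n\mid(\mathbf{c},0,\ldots,0)\in C\}$, a linear subspace of $\F^n$, and prove $C=C_0^m$. For $C\subseteq C_0^m$: given $\mathbf{b}\in C$, applying $\Upsilon_{\delta_1}^{-i}$ (which preserves $C$) and then the projection above shows $(\mathbf{b}_i,0,\ldots,0)\in C$, i.e. $\mathbf{b}_i\in C_0$ for every $i$. For $C_0^m\subseteq C$: if each $\mathbf{c}_i\in C_0$ then $(\mathbf{c}_i,0,\ldots,0)\in C$, and applying $\Upsilon_{\delta_1}^{i}$ places $\mathbf{c}_i$ in block $i$ while keeping the other blocks zero, so summing over $i$ recovers $(\mathbf{c}_0,\ldots,\mathbf{c}_{m-1})\in C$. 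That $C_0$ is $\lambda$-constacyclic follows from the blockwise invariance under $\Theta^{0,1}_{\delta_1,\lambda}$: it sends $(\mathbf{c},0,\ldots,0)$ to $(\tau_\lambda(\mathbf{c}),0,\ldots,0)$, so $\mathbf{c}\in C_0$ forces $\tau_\lambda(\mathbf{c})\in C_0$.

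Finally, the remaining assertions are immediate from $C=C_0^m$. For any nonzero $\delta$ the operator $\Upsilon_\delta$ only scales one block by $\delta$ and permutes the blocks, while $\tau_\lambda$ acts blockwise; since $C_0$ is a $\tau_\lambda$-invariant subspace, all of these preserve $C_0^m$, so every $\Theta^{j,i}_{\delta,\lambda}$ fixes $C$ and $C$ is $(\lambda,\delta)$-constacyclic. The stated symmetric result (two codes sharing the first parameter but with $\lambda_1\neq\lambda_2$) is handled by the same argument after interchanging the roles of the within-block shift $\tau$ and the block shift $\Upsilon$: one composes $\tau_{\lambda_1}\tau_{\lambda_2}^{-1}$ blockwise to isolate a single coordinate position, obtaining the analogous column-wise decomposition. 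I expect no serious obstacle; the only real content is spotting that composing the two differently-scaled shifts yields a diagonal operator scaling a single block, after which linearity does all the work. The one point requiring care is the index bookkeeping when passing between $(\mathbf{b}_i,0,\ldots,0)$ and its images under powers of $\Upsilon_{\delta_1}$.
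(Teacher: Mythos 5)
Your proof is correct and follows essentially the same route as the paper: the key step in both is to exploit $\delta_1\neq\delta_2$ by differencing the two block shifts (the paper computes $\Upsilon_{\delta_1}(u)-\Upsilon_{\delta_2}(u)=((\delta_1-\delta_2)\bc_{m-1},0,\ldots,0)$, which is the same computation as your $\Phi(\mathbf{b})-\mathbf{b}$ with $\Phi=\Upsilon_{\delta_1}\Upsilon_{\delta_2}^{-1}$), thereby isolating a single block, after which linearity and block shifting give $C=C_0^m$ exactly as you argue.
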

\begin{proof}
Let $C_0$ be the set of $\bc\in \F^n$ such that $\bc$ appears as the $i$-th entry of a codeword of $C$. Note that since
$C$ is $\F$-linear and constacyclic, $C_0$ is independent of the choice of $i$. Also it is routine to see that $C_0$ is
a $\lambda$-constacyclic code of length $n$ and $C\subseteq C_0^m$. Conversely, if $\bc\in C_0$, say $u=(\bc_0,
\ldots,\bc_{m-1})\in C$ with $\bc=\bc_{m-1}$, then as $C$ is both  $(\lambda, \delta_1)$- and  $(\lambda',
\delta_2)$-constacyclic, $v=(\delta_1\bc, \ldots, \bc_{m-2})$ and $w=(\delta_2\bc, \ldots, \bc_{m-2})$ are in $C$ and
hence $v-w=((\delta_1-\delta_2) \bc, 0, \ldots, 0)\in C$. Thus by $F$-linearity and being constacyclic, we see that
$(0, \ldots, 0, \bc, 0, \ldots, 0)\in C$, where the position of $\bc$ is arbitrary. Again as $\bc\in C_0$ was chosen
arbitrarily and by linearity, we deduce that $C_0^m\subseteq C$.
\end{proof}

\begin{prop}\label{diff delta dual}
Suppose that $C$ is a $(\lambda, \delta)$-constacyclic code of length $nm$ and $\delta^2\neq 1$. Then $C$ is self-dual
\ifof $C=C_0^m$ for some self-dual $\lambda$-constacyclic code $C_0$ of length $n$. A similar result holds when instead
of $\delta^2\neq 1$ we assume $\lambda^2\neq 1$.
\end{prop}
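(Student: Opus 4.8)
The plan is to leverage Lemma \ref{diff delta} to reduce a two-dimensional self-duality question to a one-dimensional one. First I would prove the backward direction, which is the cleaner one. Assume $C=C_0^m$ for a self-dual $\lambda$-constacyclic code $C_0$ of length $n$ (note that $C_0$ being self-dual forces $\lambda^2=1$, i.e.\ $\lambda=\lambda^{-1}$, so the ambient ring for the dual is consistent). Since the Euclidean inner product on $\F^{nm}$ splits as a sum of $m$ independent inner products on the $m$ blocks of length $n$, a vector $(\bc_0,\ldots,\bc_{m-1})$ is orthogonal to all of $C_0^m$ if and only if each $\bc_i$ is orthogonal to all of $C_0$; hence $(C_0^m)^\perp=(C_0^\perp)^m=C_0^m=C$, giving self-duality directly.

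The forward direction is where the hypothesis $\delta^2\neq 1$ does the real work, and I expect this to be the main obstacle. Suppose $C$ is self-dual. Then $C=C^\perp$ is simultaneously $(\lambda,\delta)$-constacyclic and, by \cite[Proposition 2.2]{rajabi}, $(\lambda^{-1},\delta^{-1})$-constacyclic. The key point is that $\delta\neq\delta^{-1}$, which is exactly the condition $\delta^2\neq 1$. This lets me invoke Lemma \ref{diff delta} with $\delta_1=\delta$ and $\delta_2=\delta^{-1}$ (and $\lambda'=\lambda^{-1}$) to conclude that $C=C_0^m$ for some $\lambda$-constacyclic code $C_0$ of length $n$. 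It then remains to check that $C_0$ is self-dual: applying the block-splitting of the inner product used above, $C=C^\perp=(C_0^m)^\perp=(C_0^\perp)^m$, and since the map $C_0\mapsto C_0^m$ is injective, $C_0=C_0^\perp$. One subtlety to address is that Lemma \ref{diff delta} guarantees $C_0$ is $\lambda$-constacyclic but the self-duality of $C_0$ must be taken with respect to the correct ambient length-$n$ constacyclic structure; since self-duality forces $\lambda=\lambda^{-1}$, this is automatically consistent and no clash arises.

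The final sentence of the statement asserts the analogous result with $\lambda^2\neq 1$ replacing $\delta^2\neq 1$. I would dispose of this by symmetry: the roles of the two coordinates are interchangeable, and the ``similar result'' clause of Lemma \ref{diff delta} covers the case of differing $\lambda$-parameters with a common $\delta$. Running the identical argument with $\lambda,\lambda^{-1}$ in place of $\delta,\delta^{-1}$ produces $C=C_0^{\prime\, n}$ as a direct sum over the other coordinate, with $C_0'$ a self-dual $\delta$-constacyclic code of length $m$; I would simply remark that the proof is analogous rather than rewriting it. The one genuine check throughout is the interplay between the hypotheses $\delta^2\neq 1$ (or $\lambda^2\neq 1$) ensuring $\delta\neq\delta^{-1}$ so that Lemma \ref{diff delta} applies, and the constraint $\lambda^2=1$ (respectively $\delta^2=1$) that self-duality itself imposes so that the dual lives in the same ring; confirming these two are compatible, not contradictory, is the crux of the argument.
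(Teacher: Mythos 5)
Your proposal is correct and follows essentially the same route as the paper: both invoke \cite[Proposition 2.2]{rajabi} to see that $C=C^\perp$ is $(\lambda^{-1},\delta^{-1})$-constacyclic, use $\delta\neq\delta^{-1}$ to apply Lemma \ref{diff delta} and obtain $C=C_0^m$, and then reduce self-duality of $C_0^m$ to that of $C_0$. The only difference is that you spell out, via the block-splitting identity $(C_0^m)^\perp=(C_0^\perp)^m$, the verification the paper dismisses as ``routine.''
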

\begin{proof}
By \cite[Propositin 2.2]{rajabi}, $C=C^\perp$  is a $(\lambda^{-1},\delta^{-1})$-consta\-cyc\-lic code and as
$\delta^{-1}\neq \delta$, we can apply \ref{diff delta} to see that $C=C_0^m$. It is routine to check that $C_0^m$ is
self-dual \ifof $C_0$ is self-dual.
\end{proof}

Thus to see when a constacyclic code $C$ is self-dual, it remains to consider the case that $\lambda^2=\delta^2=1$. So
assume that $\lambda^2=\delta^2=1$ and $C$ is a $(\lambda , \delta)$-consta\-cyc\-lic code of length $nm$. Also suppose
that $f_i^\sharp=f_i$ for all of the irreducible factors $f_i$ of $x^n-\lambda$. This is because in this case,
$\prod_{i\neq j}f_i^\sharp=\prod_{i\neq j}f_i$ for all $j$ and hence according to \ref{p4}, $C$ is self-dual \ifof
$g_j(x,y)=h_j^\sharp(x,y)$ for all $j$. Note that since $\delta=\pm 1$, we have $y^m-\delta=(y^{m'}-\delta)^{p^s}$. Let
in $K_j[y]$, $y^{m'}-\delta =\prod_{l=1}^{t_j}h_{jl}(x,y)$, where $h_{jl}(x,y), 1\leq l \leq t_j,$ are monic
irreducible coprime polynomials in $K_j[y]$. Assume that $h_{jl}(x,y)=h_{jl}^{\sharp}(x,y)$  for $1\leq l \leq a_j$ and
$h_{jl}^\sharp \neq h_{jl}$ for $a_j<l$. Since $(y^{m'}-\delta)^\sharp=y^{m'}-\delta^{-1}= y^{m'}-\delta$, so for each
$1\leq l\leq t_j$, we have $h_{jl}^\sharp= h_{jl'}$ for some $1\leq l'\leq t_j$. Thus we can suppose that

\begin{equation}\label{e21}
y^{m'}-\delta
=\prod_{l=1}^{a_j}h_{jl}(x,y)\prod_{l=a_j+1}^{b_j}h_{jl}(x,y)\prod_{l=a_j+1}^{b_j}h_{jl}^{\sharp}(x,y).
\end{equation}

 \begin{thm}\label{t11}
Let $p=2$, $s>0$, $f_i^{\sharp}(x)=f_i(x)$ for all $i$, and
   $$C=\ll g_1(x,y)\prod_{i\neq 1} f_i(x),g_2(x,y)\prod_{i\neq 2} f_i(x),\ldots,g_{\eta}(x,y)\prod_{i\neq \eta} f_i(x) \gg$$
be a $(\lambda , \delta)$-consta\-cyc\-lic code of length $n(2^sm')$ over $\F$, where $ \lambda^2=\delta^2=1$. The code
$C$ is self-dual if and only if for every $j$,
  \begin{equation}\label{e20}
     g_j(x,y)=\prod_{l=1}^{a_j}h_{jl}^{2^{s-1}}(x,y)\prod_{l=a_j+1}^{b_j}h_{jl}^{\alpha_{jl}}(x,y)
   \prod_{l=a_j+1}^{b_j}(h_{jl}^{\sharp })^{2^s-\alpha_{jl}}(x,y),
  \end{equation}
   for some $\alpha_{jl}$, $0\leq \alpha_{jl} \leq 2^s$.
 \end{thm}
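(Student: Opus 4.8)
The plan is to reduce self-duality to a pointwise condition on the generator polynomials $g_j$ and then to read off \eqref{e20} by comparing exponents in the unique factorization of $y^m-\delta$ over each $K_j$. First I would invoke the reduction established just before the statement: since $\lambda^2=\delta^2=1$ the rings $\mathcal{S}$ and $\mathcal{T}$ coincide, and since $f_i^\sharp=f_i$ for all $i$ the $\sharp$-images of the $f$-factors appearing in Theorem \ref{p4} agree with the originals, so by the uniqueness of the generating set in Theorem \ref{p2} the code $C$ is self-dual if and only if $g_j(x,y)=h_j^\sharp(x,y)$ for every $j$, where $h_j$ is defined by \eqref{e10}. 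Thus it suffices to characterise, for each fixed $j$, the monic-in-$y$ divisors $g_j$ of $y^m-\delta$ in $K_j[y]$ that satisfy $g_j=h_j^\sharp$.

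Next I would fix the factorization against which the comparison is made. Because $p=2$ and $m=2^sm'$ we have $y^m-\delta=(y^{m'}-\delta)^{2^s}$ in $K_j[y]$, so raising \eqref{e21} to the $2^s$-th power gives
\begin{equation*}
y^m-\delta=\prod_{l=1}^{a_j}h_{jl}^{2^s}\prod_{l=a_j+1}^{b_j}h_{jl}^{2^s}\prod_{l=a_j+1}^{b_j}(h_{jl}^{\sharp})^{2^s}
\end{equation*}
as a factorization into powers of distinct monic irreducibles of $K_j[y]$. Since $g_j\mid y^m-\delta$ and $K_j[y]$ is a UFD, I can write $g_j=\prod_{l=1}^{a_j}h_{jl}^{\beta_l}\prod_{l=a_j+1}^{b_j}h_{jl}^{\gamma_l}\prod_{l=a_j+1}^{b_j}(h_{jl}^{\sharp})^{\epsilon_l}$ with all exponents in the range $[0,2^s]$, and then $h_j=(y^m-\delta)/g_j$ carries the complementary exponents $2^s-\beta_l$, $2^s-\gamma_l$, $2^s-\epsilon_l$ on the same irreducibles.

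The key step, and the one I expect to be the main obstacle, is to control the effect of the $\sharp$ operation on this factorization. I would prove that on monic-in-$y$ divisors of $y^m-\delta$ the map $h\mapsto h^\sharp$ is a multiplicative involution carrying monic irreducibles to monic irreducibles: multiplicativity up to a unit comes from Lemma \ref{fg*}\pref{fg*2}, and since the constant-in-$y$ terms here are nonzero (so $\deg_y$ is preserved) the monic normalization in \eqref{e99} makes it exactly multiplicative on these divisors; it preserves irreducibility because it is an involution, and it sends divisors of $y^{m'}-\delta$ to divisors of $(y^{m'}-\delta)^\sharp=y^{m'}-\delta$. By the choice made in \eqref{e21} this involution fixes $h_{jl}$ for $1\le l\le a_j$ and interchanges $h_{jl}$ with $h_{jl}^\sharp$ for $a_j<l\le b_j$. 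Applying it to the factorization of $h_j$ then yields
\begin{equation*}
h_j^\sharp=\prod_{l=1}^{a_j}h_{jl}^{2^s-\beta_l}\prod_{l=a_j+1}^{b_j}h_{jl}^{2^s-\epsilon_l}\prod_{l=a_j+1}^{b_j}(h_{jl}^{\sharp})^{2^s-\gamma_l}.
\end{equation*}
Comparing this with $g_j$ and using unique factorization, the condition $g_j=h_j^\sharp$ becomes $\beta_l=2^s-\beta_l$ for $1\le l\le a_j$, which forces $\beta_l=2^{s-1}$, together with the single relation $\gamma_l+\epsilon_l=2^s$ for $a_j<l\le b_j$. Writing $\alpha_{jl}=\gamma_l$, so that $\epsilon_l=2^s-\alpha_{jl}$, gives precisely \eqref{e20}; conversely every choice of exponents of that shape manifestly satisfies $g_j=h_j^\sharp$. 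Combining the equivalence over all $j$ completes the proof.
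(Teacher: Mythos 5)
Your proposal is correct and takes essentially the same approach as the paper: both reduce self-duality to the pointwise condition $g_j(x,y)=h_j^{\sharp}(x,y)$ (via Theorem \ref{p4} and the hypothesis $f_i^{\sharp}=f_i$), expand $g_j$ in the factorization \eqref{e21} of $(y^{m'}-\delta)^{2^s}$, and compare exponents under the $\sharp$-involution to force the exponent $2^{s-1}$ on the $\sharp$-fixed factors and complementary exponents on the swapped pairs. The only differences are cosmetic: the paper phrases the comparison as $y^m-\delta=g_jg_j^{\sharp}$ whereas you equate $h_j^{\sharp}$ with $g_j$ directly, and you additionally spell out the multiplicativity and irreducibility-preservation of $\sharp$ on divisors of $y^m-\delta$, which the paper uses implicitly.
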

 \begin{proof}
Assume that $C=C^{\perp}$. So for any $j$, we have $g_j(x,y)= h_j^{\sharp}(x,y)$ or equivalently
$g_j^\sharp(x,y)=h_j(x,y)$. This means,
   $$g_j(x,y)=\frac{y^m-\delta}{h_j(x,y)} =\frac{y^m-\delta}{g_j^{\sharp}(x,y)}.$$
   If
   $$ g_j(x,y)=\prod_{l=1}^{a_j}h_{jl}^{\gamma_{jl}}(x,y)\prod_{l=a_j+1}^{b_j}h_{jl}^{\alpha_{jl}}(x,y)
   \prod_{l=a_j+1}^{b_j}(h_{jl}^{\sharp })^{\beta_{jl}}(x,y),$$
   then
$$ g_j^\sharp(x,y)= \prod_{l=1}^{a_j}h_{jl}^{\gamma_{jl}}(x,y)\prod_{l=a_j+1}^{b_j}(h_{jl}^\sharp)^{\alpha_{jl}}(x,y)
   \prod_{l=a_j+1}^{b_j}h_{jl}^{\beta_{jl}}(x,y).$$ Hence
   \begin{eqnarray*}
    y^m-\delta &=&  g_j(x,y)g_j^{\sharp}(x,y) \\
     &=& \prod_{l=1}^{a_j}h_{jl}^{2\gamma_{jl}}(x,y)
   \prod_{l=a_j+1}^{b_j}h_{jl}^{\alpha_{jl}+\beta_{jl}}(x,y)
   \prod_{l=a_j+1}^{b_j}(h_{jl}^{\sharp })^{\alpha_{jl}+\beta_{jl}}(x,y).
   \end{eqnarray*}
 So $2\gamma_{jl}= \alpha_{jl}+\beta_{jl}=2^s$. Therefore, $\gamma_{jl}=2^{s-1}$ and $ \alpha_{jl}=2^s-\beta_{jl}$.
 Conversely, suppose that $g_j(x,y)$ is of the form \eqref{e20}. So
 $$ h_j(x,y)=\frac{y^m-\delta}{g_j(x,y)}=\prod_{l=1}^{a_j}h_{jl}^{2^{s-1}}(x,y)
 \prod_{l=a_j+1}^{b_j}h_{jl}^{2^s-\alpha_{jl}}(x,y)
   \prod_{l=a_j+1}^{b_j}(h_{jl}^{\sharp })^{\alpha_{jl}}(x,y).$$
   Hence
   $$h_j^{\sharp}(x,y)=\prod_{l=1}^{a_j}h_{jl}^{2^{s-1}}(x,y)
 \prod_{l=a_j+1}^{b_j}(h_{jl}^{\sharp })^{2^s-\alpha_{jl}}(x,y)
   \prod_{l=a_j+1}^{b_j}h_{jl}^{\alpha_{jl}}(x,y)=g_j(x,y).$$
   Thus $C=C^{\perp}$.
 \end{proof}
\begin{ex}\label{ex4}
Suppose that $\mathcal{S}=\frac{\F_2[x,y]}{\langle x^3-1 , y^{12}-1\rangle}$. We have
 \begin{eqnarray*}
x^3-1 &=&(x+1)(x^2+x+1)\ \ \ \ \ \mathrm{in}\ \F_2[x]\\
y^{12}-1&=&(y+1)^4(y^2+y+1)^4\ \ \ \ \ \mathrm{in}\ \frac{\F_2[x]}{\langle x+1\rangle}[y]\\
y^{12}-1&=&(y+1)^4(y+x)^4(y+x+1)^4\ \ \ \ \ \mathrm{in}\ \frac{\F_2[x]}{\langle x^2+x+1\rangle}[y].
\end{eqnarray*}
Let  $C=\ll (y^2+y+1)^2 (y+1)^2(x^2+x+1), (y+1)^2(y+x)^3(y+x+1)(x+1)\gg $
be a 2D consta\-cyc\-lic code of length 36 over $\F_2$. By Theorem \ref{t11}, $C$  is a self-dual code. But the code
 $$D= \ll (y^2+y+1) (y+1)^2(x^2+x+1), (y+1)^2(y+x)^2(y+x+1)(x+1)\gg $$ is not self-dual. We have $$D^{\perp}=\ll (y^2+y+1)^3 (y+1)^2(x^2+x+1), (y+1)^2(y+x)^3(y+x+1)^2(x+1)\gg. $$
   Note that $(y+x)^{\sharp}=y+x+1$.
\end{ex}
The following theorem shows that, for an odd prime number $p$ the existence of self-dual one-sided repeated root  2D
consta\-cyc\-lic codes  depends on the factorization of $y^{m'}-\delta$.
\begin{thm}\label{t12}
Assume that $f_i^{\sharp}(x)=f_i(x)$, for all $i$ and $ \lambda^2=\delta^2=1$. Let $p$ be an odd prime number or $s=0$.
There exists a self-dual 2D $(\lambda, \delta)$-consta\-cyc\-lic code of length $nm=n(p^sm')$ over $\F$ if and only if
in \eqref{e21}, $a_j=0$ for all $j$. In this case, a code
  $$C=\ll g_1(x,y)\prod_{i\neq 1} f_i(x),g_2(x,y)\prod_{i\neq 2} f_i(x),\ldots,g_{\eta}(x,y)\prod_{i\neq \eta} f_i(x) \gg$$
  is self-dual if and only if
   $$  g_j(x,y)=\prod_{l=1}^{b_j}h_{jl}^{\alpha_{jl}}(x,y)
   \prod_{l=1}^{b_j}(h_{jl}^{\sharp })^{p^s-\alpha_{jl}}(x,y),$$
   for some $\alpha_{jl}$, $0\leq \alpha_{jl} \leq p^s$.
\end{thm}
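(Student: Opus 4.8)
The plan is to follow the template of Theorem \ref{t11}, reducing self-duality to a purely combinatorial condition on the exponents appearing in the factorization \eqref{e21}, and then to isolate the parity constraint that separates the odd case from the even one. First, since $f_i^\sharp=f_i$ for every $i$, the discussion preceding Theorem \ref{t11} together with Theorem \ref{p4} shows that $C$ is self-dual if and only if $g_j(x,y)=h_j^\sharp(x,y)$ for all $j$, where $h_j=(y^m-\delta)/g_j$. Applying $\sharp$ and using that it is multiplicative (Lemma \ref{fg*}\pref{fg*2}) and involutive on the monic divisors of $y^m-\delta$, this is equivalent to the single identity $g_j(x,y)\,g_j^\sharp(x,y)=y^m-\delta$ holding in $K_j[y]$ for every $j$, exactly as in the even case.

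The next step is to make both sides explicit. Because $\delta=\pm1$ and either $p$ is odd or $s=0$, in characteristic $p$ we have $\delta^{p^s}=\delta$, whence $y^m-\delta=(y^{m'}-\delta)^{p^s}$; combined with \eqref{e21} this yields the irreducible factorization of $y^m-\delta$ in $K_j[y]$ with every factor raised to the power $p^s$. Writing a candidate generator as $g_j=\prod_{l=1}^{a_j}h_{jl}^{\gamma_{jl}}\prod_{l=a_j+1}^{b_j}h_{jl}^{\alpha_{jl}}\prod_{l=a_j+1}^{b_j}(h_{jl}^\sharp)^{\beta_{jl}}$, and noting that $\sharp$ fixes each self-reciprocal factor $h_{jl}$ for $l\leq a_j$ while interchanging $h_{jl}$ and $h_{jl}^\sharp$ within each conjugate pair, I would compute $g_j^\sharp$ and read off the self-duality identity factor by factor. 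Matching exponents then forces $2\gamma_{jl}=p^s$ for each $l\leq a_j$ and $\alpha_{jl}+\beta_{jl}=p^s$ for each pair $a_j<l\leq b_j$.

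The crux, and the genuine difference from Theorem \ref{t11}, is an elementary parity argument: when $p$ is odd (or $s=0$) the exponent $p^s$ is odd, so the equation $2\gamma_{jl}=p^s$ has no solution in nonnegative integers. Hence a self-dual code can exist only if no self-reciprocal factor occurs in any component, that is $a_j=0$ for all $j$, which gives necessity. For sufficiency I would observe that once all $a_j=0$ the only surviving constraints are $\alpha_{jl}+\beta_{jl}=p^s$, which are satisfiable by taking $\beta_{jl}=p^s-\alpha_{jl}$ for any $0\leq\alpha_{jl}\leq p^s$, and that each such choice produces a genuine monic divisor $g_j$ of $y^m-\delta$ with $g_j g_j^\sharp=y^m-\delta$; this establishes both the existence statement and the claimed parametrization $g_j=\prod_{l=1}^{b_j}h_{jl}^{\alpha_{jl}}\prod_{l=1}^{b_j}(h_{jl}^\sharp)^{p^s-\alpha_{jl}}$. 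I expect the only delicate point to be the bookkeeping around the normalization $\sharp$, namely monicity and the units appearing in Lemma \ref{fg*}, which is routine; the essential new ingredient is the odd-parity obstruction forcing $a_j=0$.
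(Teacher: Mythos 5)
Your proposal is correct and follows essentially the same route as the paper: reduce self-duality (via $f_i^\sharp=f_i$ and Theorem \ref{p4}) to $g_j=h_j^\sharp$, i.e.\ $g_j g_j^\sharp=y^m-\delta$ in $K_j[y]$, match exponents against the factorization \eqref{e21} to get $2\gamma_{jl}=\alpha_{jl}+\beta_{jl}=p^s$, and invoke the odd-parity obstruction $2\gamma_{jl}=p^s$ to force $a_j=0$, with sufficiency given by the explicit construction $\beta_{jl}=p^s-\alpha_{jl}$. The only difference is cosmetic (you argue necessity before sufficiency, while the paper does the reverse), so no further comparison is needed.
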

\begin{proof}
Assume that $p$ is an odd prime number and
   $$y^{m'}-\delta =\prod_{l=1}^{b_j}h_{jl}(x,y)\prod_{l=1}^{b_j}h_{jl}^{\sharp}(x,y).$$
   Consider the polynomials $g_j(x,y)=\prod_{l=1}^{b_j}h_{jl}^{\alpha_{jl}}(x,y)
   \prod_{l=1}^{b_j}(h_{jl}^{\sharp })^{p^s-\alpha_{jl}}(x,y)$, for $j$, $1\leq j \leq \eta$. Thus
    $$ h_j(x,y)=\frac{y^m-\delta}{g_j(x,y)}=\prod_{l=1}^{b_j}h_{jl}^{p^s-\alpha_{jl}}(x,y)
   \prod_{l=11}^{b_j}(h_{jl}^{\sharp })^{\alpha_{jl}}(x,y).$$
   Hence
   $$h_j^{\sharp}(x,y)=\prod_{l=1}^{b_j}(h_{jl}^{\sharp })^{p^s-\alpha_{jl}}(x,y)
   \prod_{l=1}^{b_j}h_{jl}^{\alpha_{jl}}(x,y)=g_j(x,y).$$
So $C=\ll g_1(x,y)\prod_{i\neq 1} f_i(x),\ldots,g_{\eta}(x,y)\prod_{i\neq \eta} f_i(x) \gg$ is a self-dual code.
Conversely, suppose that $y^{m'}-\delta $ is of the form \eqref{e21} with $a_j\neq 0$ for some $j$ and $C=\ll
g_1(x,y)\prod_{i\neq 1} f_i(x),\ldots,g_{\eta}(x,y)\prod_{i\neq \eta} f_i(x) \gg$ is a self-dual code. For any $j$, we
have
  \begin{equation*}
     g_j(x,y)=\prod_{l=1}^{a_j}h_{jl}^{\gamma_{jl}}(x,y)\prod_{l=a_j+1}^{b_j}h_{jl}^{\alpha_{jl}}(x,y)
   \prod_{l=a_j+1}^{b_j}(h_{jl}^{\sharp })^{\beta_{jl}}(x,y),
  \end{equation*}
for some $\alpha_{jl} $, $\gamma_{jl} $ and $\beta_{jl} $, $0\leq \alpha_{jl}, \gamma_{jl}, \beta_{jl} \leq p^s$. Now,
similar to the proof Theorem \ref{t11}, we have $\alpha_{jl}+ \beta_{jl} =2\gamma_{jl}=p^s$. Since $p$ is an odd prime
number or $p^s=1$, this is impossible. So $a_j=0$ and $g_j(x,y)$ has the claimed form for all $j$.
\end{proof}
\begin{ex}\label{ex5}
 Suppose that $\mathcal{S}=\frac{\F_9[x,y]}{\langle x^2-1 , y^2+1\rangle}$. We have
 $$ x^2-1=(x-1)(x+1) \ \mathrm{and}\ y^2+1=(y+\alpha)(y+\alpha^{-1}),$$
 where $\alpha^2=-1$ in $\F_9$. Since
 $$ (x+1)^{\sharp}=x+1 \ \mathrm{and}\ (x-1)^{\sharp}=x-1,$$
 we can use Theorem \ref{t12}, to determine self-dual codes over $\F_9$. By Theorem \ref{t12},
 $$D_1=\ll (y+\alpha)(x-1), (y+\alpha^{-1})(x+1)\gg$$
 and
 $$D_2=\ll (y+\alpha^{-1})(x-1), (y+\alpha)(x+1)\gg$$
are self-dual codes. Note that we also have $\mathcal{S} \cong \frac{\F_9[x,y]}{\langle x^2+1 , y^2-1\rangle}$, but if
we view $S$ as this ring, then since $(x+\alpha)^{\sharp}\neq (x+\alpha)$, we can not use Theorem \ref{t12}.
\end{ex}

\section{Asymptotic badness of one-sided repeated-root  2D consta\-cyc\-lic codes}

Theorem \ref{p2} enables us to state and prove 2D versions of some known results on repeated root (one dimensional)
cyclic and constacyclic codes. One such result is \cite[Theorem 4]{casta}, which states that repeated-root cyclic codes
cannot be asymptotically better than simple root cyclic codes. In this section, similar to \cite{casta},
we relate the minimum distance of a one-sided repeated-root 2D consta\-cyc\-lic code to some
simple-root 2D consta\-cyc\-lic codes and from this, we deduce that one-sided repeated-root $(\lambda ,
\delta)$-consta\-cyc\-lic codes can not be asymptotically better than simple-root 2D consta\-cyc\-lic codes. It should
be mentioned that the ideas and techniques used here are quite similar to those used in sections $III$ and $IV$ of
\cite{casta} and hence we do not state most of the proofs here.

Note that for any $\delta \in \F$, there exists a $\delta_0 \in \F$ with the property  that $\delta=\delta_0^{p^s}$
(see, for example \cite[Lemma 3.1]{batoul}). Let $C=\ll g_1(x,y)\prod  _{i\neq 1} f_i(x),\ldots ,g_{\eta}(x,y)\prod
_{i\neq \eta} f_i(x) \gg$ be a $(\lambda , \delta)$-consta\-cyc\-lic code over $\F$. Recall that $C$ is an ideal of
$\mathcal{S}=\frac{\F[x,y]}{\langle x^{n}-\lambda ,y^{m}-\delta \rangle }$, where $\gcd(n,p)=1$, $m=m'p^s$ with
$\gcd(m',p)=1$ and $x^n-\lambda =\prod_{i=1}^{\eta}f_i(x)$ where $f_i(x), 1\leq i \leq \eta,$ are monic irreducible
coprime polynomials in $\F[x]$ and $ \deg f_i=d_i$. Suppose that in $K_j[y]$ we have $y^{m'}-\delta_0
=\prod_{l=1}^{t_j}h_{jl}(x,y)$, where $h_{jl}(x,y), 1\leq l \leq t_j,$ are monic irreducible coprime polynomials in
$K_j[y]$. Then as $g_j(x,y)$ is a divisor of $(y^{m'}-\delta_0)^{p^s}=y^m-\delta$ in $K_j[y]$, we have $g_j(x,y)=\prod
_{l=1}^{t_j}h_{jl}^{\alpha _{jl}}(x,y)$ for some $\alpha _{jl}$, $0\leq \alpha _{jl} \leq p^s$.
For $0\leq t \leq p^s-1$, let $\bar{g}_{jt}(x,y)$ be the product of those irreducible factors $h_{jl}(x,y)$ such that
$\alpha _{jl}>t$ in $g_j(x,y)$. So
\begin{center}
  $\bar{C}_t=\ll \bar{g}_{1t}(x,y)\prod \limits _{i\neq 1} f_i(x),\bar{g}_{2t}(x,y)\prod \limits _{i\neq 2} f_i(x),\ldots ,\bar{g}_{\eta t}(x,y)\prod \limits _{i\neq \eta} f_i(x) \gg$
\end{center}
is an ideal of $\bar{\mathcal{S}}=\frac{\F[x,y]}{\langle x^{n}-\lambda ,y^{m'}-\delta_0 \rangle }$ and a simple root 2D
constacyclic code. Note that, if $t'\geq t$, then $\bar{C}_{t'}\supseteq \bar{C}_{t}$.
\begin{ex}\label{ex3}
Suppose that $\mathcal{S}=\frac{\F_2[x,y]}{\langle x^3-1 , y^{12}-1\rangle}$ (see Example \ref{ex4}).
Let $C=\ll (y^2+y+1)(x^2+x+1), (y+1)(y+x)^3(x+1)\gg $ be a 2D consta\-cyc\-lic code of length 36 over $\F_2$. So
\begin{eqnarray*}
 \nonumber 
  \bar{C}_0 &=&\ll  (y^2+y+1)(x^2+x+1), (y+1)(y+x)(x+1)\gg\\
  \bar{C}_1 &=&  \ll (x^2+x+1), (y+x)(x+1)\gg\\
  \bar{C}_2 &=& \ll  (x^2+x+1), (y+x)(x+1)\gg\\
   \bar{C}_3 &=& \ll  (x^2+x+1),(x+1)\gg = \lg 1 \rg
\end{eqnarray*}
are ideals of $\frac{\F_2[x,y]}{\langle x^3-1 , y^{3}-1\rangle}$, that is, simple-root 2D  consta\-cyc\-lic codes of
length 9 over $\F_2$. \qed
\end{ex}
We start by the following lemma.
\begin{lem}\label{l2}
Let $C=\ll g_1(x,y)\prod \limits _{i\neq 1} f_i(x),\ldots ,g_{\eta}(x,y)\prod \limits _{i\neq \eta} f_i(x) \gg$ be a
$(\lambda , \delta)$-consta\-cyc\-lic code over $\F$ and $\bar{v}_{t}(x,y)$ be a non-zero polynomial in $\bar{C}_{t}$.
Then the polynomial
\begin{equation}
 \bar{c}_{t}(x,y)=(y^{m'}-\delta _0)^t \bar{v}_{t}^{p^s}(x,y) \mod (y^m-\delta , x^n-\lambda)\label{e1}
\end{equation}
is a non-zero polynomial in $C$.
\end{lem}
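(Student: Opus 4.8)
The plan is to establish the two assertions---that $\bar c_t$ lies in $C$ and that it is nonzero in $\mathcal{S}$---separately, in both cases working componentwise through the isomorphism $\psi$ of Remark \ref{r1}. For each $j$ let $v_j\in K_j[y]$ be the reduction of $\bar v_t$ modulo $f_j$; since $\bar v_t$ represents an element of $\bar{\mathcal{S}}$ it has $y$-degree less than $m'$, so $\deg_y v_j<m'$. Because $f_j\mid x^n-\lambda$ and reduction modulo $f_j$ is a ring homomorphism, applying $\psi_j$ to $\bar c_t$ gives $\psi_j(\bar c_t)=(y^{m'}-\delta_0)^t\,v_j^{p^s}\bmod (y^m-\delta)$ in $\mathcal{S}_j=K_j[y]/\lg y^m-\delta\rg$. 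I will carry out all divisibility computations in $K_j[y]$ against the factorizations $y^{m'}-\delta_0=\prod_l h_{jl}$, $g_j=\prod_l h_{jl}^{\alpha_{jl}}$ and $y^m-\delta=(y^{m'}-\delta_0)^{p^s}=\prod_l h_{jl}^{p^s}$, so that every question reduces to comparing exponents of the coprime irreducible factors $h_{jl}$.

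For membership it suffices to show $g_j\mid (y^{m'}-\delta_0)^t v_j^{p^s}$ in $K_j[y]$ for every $j$, for then $\psi_j(\bar c_t)\in\lg g_j\rg$ and Theorem \ref{p2} yields $\bar c_t\in C$. Since $\bar v_t\in\bar C_t$, its $j$-th component lies in $\lg\bar g_{jt}\rg$, and as $\bar g_{jt}\mid y^{m'}-\delta_0$ this means $\bar g_{jt}\mid v_j$ in $K_j[y]$; writing $v_j=\bar g_{jt}\,w_j$ gives $(y^{m'}-\delta_0)^t v_j^{p^s}=\prod_l h_{jl}^{t}\cdot\bar g_{jt}^{p^s}\,w_j^{p^s}$. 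The factor $h_{jl}$ occurs in $\bar g_{jt}$ with exponent $1$ precisely when $\alpha_{jl}>t$ and with exponent $0$ otherwise, so the exponent of $h_{jl}$ on the right is at least $t+p^s$ when $\alpha_{jl}>t$ and at least $t$ when $\alpha_{jl}\leq t$. In the first case $t+p^s\geq\alpha_{jl}$ because $\alpha_{jl}\leq p^s$, and in the second $t\geq\alpha_{jl}$; hence in all cases the exponent is at least $\alpha_{jl}$, which is exactly the divisibility $g_j\mid(y^{m'}-\delta_0)^t v_j^{p^s}$ I wanted.

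For non-vanishing, which I expect to be the main obstacle, I would exhibit a single index $j$ with $\psi_j(\bar c_t)\neq 0$. As $\bar v_t\neq 0$ in $\bar{\mathcal{S}}$, pick $j$ with $v_j\neq 0$; this is a nonzero polynomial with $\deg_y v_j<m'=\sum_l\deg h_{jl}$, so $\prod_l h_{jl}$ cannot divide it and there is an index $l_0$ with $h_{jl_0}\nmid v_j$. From $v_j=\bar g_{jt}\,w_j$ and primality of $h_{jl_0}$ this forces both $h_{jl_0}\nmid\bar g_{jt}$---that is, $\alpha_{jl_0}\leq t$---and $h_{jl_0}\nmid w_j$. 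Therefore the exponent of $h_{jl_0}$ in $(y^{m'}-\delta_0)^t v_j^{p^s}=\prod_l h_{jl}^{t}\cdot\bar g_{jt}^{p^s}\,w_j^{p^s}$ is exactly $t$, and $t\leq p^s-1<p^s$. Consequently $h_{jl_0}^{p^s}$, and hence $y^m-\delta=\prod_l h_{jl}^{p^s}$, fails to divide $(y^{m'}-\delta_0)^t v_j^{p^s}$ in $K_j[y]$, so $\psi_j(\bar c_t)\neq 0$ and thus $\bar c_t\neq 0$ in $\mathcal{S}$. The delicate point is precisely this exponent bookkeeping: the hypothesis $t\leq p^s-1$ is exactly what keeps the surviving factor $h_{jl_0}$ below the threshold $p^s$, while the strict degree inequality $\deg_y v_j<m'$ is what guarantees such a factor exists at all.
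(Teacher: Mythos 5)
Your proof is correct. It differs in organization from the paper's: the paper stays in $\mathcal{S}$ and works with the generating-set representation, writing $\bar{v}_{t}=\sum_{j}A_j\bar{g}_{jt}\prod_{i\neq j}f_i$, raising this to the $p^s$-th power (the cross terms vanish because $(\prod_{i\neq l}f_i)(\prod_{i\neq j}f_i)\equiv 0$ for $l\neq j$, and Frobenius handles the rest) to conclude $\bar{v}_{t}^{p^s}\in\ll \bar{g}_{1t}^{p^s}(\prod_{i\neq 1}f_i)^{p^s},\ldots\gg$, and then delegates the remaining divisibility and non-vanishing argument to the first part of Lemma 1 of Castagnoli et al.\ \cite{casta}. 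You instead push everything through the CRT isomorphism $\psi$ of Remark \ref{r1} from the outset, so that both membership and non-vanishing become exponent comparisons for the coprime irreducible factors $h_{jl}$ in the principal ideal domain $K_j[y]$. This buys two things: your argument is self-contained (the paper's reader must reconstruct how the one-dimensional argument of \cite{casta} transfers), and it cleanly isolates exactly where the two hypotheses enter --- $\alpha_{jl}\leq p^s$ gives membership via $t+p^s\geq\alpha_{jl}$, while $t\leq p^s-1$ together with $\deg_y v_j<m'$ guarantees a surviving factor $h_{jl_0}$ of exponent exactly $t<p^s$, hence non-vanishing. The underlying arithmetic is the same (your componentwise computation is essentially the argument of \cite{casta} performed in each $K_j[y]$, which is what the paper implicitly invokes), but your route avoids the power-expansion step in $\mathcal{S}$ entirely and replaces the external reference with explicit bookkeeping.
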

\begin{proof}
Since $\bar{v}_{t}(x,y)$ is a non-zero polynomial in $\bar{C}_{t}$, so \\
 \centerline{$ \bar{v}_{t}(x,y)=\sum \limits ^{\eta}_{j=1}\l( A_j(x,y)\bar{g}_{jt}(x,y)\prod \limits _{i\neq j} f_i(x)\r)$,  where $A_j(x,y)\in \bar{\mathcal{S}}$.}\\
 When $j\neq l$, we have
 \begin{center}
 $(\prod \limits _{i\neq l} f_i(x))(\prod \limits _{i\neq j} f_i(x))=0\  \mod (y^m-\delta , x^n-\lambda)$.
 \end{center}
 Hence
 \begin{center}
   $ \bar{v}_{t}^{p^s}(x,y)=\sum \limits ^{\eta}_{j=1}B_j(x,y)\bar{g}_{jt}^{p^s}(x,y)(\prod \limits _{i\neq j} f_i(x))^{p^s}$\ where $ B_j(x,y)\in \mathcal{S}$.
 \end{center}
 Thus\\
 \centerline{$ \bar{v}_{t}^{p^s}(x,y)\in \ll g_{1t}^{p^s}(x,y)(\prod \limits _{i\neq 1} f_i(x))^{p^s},\ldots ,g_{\eta t}^{p^s}(x,y)(\prod \limits _{i\neq \eta} f_i(x))^{p^s} \gg$.}\\
 Now, completely similar to the first part of \cite[Lemma 1]{casta}, this lemma can be proved.
\end{proof}
In what follows, by $w(c)$ and $d_{\min}(C)$ we mean the Hamming weight of a codeword $c$ and the minimum Hamming
distance of a code $C$.  Let $P_t=w((y^{m'}-\delta _0)^t)$. By Lemma \ref{l2} and using the techniques used in the second part of the proof of
\cite[Lemma 1]{casta}, we have the following lemma.
\begin{lem}\label{l3}
Let $C$ be a $(\lambda , \delta)$-consta\-cyc\-lic code over $\F$. Then $d_{\min}(C) \leq P_t d_{\min}(\bar{C}_{t})$,
for any $0\leq t\leq p^s-1$.
\end{lem}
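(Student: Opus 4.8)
The goal is to prove $d_{\min}(C) \leq P_t\, d_{\min}(\bar{C}_t)$ for each $0\leq t\leq p^s-1$, where $P_t = w((y^{m'}-\delta_0)^t)$.

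Let me understand the setup. We have $C$ a constacyclic code, and $\bar{C}_t$ simple-root codes. Lemma \ref{l2} tells us that if $\bar{v}_t$ is nonzero in $\bar{C}_t$, then $\bar{c}_t = (y^{m'}-\delta_0)^t \bar{v}_t^{p^s} \bmod (y^m-\delta, x^n-\lambda)$ is a nonzero polynomial in $C$.

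The plan: take a minimum-weight codeword $\bar{v}_t$ in $\bar{C}_t$, so $w(\bar{v}_t) = d_{\min}(\bar{C}_t)$. Then form $\bar{c}_t$ via Lemma \ref{l2}. This is a nonzero element of $C$, so $d_{\min}(C) \leq w(\bar{c}_t)$. Now I need to bound $w(\bar{c}_t)$.

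The key is the weight estimate. The polynomial $\bar{v}_t(x,y)$ lives in $\bar{\mathcal{S}} = \F[x,y]/\langle x^n-\lambda, y^{m'}-\delta_0\rangle$, with $x$-degree $<n$ and $y$-degree $<m'$. Its Frobenius-like power $\bar{v}_t^{p^s}$ — since $p$ is the characteristic, $\bar{v}_t^{p^s}(x,y)$ is obtained by raising each monomial to the $p^s$ power: a monomial $x^a y^b$ becomes $x^{ap^s}y^{bp^s}$. The crucial point is that this map on monomials is injective (distinct $(a,b)$ give distinct $(ap^s, bp^s)$), and the coefficients get raised to the $p^s$ power but remain nonzero. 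So $\bar{v}_t^{p^s}$ has exactly the same number of nonzero terms as $\bar{v}_t$, i.e., weight $w(\bar{v}_t)$ — though I must be careful about reduction modulo $x^n-\lambda$ and $y^m-\delta$. Here the $x$-exponents $ap^s$ with $a<n$ need reduction mod $x^n-\lambda$, and $y$-exponents $bp^s$ with $b<m'$ stay below $m'p^s = m$ so need no $y$-reduction. The multiplication by $(y^{m'}-\delta_0)^t$ then contributes the factor $P_t$.

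This is where I should lean on the second part of \cite[Lemma 1]{casta}, exactly as the excerpt announces. The heart of the argument in \cite{casta} is that multiplication by $(y^{m'}-\delta_0)^t$ can at most multiply the weight by the number of nonzero terms $P_t$ of that polynomial, and that the Frobenius power $\bar v_t^{p^s}$ preserves weight. Concretely, $\bar v_t^{p^s}$ is a polynomial in $x^{p^s}$ and $y^{p^s}$ whose terms, after reducing the $x$-part modulo $x^n-\lambda$ (a weight-preserving monomial operation since $\gcd(n,p)=1$ makes $x\mapsto x^{p^s}$ a permutation of residues, so no collisions occur), sit in distinct $y^{p^s b}$-blocks with $0\le b<m'$. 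Multiplying by $(y^{m'}-\delta_0)^t$ spreads each such block across the at most $P_t$ nonzero $y$-powers of $(y^{m'}-\delta_0)^t$, and these land in disjoint ranges of $y$-exponents, so no cancellation across blocks occurs. Hence $w(\bar c_t)\le P_t\, w(\bar v_t)= P_t\, d_{\min}(\bar C_t)$.

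The main obstacle I anticipate is verifying the no-cancellation claims carefully: first that the Frobenius power together with reduction modulo $x^n-\lambda$ genuinely preserves the term count (using $\gcd(n,p)=1$ so that $a\mapsto ap^s \bmod n$ is a bijection on $\{0,\dots,n-1\}$ up to the unit $\lambda$-twist), and second that the separation of $y$-exponents into blocks of width $p^s$ prevents terms from different $y^{m'}$-powers from colliding. Once these are in place, the bound $d_{\min}(C)\le w(\bar c_t)\le P_t\, d_{\min}(\bar C_t)$ follows immediately, and since this holds for every $0\le t\le p^s-1$, the lemma is proved. As the excerpt states, these weight-tracking details mirror \cite[Lemma 1]{casta} verbatim, so I would cite that lemma rather than reproduce the computation.
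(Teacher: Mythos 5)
Your proposal is correct and follows exactly the route the paper takes: apply Lemma \ref{l2} to a minimum-weight codeword $\bar{v}_t$ of $\bar{C}_t$, observe that the Frobenius power preserves weight (since $a\mapsto ap^s \bmod n$ is a bijection on $\{0,\dots,n-1\}$ because $\gcd(n,p)=1$) and that multiplication by $(y^{m'}-\delta_0)^t$ multiplies the weight by at most $P_t$ without cancellation, deferring the bookkeeping to the second part of \cite[Lemma 1]{casta} -- which is precisely all the paper itself does. One small repair to your no-cancellation reasoning: after reduction modulo $y^m-\delta$ the $y$-exponent ranges coming from different powers $y^{m'i}$ are \emph{not} disjoint (they can wrap around and overlap); what actually prevents collisions is that $bp^s+m'i\equiv m'i \pmod{p^s}$ and $i\mapsto m'i$ is injective modulo $p^s$ (as $\gcd(m',p)=1$), so distinct pairs $(b,i)$ yield distinct exponents modulo $m=m'p^s$.
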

\par Consider the set
\begin{equation}\label{e3}
  T=\{t<p^s \mid P_t<P_{t'} \ \ \ \mathrm{for\ any}\ t'\in \{t+1,\ldots ,p^s-1\}\}.
\end{equation}
Note that $p^s-1\in T$ and for any $0\leq t\leq p^s-1$ set
\begin{equation}\label{e4}
  \bar{t}=\min \{ t'\in T \mid t'\geq t \}.
\end{equation}

\begin{lem}\label{l4}
Let $C$ be a $(\lambda , \delta)$-consta\-cyc\-lic code of length $nm$ over $\F$. Assume that $c(x,y)\in C$ and
$c(x,y)=(y^{m'}-\delta _0)^t v(x,y)$, where $y^{m'}-\delta _0\nmid v(x,y)$ in $\F[x,y]$. Then the polynomial
 \begin{center}
 $\bar{c}_{\bar{t}}(x,y)=(y^{m'}-\delta _0)^{\bar{t}} \bar{v}^{p^s}(x,y) \mod{(y^m-\delta, x^n-\lambda)}$,
 \end{center}
where $\bar{v}(x,y)= v(x,y) \mod( y^m-\delta, x^n-\lambda)$, is a non-zero polynomial of $C$ and satisfies
$w(\bar{c}_{\bar{t}}(x,y))\leq w(c(x,y))$.
\end{lem}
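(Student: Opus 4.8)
The plan is to reduce $\bar c_{\bar t}$ to the standard form handled by Lemma \ref{l2} and then to sandwich its weight between two estimates built around $w(c)$. Write $u=y^{m'}-\delta_0$, so that $y^m-\delta=u^{p^s}$ in characteristic $p$. First I would observe that, since $c$ is taken in reduced form ($\deg_x c<n$, $\deg_y c<m$) and $u\in\F[y]$, the cofactor $v=c/u^t$ is itself reduced; hence $\bar v=v$, and $\bar v_t:=v \bmod (x^n-\lambda,\,y^{m'}-\delta_0)$ is simply $v$ read modulo $u$. Using the Frobenius identity $(a+b+e)^{p^s}=a^{p^s}+b^{p^s}+e^{p^s}$ together with $(x^n-\lambda)^{p^s}\equiv 0\equiv u^{p^s}$ in $\mathcal S$, one obtains $v^{p^s}\equiv \bar v_t^{\,p^s}\pmod{(x^n-\lambda,\,y^m-\delta)}$, so that
\[
\bar c_{\bar t}=u^{\bar t}\bar v^{p^s}=u^{\bar t}\bar v_t^{\,p^s}\bmod (x^n-\lambda,\,y^m-\delta),
\]
which is exactly the object produced by Lemma \ref{l2}.

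Next I would verify the hypotheses of Lemma \ref{l2} for the index $\bar t$. Projecting $c=u^tv$ through $\psi_j$ and using $c\in C$ gives $g_j\mid u^tV_j$ in $K_j[y]$, where $V_j=v\bmod f_j$; comparing $h_{jl}$-adic valuations forces $h_{jl}\mid V_j$ for every $l$ with $\alpha_{jl}>t$, whence $\bar g_{jt}\mid V_j$ and therefore $\psi_j(\bar v_t)\in\lg \bar g_{jt}\rg$. Thus $\bar v_t\in\bar C_t\subseteq\bar C_{\bar t}$, the inclusion holding because $\bar t\ge t$. Nonvanishing of $\bar v_t$ follows from $u\nmid v$: if $\bar v_t=0$ then $u\mid V_j$ for all $j$, i.e. $u\mid v$ in $(\F[x]/\lg x^n-\lambda\rg)[y]$, and since $\deg_x v<n$ while $\gcd(u,x^n-\lambda)=1$ this would force $u\mid v$ in $\F[x,y]$, a contradiction. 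Now Lemma \ref{l2}, applied with $\bar t$ and the nonzero $\bar v_t\in\bar C_{\bar t}$, shows at once that $\bar c_{\bar t}$ is a nonzero element of $C$.

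For the weight bound I would sandwich. Because $\deg_y\bar v_t<m'$, the exponents $y^{jp^s}$ occurring in $\bar v_t^{\,p^s}$ stay below $m$, and as $\gcd(p^s,n)=1$ the substitution $x^i\mapsto x^{ip^s}$ merely permutes residues mod $n$; hence raising to the $p^s$ and reducing is weight-preserving, $w(\bar v_t^{\,p^s}\bmod(\ldots))=w(\bar v_t)$. Multiplying by $u^{\bar t}$ and reducing can only merge monomials, so $w(\bar c_{\bar t})\le w(u^{\bar t})\,w(\bar v_t)=P_{\bar t}\,w(\bar v_t)$. To bound $w(c)$ from below I would slice $v=\sum_{i,a}x^iy^a V_i^{(a)}(y^{m'})$ according to the residue $a=j\bmod m'$: distinct pairs $(i,a)$ give disjoint monomial supports, so with $z=y^{m'}$ one has $w(c)=\sum_{i,a}w_z\big((z-\delta_0)^t V_i^{(a)}(z)\big)$ while $w(\bar v_t)=\#\{(i,a):V_i^{(a)}(\delta_0)\ne0\}$. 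The core inequality is that $w\big((z-\delta_0)^t g(z)\big)\ge w((z-\delta_0)^t)=P_t$ whenever $g(\delta_0)\ne0$; summing it over the slices with $V_i^{(a)}(\delta_0)\ne0$ gives $w(c)\ge P_t\,w(\bar v_t)$. Finally, $\bar t\ge t$ and $\bar t\in T$ yield $P_{\bar t}\le P_t$ (the largest index of $\{t,\dots,p^s-1\}$ minimizing $P$ lies in $T$, hence is $\ge\bar t$, and $\bar t\in T$ pins $P_{\bar t}$ at or below that minimum). Chaining,
\[
w(\bar c_{\bar t})\le P_{\bar t}w(\bar v_t)\le P_t w(\bar v_t)\le w(c).
\]

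I expect the main obstacle to be the core weight inequality $w((z-\delta_0)^tg)\ge P_t$ for $g(\delta_0)\ne0$: by Lucas' theorem $P_t=\prod_d(t_d+1)$ over the base-$p$ digits $t_d$ of $t$, and the bound asserts that a polynomial of exact $(z-\delta_0)$-valuation $t$ cannot be supported on fewer monomials than $(z-\delta_0)^t$ itself. This is precisely the delicate estimate settled in \cite{casta}, so I would invoke their argument rather than re-derive it; everything else — the Frobenius reductions, the valuation computation placing $\bar v_t$ in $\bar C_t$, and the slicing of $w(c)$ — is routine.
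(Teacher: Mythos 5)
Your proposal follows the paper's own proof in every structural step: the $h_{jl}$-adic valuation argument giving $\bar g_{jt}\mid V_j$ (hence $\bar v_t\in\bar C_t\subseteq\bar C_{\bar t}$), the nonvanishing argument from $y^{m'}-\delta_0\nmid v$ and $\deg_x v<n$, the appeal to Lemma \ref{l2}, the slicing of $w(c)$ by $x$-exponent and $y$-residue mod $m'$, and the product bound $w(\bar c_{\bar t})\le P_{\bar t}\,w(\bar v_t)$ are all exactly the paper's steps (your Frobenius reconciliation of the two readings of $\bar v$ -- reduction mod $y^m-\delta$ versus mod $y^{m'}-\delta_0$ -- is a nice clarification of a point the paper leaves ambiguous). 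The one place you deviate is the lower bound on $w(c)$, and there your proof as written rests on a misattribution. Your ``core inequality'' (exact $(z-\delta_0)$-valuation $t$ forces weight at least $P_t$) is \emph{not} what \cite{casta} or \cite[Theorem 6.1]{massy} prove. The Massey--Costello--Justesen result is the weight-retaining property: writing a nonzero slice as $\sum_{t'=t}^{p^s-1}b_{t'}(z-\delta_0)^{t'}$, its weight is at least $\min\{P_{t'}\colon b_{t'}\neq 0\}$; since indices $t'>t$ may have $P_{t'}<P_t$ (in characteristic $2$, $P_3=4$ while $P_4=2$), this yields only $\min_{t\le t'<p^s}P_{t'}=P_{\bar t}$, not $P_t$. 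Your stronger bound is in fact true, but it requires a separate base-$p$ digit induction that appears in neither cited source -- indeed, if it were available off the shelf, the set $T$ and the index $\bar t$ of \cite{casta} would be unnecessary, as one could work with $t$ itself throughout.

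The gap is self-repairing, because your own chain passes through $P_{\bar t}\le P_t$: simply delete the middle term. Each slice with $V_i^{(a)}(\delta_0)\neq 0$ is in particular nonzero, so the weight-retaining property gives $w_z\bigl((z-\delta_0)^tV_i^{(a)}(z)\bigr)\ge P_{\bar t}$ directly; summing over these slices yields $w(c)\ge P_{\bar t}\,w(\bar v_t)$, which chains with $w(\bar c_{\bar t})\le P_{\bar t}\,w(\bar v_t)$ to finish. This is precisely the paper's argument -- it applies the bound to all nonzero slices, counts them as $N_\nu$, and uses $N_\nu\ge N_{\bar\nu}=w(\bar v_t)$; restricting to the slices that do not vanish at $\delta_0$, as you do, is an equivalent bookkeeping. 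With that one substitution your proof is correct and coincides with the paper's; as submitted, the final estimate leans on a claim your cited sources do not contain.
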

\begin{proof}
The proof is similar to the proof of \cite[Lemma 2]{casta}, but with some minor modifications. To show how these
modifications can be accounted (in this and also other omitted proofs), we state the complete proof here. Since $c(x,y)\in
C$, $c(x,y)=\sum ^{\eta}_{j=1}A_j(x,y)g_{j}(x,y)\prod
 _{i\neq j} f_i(x)$,  for some $A_j(x,y)\in \mathcal{S}$. So in $\mathcal{S}$,
  \centerline{$\sum \limits ^{\eta}_{j=1}A_j(x,y)g_{j}(x,y)\prod \limits _{i\neq j} f_i(x)=(y^{m'}-\delta _0)^t v(x,y)$.}
Thus in $\F[x,y]$,
  \begin{center}
    $\sum \limits ^{\eta}_{j=1}A_j(x,y)g_{j}(x,y)\prod \limits _{i\neq j} f_i(x)=(y^{m'}-\delta _0)^t v(x,y)+(y^m-\delta )D+(x^n-\lambda)E$,
  \end{center}
for some $D,E\in \F[x,y]$. Now, in $K_j[y]$, we have
   \begin{center}
    $A_j(x,y)g_{j}(x,y)\prod \limits _{i\neq j} f_i(x)=(y^{m'}-\delta _0)^t v(x,y)+(y^m-\delta )D$.
  \end{center}
If for some $0\leq l \leq t_j$, $\alpha _{jl}>t$, then $h_{jl}^{t+(\alpha _{jl} -t)}(x,y)\mid g_j(x,y)$ and
$h_{jl}^{t+(\alpha _{jl} -t)}(x,y)\mid y^m-\delta $. Thus in $K_j[y]$,
  \begin{center}
   $h_{jl}^{t+(\alpha _{jl} -t)}(x,y)\mid ( y^{m'}-\delta_0)^t v(x,y) $.
  \end{center}
So $h_{jl}(x,y)\mid v(x,y)$. Hence $\bar{g}_{jt}(x,y)\mid v(x,y)$ in $K_j[y]$ for each $j$. Thus $\bar{v}(x,y)\in
\bar{C}_t$. Since $y^{m'}-\delta _0\nmid v(x,y)$ and $\deg_x v(x,y)<n$, $\bar{v}(x,y)\neq 0$ in $\bar{\mathcal{S}}$.
Since $\bar{t}\geq t$, $\bar{C}_{t}\subseteq \bar{C}_{\bar{t}}$. So $\bar{v}(x,y)\in \bar{C}_{\bar{t}}$. Hence by Lemma
\ref{l2},
  \begin{equation}\label{e9}
    0\neq \bar{c}_{\bar{t}}(x,y)=(y^{m'}-\delta _0)^{\bar{t }}\bar{v}^{p^s}(x,y)\in C.
  \end{equation}
 There exist polynomials $v_i\in \F[x,y]$ and $v_{ij}\in \F[y]$ such that
  \begin{equation*}
   v(x,y)=\sum \limits ^{m'-1}_{i=0}y^i  v_i(x,y^{m'})
 \quad\text{and}  \quad
   v_i(x,y^{m'})=\sum \limits ^{n-1}_{j=0}x^j v_{ij}(y^{m'}).
  \end{equation*}
  Hence
  \begin{equation*}
   v(x,y)=\sum \limits ^{m'-1}_{i=0}\sum \limits ^{n-1}_{j=0} x^j y^i v_{ij}(y^{m'}),
  \end{equation*}
  and
  \begin{equation*}
   \bar{v}(x,y)=\sum \limits ^{m'-1}_{i=0}\sum \limits ^{n-1}_{j=0} x^j y^i v_{ij}(\delta _0) \mod (y^{m'}-\delta _0, x^n-\lambda).
  \end{equation*}
Let $N_{\bar{\nu}}=w(\bar{v}(x,y))$, which according to the above formula equals the number of $v_{ij}$'s with
$v_{ij}(\delta_0)\neq 0$.  Now we have
  \begin{align*}
    w(c) =& w((y^{m'}-\delta _0)^t v(x,y)) =
    w((y^{m'}-\delta _0)^t \sum \limits ^{m'-1}_{i=0}y^i  v_i(x,y^{m'}))\\
     =& w( \sum \limits ^{m'-1}_{i=0}y^i (y^{m'}-\delta _0)^t v_i(x,y^{m'}))=
      \sum \limits ^{m'-1}_{i=0}w( (y^{m'}-\delta _0)^t v_i(x,y^{m'}))\\
      =&  \sum \limits ^{m'-1}_{i=0} w( (y^{m'}-\delta _0)^t \sum \limits ^{n-1}_{j=0}x^j v_{ij}(y^{m'}) )\\
      =& \sum \limits ^{m'-1}_{i=0} w(  \sum \limits ^{n-1}_{j=0}x^j(y^{m'}-\delta _0)^t v_{ij}(y^{m'}) )\\
       =&  \sum \limits ^{m'-1}_{i=0} \sum \limits ^{n-1}_{j=0} w( (y^{m'}-\delta _0)^t v_{ij}(y^{m'}) )=
        \sum \limits ^{m'-1}_{i=0} \sum \limits ^{n-1}_{j=0} w( (y-\delta _0)^t v_{ij}(y) ).\\
  \end{align*}
If $v_{ij}(y)\neq 0$, since $(y-\delta_0)^t v_{ij}(y)$ can be written as $\sum_{t'=t}^{p^s-1} b_{t'} (y-\delta_0)^{t'}$
with $b_i\in \F$ and by \cite[Theorem 6.1]{massy}, we have
\begin{eqnarray*}
w((y-\delta_0)^tv_{ij}(y) )\geq  \sum \limits ^{m'-1}_{i=0} \sum \limits ^{n-1}_{j=0} \min _{p^s>t' \geq t}w( (y-\delta _0)^{{t'}})=P_{\bar t}.
\end{eqnarray*}
Thus $w(c(x,y)) \geq P_{\bar{t}}N_{\nu}$, where $N_{\nu}$ is the number of nonzero $v_{ij}(y)$.  Also $
w(\bar{c}_{\bar{t}}(x,y))\leq P_{\bar{t}}w(\bar{v}^{p^s}(x,y))= P_{\bar{t}}w(\bar{v}(x,y))$ by \eqref{e9} and so $
w(\bar{c}_{\bar{t}}(x,y))\leq {\bar{t}}N_{\bar{\nu}}.$ If $v_{ij}(\delta _0)\neq 0$, then $v_{ij}(y)\neq 0$. Hence
$N_{\nu}\geq N_{\bar{\nu}}$. Therefore,
    \begin{center}
      $w(\bar{c}_{\bar{t}}(x,y))\leq P_{\bar{t}}N_{\bar{\nu}}\leq P_{\bar{t}}N_{\nu}\leq w(c(x,y))$.
    \end{center}
    Hence  $w(\bar{c}_{\bar{t}}(x,y))\leq w(c(x,y))$.
\end{proof}
Using Lemmas \ref{l2}--\ref{l4} instead of Lemmas 1 and 2 of \cite{casta}, one can prove the following result similar
to \cite[Theorem 1]{casta}.
\begin{prop}\label{p5}
Let $C$ be a $(\lambda , \delta)$-consta\-cyc\-lic code of length $nm$ over $\F$. Then $d_{\min}(C)=P_{
\bar{t}}d_{\min}(\bar{C}_{\bar{t}})$, for some $\bar{t}\in T$.
\end{prop}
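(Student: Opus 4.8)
The plan is to prove both inequalities
$d_{\min}(C)\le P_{\bar t}d_{\min}(\bar C_{\bar t})$ and
$d_{\min}(C)\ge P_{\bar t}d_{\min}(\bar C_{\bar t})$
for one suitably chosen index $\bar t\in T$, exactly as in the proof of \cite[Theorem 1]{casta}. The first inequality comes for free: Lemma \ref{l3} gives $d_{\min}(C)\le P_td_{\min}(\bar C_t)$ for \emph{every} $0\le t\le p^s-1$, hence in particular for any index lying in $T$. So the real content is to exhibit a single $\bar t\in T$ for which the reverse inequality also holds, and then the two bounds pin down $d_{\min}(C)$ exactly.

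To locate that index I would start from a minimum-weight codeword. Choose $c(x,y)\in C$ nonzero with $w(c)=d_{\min}(C)$ and factor out the largest power of $y^{m'}-\delta_0$, writing $c=(y^{m'}-\delta_0)^tv$ with $y^{m'}-\delta_0\nmid v$; put $\bar t=\min\{t'\in T:t'\ge t\}$, which lies in $T$ by its definition. Feeding $c$ into Lemma \ref{l4} produces a nonzero codeword $\bar c_{\bar t}=(y^{m'}-\delta_0)^{\bar t}\bar v^{p^s}\in C$ with $w(\bar c_{\bar t})\le w(c)$, and (from that proof) $\bar v$ is a nonzero element of $\bar C_{\bar t}$. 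Since $\bar c_{\bar t}$ is a nonzero codeword, its weight is also $\ge d_{\min}(C)=w(c)$, and therefore $w(\bar c_{\bar t})=d_{\min}(C)$.

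The crux is then the weight identity $w(\bar c_{\bar t})=P_{\bar t}\,w(\bar v)$. Writing $\bar v=\sum_{i<m',\,j<n}c_{ij}x^jy^i$ and using the Frobenius $\bar v^{p^s}=\sum c_{ij}^{p^s}x^{jp^s}y^{ip^s}$, multiplication by $(y^{m'}-\delta_0)^{\bar t}=\sum_k\binom{\bar t}{k}(-\delta_0)^{\bar t-k}y^{m'k}$ yields monomials $x^{jp^s}y^{m'k+ip^s}$ reduced modulo $x^n-\lambda$ and $y^m-\delta$. I would check that, because $\gcd(n,p)=\gcd(m',p)=1$ and $\bar t\le p^s-1$, the assignment $j\mapsto jp^s\bmod n$ is a bijection and the residues $m'k+ip^s\bmod m$, taken over $0\le i<m'$ and over the $k$ with $\binom{\bar t}{k}\neq0$, are pairwise distinct; hence all these monomials reduce to pairwise distinct monomials of $\mathcal S$ and, since each surviving coefficient $c_{ij}^{p^s}\binom{\bar t}{k}(-\delta_0)^{\bar t-k}$ (times a nonzero power of $\lambda$ and $\delta$) is nonzero, no cancellation occurs. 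Thus each of the $w(\bar v)$ nonzero coefficients $c_{ij}$ spreads into exactly $P_{\bar t}=w\!\big((y^{m'}-\delta_0)^{\bar t}\big)$ surviving terms, giving $w(\bar c_{\bar t})=P_{\bar t}w(\bar v)$. As $\bar v\in\bar C_{\bar t}$ is nonzero, $w(\bar v)\ge d_{\min}(\bar C_{\bar t})$, so $d_{\min}(C)=w(\bar c_{\bar t})=P_{\bar t}w(\bar v)\ge P_{\bar t}d_{\min}(\bar C_{\bar t})$, and combining with Lemma \ref{l3} yields equality for this $\bar t\in T$.

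I expect the distinctness-of-exponents bookkeeping in the weight identity to be the main obstacle: it is precisely where the coprimality hypotheses and the bound $\bar t<p^s$ are consumed to guarantee that the $p^s$-th power and the multiplication by $(y^{m'}-\delta_0)^{\bar t}$ are both weight-lossless. The role of choosing $\bar t\in T$ (rather than an arbitrary $t$) enters only through Lemma \ref{l4}, whose proof uses the one-dimensional estimate \cite[Theorem 6.1]{massy} to ensure $w(\bar c_{\bar t})\le w(c)$; everything else is a formal assembly of Lemmas \ref{l3} and \ref{l4}.
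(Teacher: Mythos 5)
Your proof is correct and follows essentially the same route as the paper, which omits the argument and simply points to \cite[Theorem 1]{casta} together with Lemmas \ref{l2}--\ref{l4}: your assembly (minimum-weight codeword $c=(y^{m'}-\delta_0)^t v$, Lemma \ref{l4} to get $w(\bar{c}_{\bar{t}})=d_{\min}(C)$, the no-cancellation identity $w(\bar{c}_{\bar{t}})=P_{\bar{t}}\,w(\bar{v})$ via the Frobenius and CRT exponent bijections, then Lemma \ref{l3} for the reverse inequality) is precisely the intended adaptation of Castagnoli's proof. As a minor remark, the lower bound $w(c)\geq P_{\bar{t}}\,w(\bar{v})$ that you need is already established inside the paper's proof of Lemma \ref{l4} (there written $w(c)\geq P_{\bar{t}}N_{\nu}\geq P_{\bar{t}}N_{\bar{\nu}}$), so your explicit distinctness-of-exponents computation, though correct, could be replaced by citing that step.
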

Also the next theorem can be proved mutatis mutandis to \cite[Theorem 2]{casta}.
\begin{thm}\label{p6}
Let $C$ be a one-sided repeated-root $(\lambda , \delta)$-consta\-cyc\-lic code of length $nm$ over $\F$. Then there
exists a simple-root $(\lambda , \delta)$-consta\-cyc\-lic code $\hat{C}$ over $\F$ of length $nm'$ with both rate and
relative minimum distance at least as large as the corresponding values for $C$.
\end{thm}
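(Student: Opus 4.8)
The plan is to follow the strategy of \cite[Theorem 2]{casta}, but adapted to the two-dimensional setting, using Proposition \ref{p5} as the key bridge between the repeated-root code $C$ and its simple-root ``shadows'' $\bar{C}_t$. The whole point is that Proposition \ref{p5} already gives us, for a single code $C$ of length $nm=n(p^sm')$, a specific simple-root code $\bar{C}_{\bar t}$ of length $nm'$ together with the exact relation $d_{\min}(C)=P_{\bar t}\,d_{\min}(\bar{C}_{\bar t})$. So my candidate for $\hat C$ is simply $\bar{C}_{\bar t}$ for the appropriate $\bar t\in T$, and the task reduces to verifying that both the rate and the relative minimum distance of $\bar{C}_{\bar t}$ are at least those of $C$.

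First I would handle the relative minimum distance. Write $\delta_{\mathrm{rel}}(C)=d_{\min}(C)/(nm)$ and $\delta_{\mathrm{rel}}(\hat C)=d_{\min}(\hat C)/(nm')$. Using Proposition \ref{p5}, $d_{\min}(C)=P_{\bar t}\,d_{\min}(\hat C)$ where $P_{\bar t}=w((y^{m'}-\delta_0)^{\bar t})$. The key numerical fact I need is $P_{\bar t}\le p^s$, which should follow from the binomial expansion of $(y^{m'}-\delta_0)^{\bar t}$ over a field of characteristic $p$: since $\bar t<p^s$, the number of nonzero binomial coefficients $\binom{\bar t}{i}\bmod p$ is bounded by counting via Lucas' theorem, and in any case $w((y-\delta_0)^{\bar t})\le \bar t+1\le p^s$. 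Granting $P_{\bar t}\le p^s$, we get
\begin{equation*}
\delta_{\mathrm{rel}}(C)=\frac{P_{\bar t}\,d_{\min}(\hat C)}{n p^s m'}\le \frac{p^s\,d_{\min}(\hat C)}{n p^s m'}=\frac{d_{\min}(\hat C)}{nm'}=\delta_{\mathrm{rel}}(\hat C),
\end{equation*}
which is exactly the desired inequality on relative minimum distance.

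Next I would compare the rates. Here I would use the dimension formula of Theorem \ref{p2}: $\dim C=nm-\sum_j d_j t_j$ where $t_j=\deg_y g_j$, and analogously $\dim\hat C=nm'-\sum_j d_j\bar t_j$ where $\bar t_j=\deg_y\bar g_{j\bar t}$. The rate of $C$ is $\dim(C)/(nm)$ and that of $\hat C$ is $\dim(\hat C)/(nm')$. Recalling that $\bar g_{jt}$ is the product of those $h_{jl}$ whose exponent $\alpha_{jl}$ in $g_j$ exceeds $t$, while $g_j=\prod_l h_{jl}^{\alpha_{jl}}$, a per-factor comparison shows $\deg_y g_j\ge p^s\cdot(\text{number of }l\text{ with }\alpha_{jl}>\bar t)\cdot(\deg_y h_{jl})$ is not quite the right bound; rather, the cleaner route is to show $\dim(\hat C)/(nm')\ge \dim(C)/(nm)$ directly by bounding $\bar t_j$ in terms of $t_j$. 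The essential inequality to establish is that the codimension ``density'' does not increase when passing to $\hat C$, i.e. $\sum_j d_j\bar t_j/(nm')\le \sum_j d_j t_j/(nm)$, equivalently $p^s\sum_j d_j\bar t_j\le \sum_j d_j t_j$, which holds because each irreducible factor $h_{jl}$ contributing to $\bar g_{j\bar t}$ (exponent $>\bar t$) contributes its full power, at least $\bar t+1>\bar t$, to $g_j$, and $t_j$ accumulates powers up to $p^s$.

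\textbf{The main obstacle} I anticipate is precisely the rate comparison: the relation $d_{\min}(C)=P_{\bar t}d_{\min}(\bar C_{\bar t})$ from Proposition \ref{p5} pins down the distance side cleanly, but $\bar t$ is selected by the minimum-distance optimization, not by any dimension criterion, so I must verify that the \emph{same} $\bar t$ yields a favorable rate. This requires carefully tracking, for each $j$, how $\deg_y\bar g_{j\bar t}$ (which counts irreducible factors $h_{jl}$ with $\alpha_{jl}>\bar t$) relates to $\deg_y g_j=\sum_l \alpha_{jl}\deg_y h_{jl}$, and confirming the density inequality above holds uniformly; I expect this to parallel the corresponding step in \cite[Theorem 2]{casta} closely, so I would simply transcribe that argument with $\F[y]$ replaced by $K_j[y]$ and lengths $m,m'$ in place of the one-dimensional lengths, invoking that the construction of $\bar C_t$ is carried out factor-by-factor in each component $\mathcal{S}_j$ of the decomposition in Remark \ref{r1}.
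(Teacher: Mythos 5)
Your distance comparison is sound: $P_{\bar t}\le \bar t+1\le p^s$, so $d_{\min}(\bar C_{\bar t})/(nm')\ge d_{\min}(C)/(nm)$, and you correctly identified the rate comparison as the critical point. But there the gap is real and not repairable for your choice of $\hat C$: the code $\bar C_{\bar t}$ supplied by Proposition \ref{p5} need not satisfy the theorem at all. Your justification of the inequality $p^s\sum_j d_j\deg_y\bar g_{j\bar t}\le\sum_j d_j t_j$ only uses that each factor surviving in $\bar g_{j\bar t}$ has $\alpha_{jl}\ge\bar t+1$; this yields the factor $\bar t+1$, not $p^s$, and $\bar t+1$ can be strictly smaller than $p^s$. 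Concretely, take $q=2$, $n=3$, $m=12$ (so $m'=3$, $p^s=4$) and $C=\ll(y+1)(x^2+x+1),\,(x+1)\gg$. Then $\dim C=35$ and $d_{\min}(C)=2$ (the codeword $x+1$ has weight $2$, and no monomial lies in $C$ because $\psi_1$ maps monomials to units, which cannot lie in the proper ideal $\lg y+1\rg$). Moreover $P_0=1$, $0\in T$, and $\bar C_0=\ll(y+1)(x^2+x+1),\,(x+1)\gg$ of length $9$ also has $d_{\min}(\bar C_0)=2$, so $\bar t=0$ is a legitimate value realizing the equality in Proposition \ref{p5}; yet $\dim\bar C_0=8$ and $8/9<35/36$, i.e.\ the rate of $\bar C_0$ is strictly smaller than that of $C$.

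The paper's proof --- which is the proof of \cite[Theorem 2]{casta} transported via Lemmas \ref{l2}--\ref{l4} --- does not use Proposition \ref{p5} at all; it takes $\hat C=\bar C_{p^s-1}$, the largest code in the chain. For this choice every $h_{jl}$ occurring in $\bar g_{j,p^s-1}$ has $\alpha_{jl}=p^s$ exactly, so $t_j=\sum_l\alpha_{jl}\deg_y h_{jl}\ge p^s\deg_y\bar g_{j,p^s-1}$, which is precisely your density inequality, now with the correct factor $p^s$; hence the rate of $\bar C_{p^s-1}$ is at least that of $C$. For the distance one applies Lemma \ref{l3} with $t=p^s-1$: since $(y^{m'}-\delta_0)^{p^s-1}=(y^m-\delta)/(y^{m'}-\delta_0)=\sum_{i=0}^{p^s-1}\delta_0^{p^s-1-i}y^{m'i}$, one has $P_{p^s-1}=p^s$, so $d_{\min}(C)\le p^s\, d_{\min}(\bar C_{p^s-1})$ and the relative distances compare as required. (In the counterexample above this gives $\hat C=\bar C_3=\lg 1\rg$, with rate $1\ge 35/36$ and relative distance $1/9\ge 2/36$.) So keep your distance argument, but replace $\bar C_{\bar t}$ by $\bar C_{p^s-1}$ and the appeal to Proposition \ref{p5} by Lemma \ref{l3}.
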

Note that, since the code $\hat{C}$ is of length $nm'$ and $C$ is of length $nm'p^s$, Theorem \ref{p6} does not mean
that simple root 2D constacyclic codes are better than one-sided repeated-root 2D consta\-cyc\-lic codes.

Let $C=\ll g_1(x,y)\prod  _{i\neq 1} f_i(x),\ldots ,g_{\eta}(x,y)\prod  _{i\neq \eta} f_i(x) \gg$ be a $(\lambda ,
\delta)$-consta\-cyc\-lic code of length $nm$ over $\F$. Then $\dim(C)=mn-\sum ^{\eta}_{j=0}d_j t_j$, where $d_j=\deg
f_j$ and $t_j=\deg _y g_j$. In what follows, we assume that $C$ and $\ll g_j(x,y)\gg$ (as an ideal of
$\frac{K_j[y]}{\langle y^m-\delta \rangle})$, are of the rate $r$ and $r_j$ respectively. So we have the following
lemma.

\begin{lem}\label{l5}
  For any $R\ (0<R<1)$, if $R<r$, then there exists an $l\ (1\leq l \leq \eta )$ such that $R<r_l$.
\end{lem}
\begin{proof}
 Note that,
  \begin{equation*}\label{e7}
r=\frac{mn-\sum \limits ^{\eta}_{j=0}d_j t_j}{mn}=\frac{\sum \limits ^{\eta}_{j=0}d_j (m-t_j)}{mn}=\frac{1}{n}\sum
\limits ^{\eta}_{j=0}\frac{m-t_j}{m}d_j=\frac{1}{n}\sum \limits ^{\eta}_{j=0}r_jd_j.
\end{equation*}
Suppose that for any $j$, $R\geq r_j$. So $\sum \limits ^{\eta}_{j=0}d_jR \geq \sum \limits ^{\eta}_{j=0}r_jd_j$. Hence
$nR\geq \sum \limits ^{\eta}_{j=0}r_jd_j.$ Thus $R\geq r $ which is impossible.
\end{proof}
Now, one can prove the counterparts of Lemma 3 and Theorems 3 and 4 of \cite{casta}, with completely similar arguments.
Here we just mention the statements.
\begin{lem}\label{l6}
Let $C$ be a $(\lambda , \delta)$-consta\-cyc\-lic code of length $nm$ over $\F$ and rate $r$. For any $R$ ($0<R<1$),
there exists a constant $\gamma(R)$ such that if $r>R$, then $d_{\min}(C)\leq p^{ \gamma(R)}nm'.$
\end{lem}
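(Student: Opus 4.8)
Lemma~\ref{l6} asserts a uniform upper bound on the minimum distance of a repeated-root $(\lambda,\delta)$-constacyclic code of length $nm$ and rate $r>R$, namely $d_{\min}(C)\le p^{\gamma(R)}nm'$, where $\gamma(R)$ depends only on $R$.

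**Plan of attack.** The plan is to combine Proposition~\ref{p5} with Lemma~\ref{l5} and then control the two factors separately. By Proposition~\ref{p5} we may write $d_{\min}(C)=P_{\bar t}\,d_{\min}(\bar C_{\bar t})$ for some $\bar t\in T$, where $P_{\bar t}=w\bigl((y^{m'}-\delta_0)^{\bar t}\bigr)$ and $\bar C_{\bar t}$ is a simple-root 2D constacyclic code in $\bar{\mathcal S}=\F[x,y]/\langle x^n-\lambda, y^{m'}-\delta_0\rangle$, hence of length $nm'$. So it suffices to bound $P_{\bar t}$ by a power $p^{\gamma(R)}$ and to bound $d_{\min}(\bar C_{\bar t})$ by $nm'$. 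The latter is trivial: the minimum distance of any nonzero code of length $nm'$ is at most its length $nm'$. Thus the whole burden reduces to the factor $P_{\bar t}$.

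**Bounding $P_{\bar t}$.** First I would observe that $P_{\bar t}=w\bigl((y-\delta_0)^{\bar t}\bigr)$, since $w\bigl((y^{m'}-\delta_0)^{\bar t}\bigr)=w\bigl((y-\delta_0)^{\bar t}\bigr)$ as substituting $y\mapsto y^{m'}$ merely spreads out the nonzero coefficients without merging them. Writing $(y-\delta_0)^{\bar t}=\sum_{i=0}^{\bar t}\binom{\bar t}{i}(-\delta_0)^{\bar t-i}y^i$ over $\F=\F_{p^r}$, its weight is the number of indices $i$ with $\binom{\bar t}{i}\not\equiv 0\pmod p$. By Lucas' theorem this equals $\prod_\ell(\bar t_\ell+1)$, where $\bar t=\sum_\ell \bar t_\ell p^\ell$ is the base-$p$ expansion; in particular $P_{\bar t}\le p^{(\text{number of base-}p\text{ digits of }\bar t)}\le \bar t+1\le p^s$. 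This gives the crude bound $P_{\bar t}\le p^s$, but $s$ is not controlled by $R$ alone, so this is the crux. The main obstacle is precisely to replace the naive bound $p^s$ by a bound depending only on $R$: the point is that if the rate $r$ is bounded below by $R$, then the code cannot "use up" too high a power of $y^{m'}-\delta_0$ in all its components simultaneously, and this must force $\bar t$ to stay within a range determined by $R$.

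**Resolving the obstacle via Lemma~\ref{l5}.** To push through, I would use Lemma~\ref{l5}: since $r>R$, there is an index $l$ with $r_l>R$, where $r_l=(m-t_l)/m$ is the rate of $\langle g_l\rangle$ in $K_l[y]/\langle y^m-\delta\rangle$ and $t_l=\deg_y g_l$. Thus $t_l<(1-R)m=(1-R)m'p^s$. I would then trace how $\bar t$ is constrained by this: $\bar t\in T$ arises as $\overline{t}$ for the exponent $t$ extracted from a minimal-weight codeword, and $t$ is at most the maximal power of $y^{m'}-\delta_0$ dividing that codeword. Because the codeword lies in $C$, its $j$-th component is a multiple of $g_j$, and in the component where $r_l>R$ the exponent of any factor $h_{lk}$ in $g_l$ is small relative to $p^s$; the definition of $\bar g_{lt}$ then shows that for $t$ close to $p^s$ the restricted code $\bar C_t$ is almost everything, forcing the optimizing $\bar t$ downward. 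Carrying this out should yield $\bar t< (1-R)p^s\cdot(\text{const})$ or, more usefully, bound the number of base-$p$ digits of $\bar t$ by a function of $R$, whence $P_{\bar t}\le p^{\gamma(R)}$ for a suitable $\gamma(R)$ independent of $s$, $m'$, and $n$. Combining, $d_{\min}(C)=P_{\bar t}d_{\min}(\bar C_{\bar t})\le p^{\gamma(R)}\cdot nm'$, as required. I expect the digit-counting estimate for $P_{\bar t}$ under the rate constraint to be the delicate step, and I would model it closely on the analogous asymptotic-badness argument of \cite[Theorem~3]{casta}.
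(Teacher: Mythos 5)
Your toolkit is the right one (Lemma~\ref{l5}, the Lucas digit count for $P_t$, the trivial bound $d_{\min}(\bar{C}_t)\le nm'$), but the step you yourself flag as the crux is not merely delicate---it is false in the form you propose to carry it out. The $\bar{t}$ produced by Proposition~\ref{p5} is dictated by a minimum-weight codeword, and a lower bound on the rate does not push it ``downward''; the monotonicity you invoke cuts the other way. When $t$ grows, $\bar{C}_t$ gets larger, hence $d_{\min}(\bar{C}_t)$ gets \emph{smaller}, so the product $P_t\,d_{\min}(\bar{C}_t)$ can perfectly well be minimized at a large $t$ whose huge $P_t$ is offset by a tiny $d_{\min}(\bar{C}_t)$. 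Concretely: take $p=2$, $n=1$, $\lambda=\delta=1$ (so $\eta=1$; a legitimate degenerate instance of the paper's setting), let $h(y)$ be the generator polynomial of a binary BCH code of odd length $m'$ with $\deg h\le m'/2$ and minimum distance $>2^j$, and put $C=\langle h(y)^{t^{\#}}\rangle$ with $t^{\#}=2^s-2^{s-j}$ and $s>j$. Then $r\ge 1/2$, the codeword $(y^{m'}-1)^{t^{\#}}\in C$ has weight $P_{t^{\#}}=2^j$, every $\tau\in T$ with $\tau<t^{\#}$ has $P_\tau d_{\min}(\bar{C}_\tau)\ge d_{\min}(\bar{C}_\tau)>2^j$, and every $\tau>t^{\#}$ has $P_\tau\ge 2^{j+1}$; hence $d_{\min}(C)=2^j$ and the \emph{only} $\bar{t}\in T$ realizing the equality of Proposition~\ref{p5} is $\bar{t}=t^{\#}$, which has $j$ nonzero binary digits and $P_{\bar{t}}=2^j$. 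Fixing any $R<1/2$ and letting $j\to\infty$ shows that no $\gamma(R)$ can bound $P_{\bar{t}}$ for the optimizer of Proposition~\ref{p5}. (Note also that a bound of the shape $\bar{t}<(1-R)p^s\cdot\mathrm{const}$ would be useless anyway: small numbers can have many nonzero base-$p$ digits.)

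The argument the paper actually appeals to (Lemma~3 of \cite{casta}, transported to the 2D setting) sidesteps this entirely: it never examines the $\bar{t}$ of Proposition~\ref{p5}, but instead exploits the fact that Lemma~\ref{l3} is an inequality valid for \emph{every} $t$, so that $t$ may be chosen freely. By Lemma~\ref{l5} pick $l$ with $r_l>R$, i.e.\ $\deg_y g_l<(1-R)m$; since $\sum_k\deg_y h_{lk}=m'$, the smallest exponent $t_0=\min_k\alpha_{lk}$ satisfies $t_0m'\le\deg_y g_l<(1-R)m'p^s$, so $t_0<(1-R)p^s$, and $\bar{C}_{t_0}\neq 0$ because the factor attaining this minimum does not divide $\bar{g}_{lt_0}$. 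Now set $\gamma(R)=\lceil\log_p(1/R)\rceil$ and let $t^*$ be the least multiple of $p^{s-\gamma(R)}$ with $t^*\ge t_0$; from $t_0<(1-R)p^s\le(1-p^{-\gamma(R)})p^s$ one gets $t^*<p^s$, and $t^*$ has at most $\gamma(R)$ nonzero base-$p$ digits, so $P_{t^*}\le p^{\gamma(R)}$ by exactly your Lucas computation. Since $t^*\ge t_0$, we have $\bar{C}_{t^*}\supseteq\bar{C}_{t_0}\neq 0$, and Lemma~\ref{l3} yields $d_{\min}(C)\le P_{t^*}d_{\min}(\bar{C}_{t^*})\le p^{\gamma(R)}nm'$. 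The idea missing from your proposal is precisely this choose-and-round-up of the parameter $t$, run through the inequality of Lemma~\ref{l3} rather than the equality of Proposition~\ref{p5}.
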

\begin{prop}\label{p7}
Any sequence of $(\lambda , \delta)$-consta\-cyc\-lic codes $C_i$ over $\F$ of length $n_im_i$, where $m_i=m'_ip^{s_i}$
and $\gcd(n_i,p)=\gcd(m'_i,p)=1$, with rates $r_i>R>0$ such that $\liminf_{i\rightarrow \infty} s_i=\infty$, satisfies
  \begin{center}
    $\lim\limits_{i\rightarrow \infty}\frac{d_{\min}(C_i)}{m_in_i} =0$.
  \end{center}
\end{prop}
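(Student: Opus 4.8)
The plan is to derive this directly from Lemma \ref{l6}, which already does the heavy lifting; the whole point is to feed the hypotheses of the Proposition into that lemma and read off the limit. The crucial observation is that the constant $\gamma(R)$ supplied by Lemma \ref{l6} depends only on the fixed lower bound $R$, and in particular is independent of the index $i$. So I would first fix $R$, extract the corresponding $\gamma(R)$, and then apply Lemma \ref{l6} to each code $C_i$ of the sequence: since $r_i > R$ by hypothesis, the lemma yields the uniform bound $d_{\min}(C_i) \leq p^{\gamma(R)} n_i m'_i$ valid for every $i$.

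The next step is a one-line substitution and cancellation. Dividing the bound above by the true length $n_i m_i$ and using $m_i = m'_i p^{s_i}$ gives
$$\frac{d_{\min}(C_i)}{n_i m_i} \leq \frac{p^{\gamma(R)} n_i m'_i}{n_i m'_i p^{s_i}} = p^{\gamma(R) - s_i},$$
where both $n_i$ and $m'_i$ cancel and only the $p$-power part of the length survives. At this point the hypothesis $\liminf_{i\to\infty} s_i = \infty$ (which, for an integer sequence, is equivalent to $s_i \to \infty$) guarantees that the exponent $\gamma(R) - s_i \to -\infty$, and hence $p^{\gamma(R) - s_i} \to 0$. Since $d_{\min}(C_i) \geq 0$ for all $i$, a squeeze argument forces $d_{\min}(C_i)/(n_i m_i) \to 0$, which is precisely the claim.

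I expect no real obstacle at this level: once Lemma \ref{l6} is available, the Proposition is a clean corollary, and the only point demanding care is the verification that $\gamma(R)$ is a genuinely uniform constant, not tacitly depending on $i$, so that the bound $p^{\gamma(R)-s_i}$ does tend to zero along the whole sequence. All of the actual difficulty has been pushed into Lemma \ref{l6} and, behind it, into Proposition \ref{p5} and the weight estimates of Lemmas \ref{l2}--\ref{l4} that compare $d_{\min}(C)$ with the minimum distances of the simple-root codes $\bar{C}_{\bar t}$. The intuition made precise by the computation is the expected one: inflating the repeated-root parameter to $p^{s_i}$ multiplies the length by $p^{s_i}$ while leaving $d_{\min}$ bounded by a fixed multiple of $n_i m'_i$, so the relative minimum distance is unavoidably driven to zero.
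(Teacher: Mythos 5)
Your proposal is correct and is exactly the argument the paper intends: the paper omits the proof of Proposition~\ref{p7}, stating it follows \emph{mutatis mutandis} from Theorem~3 of \cite{casta}, whose proof is precisely your computation --- apply Lemma~\ref{l6} with the constant $\gamma(R)$ depending only on $R$ (not on $i$) to get $d_{\min}(C_i)\leq p^{\gamma(R)}n_im'_i$, divide by the length $n_im'_ip^{s_i}$, and let $s_i\to\infty$. Your flagged point of care (uniformity of $\gamma(R)$) is indeed the only thing that needs checking, and it is guaranteed by the quantifier order in Lemma~\ref{l6}, so there is no gap.
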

\begin{thm}\label{p8}
If there exist a sequence of $(\lambda , \delta)$-consta\-cyc\-lic codes $C_i$ over $\F$ of length $n_i m_i$, where
$m_i=m'_i p^{s_i}$ and $\gcd(n_i,p)=\gcd(m'_i,p)=1$, with rates $r_i>R>0$ such that
   \begin{center}
    $\liminf\limits_{i\rightarrow \infty} m_i n_i=\infty$,
     \end{center}
     and
     \begin{center}
     $\liminf\limits_{i\rightarrow \infty}\frac{d_{\min}(C_i)}{m_i n_i} =\Delta >0$,
       \end{center}
      then there exists a sequence of $(\lambda , \delta)$-consta\-cyc\-lic codes $\hat{C}_i$ over $\F$ of length $\hat{n}_i\hat{m}_i$ 
      and $\gcd(\hat{n}_i,p)=\gcd(\hat{m}_i,p)=1$, and of the rate $\hat{r}_i>R>0$ such that
  \begin{center}
    $\lim\limits_{i\rightarrow \infty} \hat{n}_i\hat{m}_i=\infty$
  \end{center}
  and
  \begin{center}
    $\liminf\limits_{i\rightarrow \infty}\frac{d_{\min}(\hat{C}_i)}{\hat{m}_i\hat{n}_i}\geq \Delta >0$.
  \end{center}
\end{thm}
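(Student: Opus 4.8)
The plan is to reduce the repeated-root family to a subsequence on which the exponent $s_i$ is constant, and then to apply Theorem \ref{p6} code by code. First I would observe that the hypothesis $\liminf_{i\to\infty} d_{\min}(C_i)/(m_i n_i)=\Delta>0$ already rules out $\liminf_{i\to\infty} s_i=\infty$. Indeed, were $\liminf_{i\to\infty} s_i=\infty$, then since the rates satisfy $r_i>R>0$, Proposition \ref{p7} would force $\lim_{i\to\infty} d_{\min}(C_i)/(m_i n_i)=0$, contradicting $\Delta>0$. Hence $\liminf_{i\to\infty} s_i<\infty$, so there is a subsequence along which $s_i$ is bounded; since the $s_i$ are non-negative integers, I may pass to a further subsequence---which I relabel as $(C_i)$---on which $s_i=s$ is a fixed constant. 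This relabeling is harmless, because the theorem only asserts the existence of \emph{a} sequence $\hat{C}_i$ and asymptotic goodness is a tail property inherited by subsequences.

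On this subsequence I would apply Theorem \ref{p6} to each $C_i$, obtaining a simple-root $(\lambda,\delta)$-consta\-cyc\-lic code $\hat{C}_i$ over $\F$ of length $n_i m'_i$ whose rate $\hat{r}_i$ and relative minimum distance are each at least as large as the corresponding quantities for $C_i$. In particular $\hat{r}_i\geq r_i>R$ and
\[
\frac{d_{\min}(\hat{C}_i)}{n_i m'_i}\geq \frac{d_{\min}(C_i)}{n_i m_i}.
\]
Setting $\hat{n}_i=n_i$ and $\hat{m}_i=m'_i$, the conditions $\gcd(\hat{n}_i,p)=\gcd(\hat{m}_i,p)=1$ hold by the hypotheses on $n_i$ and $m'_i$, so the $\hat{C}_i$ are genuinely simple-root codes, as required.

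It then remains to verify the two asymptotic conclusions. For the length, $\hat{n}_i\hat{m}_i=n_i m'_i=(n_i m_i)/p^{s}$; since $s$ is now constant and $\liminf_{i\to\infty} n_i m_i=\infty$ yields $\lim_{i\to\infty} n_i m_i=\infty$, I obtain $\lim_{i\to\infty}\hat{n}_i\hat{m}_i=\infty$. For the relative distance, passing to a subsequence never decreases the $\liminf$, so combining the displayed inequality with the hypothesis gives
\[
\liminf_{i\to\infty}\frac{d_{\min}(\hat{C}_i)}{\hat{n}_i\hat{m}_i}\geq \liminf_{i\to\infty}\frac{d_{\min}(C_i)}{n_i m_i}\geq \Delta>0,
\]
which together with $\hat{r}_i>R>0$ completes the argument.

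I expect the only genuinely delicate point to be the opening reduction: the contrapositive use of Proposition \ref{p7} to force $\liminf_{i\to\infty} s_i<\infty$, and the subsequent extraction of a subsequence on which $s_i$ is constant, which is exactly what allows the factor $p^{s}$ in $n_i m_i=n_i m'_i p^{s}$ to be treated as a bounded constant. Everything afterward is bookkeeping, since Theorem \ref{p6} already packages the code-by-code comparison of rate and relative minimum distance; the argument here is the 2D constacyclic analogue of \cite[Theorem 4]{casta} and proceeds mutatis mutandis.
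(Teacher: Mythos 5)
Your proof is correct and takes essentially the same route the paper intends: the paper proves Theorem \ref{p8} by deferring to the arguments of Theorems 3 and 4 of \cite{casta}, which is precisely your contrapositive use of Proposition \ref{p7} to force $\liminf_{i\to\infty} s_i<\infty$, the extraction of a subsequence with constant $s_i$, and the code-by-code application of Theorem \ref{p6}. Your write-up just makes explicit the bookkeeping (subsequences preserve the $\liminf$ bounds and the length condition) that the paper leaves to the reader.
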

This theorem shows that if there exists an asymptotically good family of one-sided repeated-root 2D consta\-cyc\-lic
codes, then there exists an asymptotically good family of simple-root 2D consta\-cyc\-lic codes with similar or better
parameters.

\section{Conclusion}
In this paper, we studied the algebraic structure of one-sided repeated 2D cyclic and constacyclic codes, found their
generator and parity check matrices and also their duals. Moreover, we showed that if there exists an asymptotically good family of one-sided repeated-root
2D cyclic or consta\-cyc\-lic codes, then there exists an asymptotically good family of simple-root 2D consta\-cyc\-lic codes with similar or better
parameters.  We also mentioned some needed corrections to some known previous results on such codes.

Although our work suggests that one should not
look for asymptotically good families of codes in this class of codes, but we may have some optimal codes among one-sided repeated-root cyclic or constacyclic codes. We leave
their study to a future work. Another  problem which remains to be studied in future works is to find self-dual codes when the assumption made here (that is, $f_i^\sharp=f_i$)
does not hold.
\paragraph{Acknowledgement} The authors would like to thank Prof. H. Sharif of Shiraz University for his nice comments
and discussions.
                                

\begin{thebibliography}{99}


\bibitem{IEEE} C. Aguilar-Melchor, O. Blazy, J.-C. Deneuville, P. Gaborit and G. Z\'emor, \textit{Efficient encryption
    from random quasi-cyclic codes}, IEEE Trans. Inform. Theory, \textbf{64}(5) (2018), 3927--3943.

\bibitem{aty} M. F. Atiyah and I. G. Macdonald, \textit{Introduction to Commutative Algebra}, Addison-Wesley, London
    (1969).

\bibitem{Aydin} N. Aydin, N. Connolly and J. Murphee, \textit{New binary linear codes from quasi-cyclic codes and an
    augmentation   algorithm}, Appl. Algebra Eng. Commun. Comput., \textbf{28} (2017), 339--350.

\bibitem{batoul} A. Batoul and K. Guenda,  \textit{Some constacyclic codes over finite chain rings}, Advances in Math.
    of comm., \textbf{10} (2016), 683--694.


\bibitem{beygi} M. Beygi, S. Namazi and H. Sharif, \textit{Algebraic structures of constacyclic codes over finite chain
    rings
    and power series rings}, Iranian J. Sci. Tech. Trans. Sci., \textbf{43}(5) (2019), 2461--2476.

\bibitem{casta} G. Castagnoli, J. L. Massy, P. A. Schoeller and N. Seemann, \textit{On repeated-root cyclic codes},
    IEEE Trans. Inform. Theory, \textbf{37}(2) (1991), 337--342.

\bibitem{chen} E. Z. Chen, \textit{New binary h-generator quasi-cyclic codes by augmentation and new minimum distance
    bounds}, Des. Codes Cryptogr., \textbf{80} (2016), 1--10


\bibitem{Dinh} H. Q. Dinh, X. Wang and P. Maneejuk, \textit{On the Hamming distance of repeated-root cyclic codes of length $6ps$}, IEEE Access,
\textbf{8} (2020), 39946--39958.

\bibitem{Gun quasi} C. G\"uneri and F. \"Ozbudak, \textit{A relation between quasi-cyclic codes and 2-D cyclic codes},
    Fintie Fields Appl.,
    \textbf{18} (2012), 123--132.

\bibitem{Gun} C. G\"uneri and F. \"Ozbudak, \textit{Multidimensional cyclic codes and Artin-Schreier type hypersurfaces
    over    finite  fields}, Finite Fields Appl., \textbf{14} (2008), 44--58.

\bibitem{Gun3} C. G\"uneri,  B. \"Ozkaya and S. Say\i c\i \textit{On Linear Complementary Pair of $n$D Cyclic Codes}, IEEE Commun. Letters,
\textbf{22}(12) (2018), 2404--2406.

\bibitem{ik} T. Ikai, H. Kosako and Y. Kojima, \textit{Two-dimensional cyclic codes}, Trans. Inst. Electron. Commun.
    Eng. Jpn.,  \textbf{57} (1974), 279-286.


\bibitem{im} H. Imai, \textit{A theory of two-dimensional cyclic codes}, Inf. control., \textbf{34}(1) (1977), 1--21.

\bibitem{Kumar} R. Kumar and M. Bhaintwal, \textit{Repeated root cyclic codes over $\z_{p^2}+u\z_{p^2}$ and their Lee distances}, Cryptogr. Commun.,
\textbf{14} (2022), 551--557.

\bibitem{lidl} R. Lidl and H. Niederreiter, \textit{Finite Fields}, Cambridge University Press, Cambridge (1997).

\bibitem{ma} J. Ma and J. Luo, \textit{MDS symbol-pair codes from repeated-root cyclic codes}, Des. Codes Cryptogr., \textbf{90} (2022), 121--137.

\bibitem{massy} J. L. Massy, D. J. Costello and J. Justesen, \textit{Polynomial weights and code constructions}, IEEE
    Trans. Inform. Theory, \textbf{19}(1) (1973), 101--110.

\bibitem{nature}  M. Milicevic, C. Feng,  L.M. Zhang and  P. G. Gulak, \textit{Quasi-cyclic multi-edge LDPC codes for long-distance quantum cryptography},
 npj Quantum Inf, \textbf{4} (2018), article no. 21.

\bibitem{me} A. Nikseresht, \textit{Dual of codes over finite quotients of polynomial rings}, Finite Fields Appl., \textbf{45} (2017), 323--340.

\bibitem{rajabi} Z. Rajabi and K. Khashyarmanesh,  \textit{Repeated-root two-dimensional consta\-cyc\-lic codes of
    lenght $2p^s\cdot2^k$}, Finite Fields Appl., \textbf{50} (2018), 122--137.

\bibitem{Rajput}  C. Rajput and M. Bhaintwal,\textit{On the locality of quasi-cyclic codes over finite fields}, Des. Codes Cryptogr., \textit{90} (2022), 759--777.

\bibitem{roy} S. Roy and S. S. Garani, \textit{Two-dimensional algebraic codes for multiple burst error correction},
    IEEE Commun. Letters, \textbf{23}(10) (2019), 1684--1687.

\bibitem{sepasdar} Z. Sepasdar and K. Khashyarmanesh, \textit{Characterizations of some two-dimensional cyclic codes
    correspond  to the ideals of $\mathbb{F}[x,y]/\langle x^s-1, y^{2^k}-1 \rangle$}, Finite Fields Appl., \textbf{41}
    (2016),   97--112.

\bibitem{consta} A. Sharma and S. Rani, \textit{On constacyclic codes over finite fields}, Cryptogr. Commun., \textbf{8} (2016), 617--636.

\bibitem{Sole} M. Shi, S. Li and P. Sol\'e, \textit{$\z_2\z_4$-Additive quasi-cyclic codes}, IEEE Trans. Inform. Theory, \textbf{67}(11) (2021), 7232--7239

\bibitem{van lint} J. H. van Lint, \textit{Repeated-root cyclic codes}, IEEE Trans. Inform. Theory, \textbf{37}(2) (1991), 343--345.
\end{thebibliography}
\end{document}